\documentclass[a4paper,USenglish,thm-restate,nolineno]{socg-lipics-v2021}
\usepackage[utf8]{inputenc}
\bibliographystyle{plainurl}
\usepackage{mathtools}
\usepackage{thmtools} 

\newtheorem{lem}[theorem]{Lemma}
\usepackage{xspace}
\usepackage{enumitem} 
\DeclarePairedDelimiter\floor{\lfloor}{\rfloor}

\usepackage{url}
\usepackage[normalem]{ulem}

\usepackage{makecell}

\usepackage{graphicx}
\graphicspath{{./figs/}}

\usepackage{todonotes}

\title{Twisted Ways to Find Plane Structures in Simple Drawings of Complete Graphs}

\author{Oswin Aichholzer}{Institute of Software Technology, Graz University of Technology, Austria
}{oaich@ist.tugraz.at}{https://orcid.org/0000-0002-2364-0583}{Partially supported by the Austrian Science Fund (FWF): W1230 and by H2020-MSCA-RISE project 734922 - CONNECT.}
\author{Alfredo Garc\'ia}{Departamento de M\'etodos Estad\'isticos and IUMA, Universidad de Zaragoza, Spain}{olaverri@unizar.es}{http://orcid.org/0000-0002-6519-1472}{Supported by H2020-MSCA-RISE project 734922 - CONNECT and Gobierno de Aragón project E41-17R.}
\author{Javier Tejel}{Departamento de M\'etodos Estad\'isticos and IUMA, Universidad de Zaragoza, Spain}{jtejel@unizar.es}{https://orcid.org/0000-0002-9543-7170}{Supported by H2020-MSCA-RISE project 734922 - CONNECT,
	Gobierno de Aragón project E41-17R and project PID2019-104129GB-I00 /
	AEI / 10.13039/501100011033 of the Spanish Ministry of Science and
	Innovation.}
\author{Birgit Vogtenhuber}{Institute of Software Technology, Graz University of Technology, Austria
}{bvogt@ist.tugraz.at}{https://orcid.org/0000-0002-7166-4467}{Partially supported by Austrian Science Fund within the collaborative DACH project \emph{Arrangements and Drawings} as FWF project \mbox{I 3340-N35} and by H2020-MSCA-RISE project 734922 - CONNECT.}
\author{Alexandra Weinberger}{Institute of Software Technology, Graz University of Technology, Austria}{weinberger@ist.tugraz.at
}{https://orcid.org/0000-0001-8553-6661}{Supported by the Austrian Science Fund (FWF): W1230 and by H2020-MSCA-RISE project 734922 - CONNECT.}

\authorrunning{O.~Aichholzer,  A.~Garc\'ia, J.~Tejel, B.~Vogtenhuber, and A.~Weinberger}

\Copyright{{Oswin Aichholzer and Alfredo Garc\'ia and Javier Tejel and Birgit~Vogtenhuber and Alexandra~Weinberger}} 
\begin{CCSXML}
	<ccs2012>
	<concept>
	<concept_id>10002950.10003624.10003625</concept_id>
	<concept_desc>Mathematics of computing~Combinatorics</concept_desc>
	<concept_significance>500</concept_significance>
	</concept>
	<concept>
	<concept_id>10002950.10003624.10003633</concept_id>
	<concept_desc>Mathematics of computing~Graph theory</concept_desc>
	<concept_significance>500</concept_significance>
	</concept>
	</ccs2012>
\end{CCSXML}

\ccsdesc[500]{Mathematics of computing~Combinatorics}
\ccsdesc[500]{Mathematics of computing~Graph theory}

\keywords{Simple drawings, simple topological graphs, disjoint edges, plane matching, plane path}

\pdfoutput=1 
\hideLIPIcs  

\newcommand{\cmonotone}{c-monotone}
\newcommand{\cmonotonicity}{c-monotonicity}

\newcommand{\gtwisted}{generalized twisted}

\newcommand{\Gtwisted}{Generalized twisted}

\newcommand\blfootnote[1]{%
	\begingroup
	\renewcommand\thefootnote{}\footnote{#1}%
	\addtocounter{footnote}{-1}%
	\endgroup
}

\begin{document}
	
	\maketitle
	\begin{abstract}
		Simple drawings are drawings of graphs in which the edges are Jordan arcs and each pair of edges share at most one point (a proper crossing or a common endpoint). 
		We introduce a special kind of simple drawings that we call generalized twisted drawings. 
		A simple drawing is generalized twisted if there is a point~$O$ such that every ray emanating from~$O$ crosses every edge of the drawing at most once and there is a ray emanating from~$O$ which crosses every edge exactly once.
		
		Via this new class of simple drawings, we show that every simple drawing of the complete graph with~$n$ vertices contains~$\Omega(n^{\frac{1}{2}})$ pairwise disjoint edges and a plane path of length~$\Omega(\frac{\log n }{\log \log n})$.
		Both results improve over previously known best lower bounds.
		On the way we show several structural results about and properties of generalized twisted drawings.
		We further present different characterizations of generalized twisted drawings, which might be of independent interest.
		\blfootnote{\hspace{-0.3cm}\begin{minipage}[l]{0.99\textwidth} \it 
			This work (without appendix) is available at the 38th International Symposium on Computational Geometry (SoCG 2022).
			Some results of this work have also been presented at the Computational Geometry:~Young Researchers Forum in 2021~\cite{YRF} and at the Encuentros de Geometr\'{\i}a Computacional 2021~\cite{EGC}.\end{minipage}}
	\end{abstract}

	\section{Introduction}\label{sec:intro}
	
	Simple drawings are drawings of graphs in the plane such that vertices are distinct points in the plane, edges are Jordan arcs connecting their endpoints, and edges intersect at most once either in a proper crossing or in a shared endpoint.
	The edges and vertices of a drawing partition the plane (or, more exactly, the plane minus the drawing) into regions,
	which are called the \emph{cells} of the drawing. 
	If a simple drawing is plane (that is, crossing-free), then its cells are classically called \emph{faces}.

	In the past decades, there has been significant interest in simple drawings. Questions about plane subdrawings of simple drawings of the complete graph on $n$ vertices, $K_n$, have attracted particularly close attention.
	
	Rafla~\cite{rafla} conjectured that every simple drawing of $K_n$ contains a plane Hamiltonian cycle. The conjecture has been shown to hold for~$n \leq 9$~\cite{all_small_drawings}, as well as for several special classes of simple drawings, like straight-line, monotone, and cylindrical drawings, but remains open in general.
	If Rafla's conjecture is true, then this would immediately imply that every simple drawing of the complete graph contains a plane perfect matching. However, to-date even the existence of such a matchging is still unknown.
	
	Ruiz-Vargas~\cite{RUIZVARGAS2017} showed in 2017 that every simple drawing of 
	$K_n$ contains 
	$\Omega(n^{\frac{1}{2}-\varepsilon})$ pairwise disjoint edges for any $\varepsilon>0$, which improved over a series of previous results:
	$\Omega((\log n )^{\frac{1}{6}})$ in 2003~\cite{bound_2003}, 
	$\Omega(\frac{\log n }{\log \log n})$ in 2005~\cite{bound_2005}, 
	$\Omega((\log n)^{1+\varepsilon})$ in 2009~\cite{bound_2009}, 
	and $\Omega(n^{\frac{1}{3}})$ in 2013 and 2014~\cite{bound_2014, triangles_together, bound_2013}.
	
	Pach, Solymosi, and T{\'o}th~\cite{bound_2003} showed that every simple drawing of $K_n$ contains a subdrawing of $K_{c\log^{\frac{1}{8}}n}$, for some constant $c$, that is either \emph{convex} or \emph{twisted}\footnote{In their definition for simple drawings, \emph{convex} means that there is a labeling of the vertices to $v_1,v_2,...,v_n$ such that $(v_i,v_j)$ ($i<j$) crosses $(v_k,v_l)$ ($k<l$) if and only if $i<k<j<l$ or $k<i<l<j$, and \emph{twisted} means that there is a labeling of the vertices to $v_1,v_2,...,v_n$ such that $(v_i,v_j)$ ($i<j$) crosses $(v_k,v_l)$ ($k<l$) if and only if $i<k<l<j$ or $k<i<j<l$.}. They further showed that every simple drawing of $K_n$ contains a plane subdrawing isomorphic to any fixed tree with up to $c \log^{\frac{1}{6}}n$ vertices, for some constant $c$. This implies that every simple drawing of $K_n$ contains a plane path of length $\Omega((\log n )^{\frac{1}{6}})$, which has been the best lower bound known prior to this paper.
	
	Concerning general plane substructures, it follows from a result of Ruiz-Vargas~\cite{RUIZVARGAS2017} that every simple drawing of $K_n$ contains a plane subdrawing with at least $2n-3$ edges. Further, Garc\'ia, Pilz, and Tejel~\cite{biconnected} showed that every maximal plane subdrawing of a simple drawing of $K_n$ is biconnected. 
	Note that, in contrast to straight-line drawings, simple drawings of~$K_n$ in general do not contain triangulations, that is, plane subdrawings where all faces (except at most one) are 3-cycles.
	
	In this paper, we introduce a new family of simple drawings, which we call \emph{generalized twisted} drawings. The name stems from the fact that 
	one can show that any twisted drawing is weakly isomorphic 
	to a generalized twisted drawing
	 (but not every generalized twisted drawing is weakly  isomorphic 
	to a twisted drawing). 
	It follows, that for any $n$ there exists a {\gtwisted} drawing.
	Two drawings~$D$ and~$D'$ are \emph{weakly isomorphic} if there is a bijection between the vertices and edges of $D$ and $D'$ such that a pair of edges in $D$ crosses exactly when the corresponding pair of edges in $D'$ crosses.

	\begin{definition}
		A simple drawing~$D$ is \textbf{{\cmonotone}}
		(short for circularly monotone) if there is a point~$O$ such that any ray emanating from~$O$ intersects any edge of~$D$ at most once.
		
		A simple drawing~$D$ of~$K_n$ is \textbf{{\gtwisted}} if there is a point~$O$ such that~$D$ is {\cmonotone} with respect to $O$ and there exists a ray~$r$ emanating from~$O$ that intersects every edge of~$D$.
	\end{definition}
	
	We label the vertices of {\cmonotone} drawings $v_1, \ldots, v_n$ in counterclockwise order around~$O$.
	In {\gtwisted} drawings, they are labeled such that the ray $r$ emerges from~$O$ between the ray to $v_1$ and the one to $v_n$. 
	Figure~\ref{fig:example_gtwisted} shows an example of a {\gtwisted} drawing of~$K_5$.
	
	\begin{figure}
		\centering
		\includegraphics[scale=0.6,page=1]{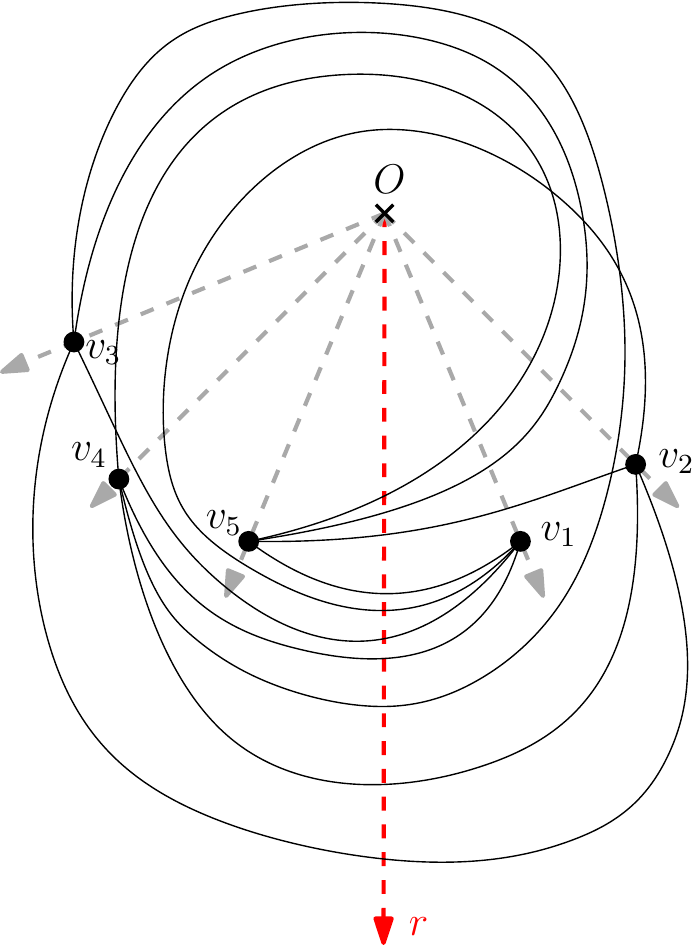}
		\caption{A {\gtwisted} drawing of $K_5$. All edges cross the (red) ray~$r$.}
		\label{fig:example_gtwisted}
	\end{figure}
	
	Generalized twisted drawings turn out to have quite surprising structural properties.
	We show some crossing properties of 
	generalized twisted drawings in Section~\ref{sec:gtwisted_crossings} and with that also prove that they always contain plane Hamiltonian paths (Theorem~\ref{thm:twisted}).
	This result is an essential ingredient for showing that any simple drawing of~$K_n$ contains $\Omega(\sqrt{n})$ pairwise disjoint edges (Theorem~\ref{thm:gen} in Section~\ref{sec:gen}), as well as a plane path of length $\Omega(\frac{\log n }{\log \log n})$ (Theorem~\ref{the:bound} in Section~\ref{sec:planepaths}).
	In Section~\ref{sec:char}, we present different characterizations of {\gtwisted} drawings that are of independent interest. 
	We conclude with an outlook on further work and open problems in Section~\ref{sec:conclusion}. 
	
	\section{Twisted Preliminaries}\label{sec:gtwisted_crossings}
	
	In this section, we show some properties of generalized twisted drawings, which will be used in the following sections. 
	
	\begin{restatable}{lem}{twist}\label{lem:twist}
		Let~$D$ be a {\gtwisted} drawing of~$K_4$, with vertices $\{v_1, v_2, v_3, v_4\}$ labeled counterclockwise around~$O$.
		Then the edges~$v_1v_3$ and~$v_2v_4$ do not cross.
	\end{restatable}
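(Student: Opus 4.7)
By $c$-monotonicity and the generalized twisted hypothesis that every edge crosses the common ray $r$, I would lift each edge $v_iv_j$ (with $i<j$) to a continuous monotone graph $\rho_{ij}\colon[\theta_j-2\pi,\theta_i]\to\mathbb{R}_{>0}$ of the polar distance from $O$ in the universal cover of the punctured plane, where $\theta_k$ denotes the angular position of $v_k$ around $O$. In this setup, two disjoint-vertex edges cross in the original drawing iff their graphs agree at some point in the interior of their common $\theta$-interval. The three graphs $\rho_{13}$, $\rho_{14}$, and $\rho_{24}$ are all defined on the common strip $[\theta_4-2\pi,\theta_1]$, and the goal reduces to showing that $\rho_{13}$ and $\rho_{24}$ never agree in its interior.

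The strategy is to use $\rho_{14}$ as a separator. Simplicity forces the graphs $\rho_{13}$ and $\rho_{14}$ (sharing only $v_1$) to coincide only at $\theta=\theta_1$, so $\rho_{13}$ is strictly above or strictly below $\rho_{14}$ on the open interior of the strip; analogously, $\rho_{24}$ lies entirely on one side of $\rho_{14}$, coinciding only at $v_4$ (at $\theta=\theta_4-2\pi$). This produces four sign combinations. In the two cases where $\rho_{13}$ and $\rho_{24}$ lie on opposite sides of $\rho_{14}$, the separating graph $\rho_{14}$ precludes any intersection between them, yielding the lemma. In the remaining ``same side'' cases, comparing the boundary values (at $\theta=\theta_1$ one has $\rho_{13}=\rho_{14}=|Ov_1|$ while $\rho_{24}(\theta_1)\neq|Ov_1|$, and symmetrically at $\theta=\theta_4-2\pi$) together with the intermediate value theorem would actually force $\rho_{13}$ and $\rho_{24}$ to cross, so the lemma reduces to ruling these two cases out.

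To exclude the ``same side'' configurations, I would invoke a Jordan curve argument on the triangle $\Delta_{124}$ formed by the three pairwise non-crossing edges $v_1v_2$, $v_2v_4$, $v_1v_4$ (non-crossing because each pair shares a vertex and the drawing is simple). A winding number computation summing the signed cw angular sweeps $\pm(2\pi-(\theta_j-\theta_i))$ along $\Delta_{124}$ yields total winding $-2\pi$ around $O$, placing $O$ in the interior. Using the angular positions of the vertices and $c$-monotonicity, I would then locate $v_3$ with respect to $\Delta_{124}$ (exploiting that among the three boundary edges only $v_1v_2$ has $\theta_3$ in its angular domain, so the ray from $O$ through $v_3$ meets $\Delta_{124}$ exactly once) and translate this inside/outside information into the required sign of $\rho_{13}-\rho_{14}$; a symmetric argument with the Jordan curve $\Delta_{134}$ then fixes the sign of $\rho_{24}-\rho_{14}$ to be opposite. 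The main obstacle will be this final step: reliably aligning the Jordan-curve orientation (induced by winding around $O$) with the local rotational structure of the $c$-monotone graphs at the shared vertices $v_1$ and $v_4$, and verifying that the two symmetric conclusions genuinely produce opposite signs.
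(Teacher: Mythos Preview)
Your polar-lift framework is correct and the separator reduction is sound: on the common strip $I=(\theta_4-2\pi,\theta_1)$ the graphs $\rho_{13}$ and $\rho_{24}$ each lie strictly on one side of $\rho_{14}$, and the IVT argument shows that ``opposite sides'' is exactly the non-crossing case. So the whole lemma does reduce to ruling out the two same-side configurations, as you say.

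The gap is in the final step, and it is not just a matter of bookkeeping. Your Jordan-curve computation on $\Delta_{124}$ is fine (the winding is indeed $-2\pi$, and only $v_1v_2$ covers the angle $\theta_3$), but what it yields is not what you need. Since the ray at $\theta_3$ meets $\Delta_{124}$ only on $v_1v_2$, the inside/outside dichotomy for $v_3$ is equivalent to the inequality $|Ov_3|\lessgtr\rho_{12}(\theta_3-2\pi)$, i.e.\ to the sign of $\rho_{13}-\rho_{12}$ (constant on $I$ because both share $v_1$). It says nothing about $\rho_{13}-\rho_{14}$, which is the quantity you want; $\rho_{12}$ and $\rho_{14}$ can sit in either order. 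The symmetric computation with $\Delta_{134}$ has the same defect: the ray at $\theta_2$ hits only $v_3v_4$, so you learn the sign of $\rho_{24}-\rho_{34}$, again not the comparison with $\rho_{14}$. And even granting both comparisons, you have given no mechanism forcing them to be \emph{opposite}; indeed, the inside/outside status of $v_3$ in $\Delta_{124}$ is not determined by the generalized-twisted hypothesis alone, so any valid argument must work uniformly in both cases---which yours does not yet do. In short, the triangles you chose compare the wrong pairs of graphs, and the ``opposite signs'' conclusion is asserted rather than derived.

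The paper takes a much more direct route: it assumes $v_1v_3$ and $v_2v_4$ cross, observes this is the unique crossing of the $K_4$, and then does a two-case geometric analysis (depending on where $v_1v_3$ meets the ray through $v_4$) showing that the edge $v_3v_4$ cannot be drawn c-monotonically through $r$ without creating a forbidden second intersection. If you want to salvage your lift approach, the missing ingredient is precisely bringing $v_3v_4$ (and/or $v_1v_2$) into the picture---for instance, using that $\rho_{34}$ shares $v_4$ with $\rho_{14},\rho_{24}$ and never meets $\rho_{13}$ on $I$, together with the fact that at most one pair of opposite edges in $K_4$ can cross, to force the ordering of the star at $v_4$ (and symmetrically at $v_1$).
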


	The full proof of Lemma~\ref{lem:twist} can be found in Appendix~\ref{appendix:lem_twist}.
	\begin{proof}[Proof Sketch]
		Assume, for a contradiction, that the edge~$v_1v_3$ crosses the edge~$v_2v_4$.
		There are (up to strong isomorphism) two possibilities to draw the crossing edges~$v_1v_3$ and~$v_2v_4$, depending on whether~$v_1v_3$ crosses the (straight-line) segment from $O$ to~$v_4$ or not; cf.\ Figure~\ref{fig:prop1}.
		In both cases, there is only one way to draw $v_1v_2$ such that the drawing stays \gtwisted, yielding two regions bounded by all drawn edges ($v_1v_3$, $v_2v_4$, $v_1v_2$).
		The vertices $v_3$ and $v_4$ lie in the same region. It is well-known that every simple drawing of~$K_4$ has at most one crossing. Thus, the edge~$v_3v_4$ cannot leave this region. However, it is impossible to draw~$v_3v_4$ without leaving the region such that it is {\cmonotone} and crosses the ray~$r$
		(see the dotted arrows in Figure~\ref{fig:prop1} for necessary emanating directions of $v_3v_4$).
		\begin{figure}
			\centering
			\begin{subfigure}[b]{0.4\textwidth}
				\centering
				\includegraphics[scale=0.6,page=1]{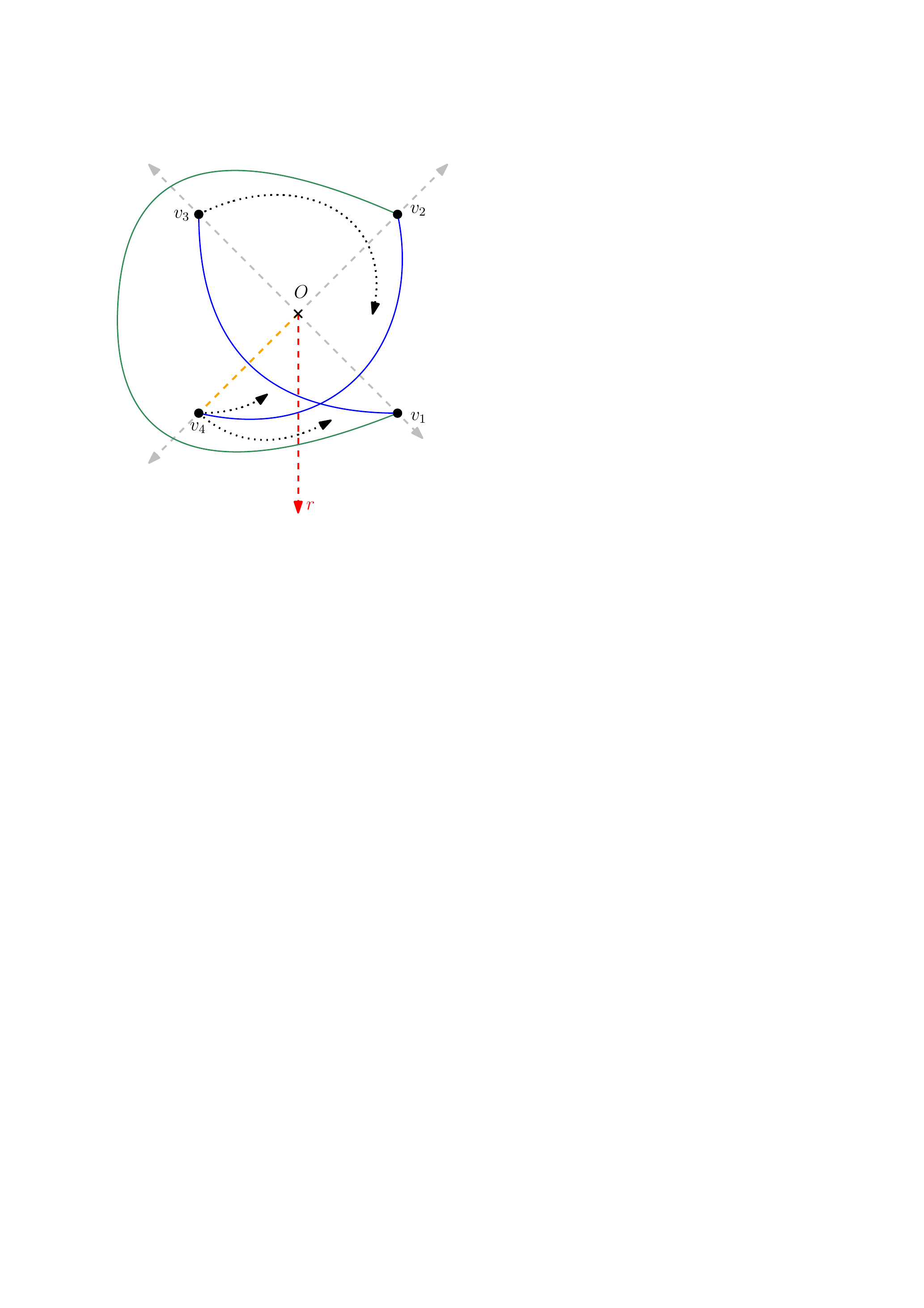}
			\end{subfigure}
			\hfill
			\begin{subfigure}[b]{0.4\textwidth}
				\centering
				\includegraphics[scale=0.6,page=2]{propertie1}
			\end{subfigure}
			\caption{The two possibilities to draw $v_1v_3$ and~$v_2v_4$ crossing and \gtwisted.
			}
			\label{fig:prop1}
		\end{figure}	
	\end{proof}
	
	Using the crossing property of Lemma~\ref{lem:twist}, it follows directly that generalized twisted drawings always contain plane Hamiltonian paths.
	
	\begin{theorem}\label{thm:twisted}
		Every {\gtwisted} drawing of~$K_n$ contains a plane Hamiltonian path. 
	\end{theorem}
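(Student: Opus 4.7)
The plan is to exhibit an explicit Hamiltonian path $P$ on the vertices $v_1, \ldots, v_n$ (labeled counterclockwise around~$O$) and then derive planarity of $P$ by applying Lemma~\ref{lem:twist} to every pair of non-adjacent edges.

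The candidate I would use is the path obtained by \emph{interleaving the two halves} of the cyclic order. With $m = \lceil n/2 \rceil$, take
\[
P \;=\; v_1,\; v_{m+1},\; v_2,\; v_{m+2},\; v_3,\; v_{m+3},\; \ldots,
\]
ending with $v_m, v_n$ for $n$ even and with $v_{m-1}, v_n, v_m$ for $n$ odd. Observe that every edge of $P$ has one endpoint in $\{v_1, \ldots, v_m\}$ and the other in $\{v_{m+1}, \ldots, v_n\}$, and within each half the indices are encountered in increasing order along $P$.

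The main step is to verify, for any two non-adjacent edges $e$ and $e'$ of $P$, that their four endpoints alternate (interleave) in the cyclic order around~$O$. I would split into three cases depending on whether each of $e, e'$ goes ``first to second half'' or the reverse, and in each case the index inequalities of the form $k < l \leq m < m+k < m+l$ immediately force the desired alternation of the two chords around $O$.

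Once interleaving is established, the conclusion is immediate: the $K_4$-subdrawing on the four endpoints of $e$ and $e'$ inherits {\cmonotonicity} with respect to~$O$ and is still crossed by the ray~$r$, hence is itself {\gtwisted}. Lemma~\ref{lem:twist} applied to this $K_4$ then says precisely that the interleaving pair $e, e'$ does not cross, so $P$ is plane. I expect the main obstacle to be simply identifying the right candidate path, since the naive choice $v_1, v_2, \ldots, v_n$ fails: in {\gtwisted} drawings it is the consecutive (non-interleaving) pairs that can cross, not the interleaving ones. Once the ``interleaved halves'' construction is chosen, the remaining argument is a routine case check.
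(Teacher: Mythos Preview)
Your proposal is correct and is essentially the paper's own proof: the paper uses exactly the same interleaved-halves Hamiltonian path (written out separately for $n$ odd and $n$ even) and then observes that any two non-adjacent edges $(v_i,v_j)$, $(v_k,v_l)$ with $i<j$, $k<l$ satisfy $i<k<j<l$ or $k<i<l<j$, whence Lemma~\ref{lem:twist} applied to the induced {\gtwisted} $K_4$ gives planarity. The only difference is cosmetic: the paper normalises to $i<j$, $k<l$ and reads off the alternation directly, whereas you propose a three-way case split on edge direction; both arrive at the same index inequality.
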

	
	\begin{proof}[Proof of Theorem~\ref{thm:twisted}]
		Let $D$ be a {\gtwisted} drawing of $K_n$.
		Consider the Hamiltonian path $v_1, v_{\lceil \frac{n}{2}\rceil +1}, v_2, v_{\lceil \frac{n}{2}\rceil +2}, v_3, \ldots , v_{\lceil \frac{n}{2}\rceil -1}, v_n, v_{\lceil \frac{n}{2}\rceil}$ if $n$ is odd or the Hamiltonian path $v_1, v_{\lceil \frac{n}{2}\rceil +1}, v_2, v_{\lceil \frac{n}{2}\rceil +2}, v_3, \ldots , v_{n-1}, v_{\lceil \frac{n}{2}\rceil}, v_n$ if $n$ is even. See for example the Hamiltonian path $v_1, v_4, v_2, v_5, v_3$ in Figure~\ref{fig:example_gtwisted}. Take any pair of edges $(v_i,v_j)$ and $(v_{k}, v_{l})$ of the path, where we can assume without loss of generality that $i < j$ and $k < l$. If the two edges share an endpoint, they are adjacent and do not cross. Otherwise, if they do not share an endpoint, either $i < k < j < l$ or $k < i < l < j$ by definition of the path. In any of the two cases, $(v_i,v_j)$ and $(v_{k}, v_{l})$ cannot cross by Lemma~\ref{lem:twist}. Therefore, no pair of edges cross, and the Hamiltonian path is plane.
	\end{proof}
	
	Analogous to the proof of Theorem~\ref{thm:twisted}, one can argue that in every generalized twisted drawing of $K_n$ with $n$ odd, the Hamiltonian cycle $v_1, v_{\lceil \frac{n}{2}\rceil +1}, v_2, v_{\lceil \frac{n}{2}\rceil +2}, \ldots , v_{\lceil \frac{n}{2}\rceil -1}, v_n, v_{\lceil \frac{n}{2}\rceil}, v_1$ is plane. We strongly conjecture that every generalized twisted drawing of~$K_n$ contains a plane Hamiltonian cycle, but its structure for even $n$ is still an open problem.
	
	Theorem~\ref{thm:twisted} will be used heavily in the next two sections. 
	Further, the following statement, which has been implicitly shown in \cite{bound_2014} and \cite{triangles_together}, will be used in all remaining sections.
	For completeness, we include a proof in Appendix~\ref{appendix:quasi_c_monotone}.
	
	\begin{restatable}{lem}{quasi}\label{lem:quasi_c_monotone}
		Let $D$ be a simple drawing of a complete graph containing a subdrawing $D'$, which is a plane drawing of $K_{2,n}$. Let $A=\{a_1,a_2, \ldots, a_n\}$ and $B=\{b_1,b_2\}$ be the sides of the bipartition of $D'$. Let $D_A$ be the subdrawing of $D$ induced by the vertices of $A$. Then $D_A$ is weakly isomorphic to a c-monotone drawing. Moreover, if all edges in $D_A$ cross the edge $b_1b_2$, then $D_A$ is weakly isomorphic to a generalized twisted drawing.
	\end{restatable}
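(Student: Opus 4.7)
The plan is to construct explicitly a {\cmonotone} drawing weakly isomorphic to $D_A$, via two successive homeomorphisms of the sphere that ``straighten'' the structure imposed by $D'$.

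First I would work on the sphere and apply a homeomorphism $\phi$ sending $b_1$ to a point $O$ and $b_2$ to the point at infinity. Under $\phi$, each pair of edges $b_1 a_k$ and $a_k b_2$ of $D'$ combines into a single Jordan arc $\gamma_k$ from $O$ to $\infty$ through $\phi(a_k)$, and since $D'$ is plane, the arcs $\gamma_1,\ldots,\gamma_n$ are pairwise non-crossing. I would then apply a second homeomorphism $\psi$ that fixes $O$ and $\infty$ and straightens each $\gamma_k$ to a straight ray $R_k$ from $O$. For the second part of the lemma I would additionally include $b_1 b_2$ in this straightening step: because $b_1 b_2$ shares endpoints with every edge $b_1 a_k$ and $b_2 a_k$, the simple drawing property forces $b_1 b_2$ to be disjoint from those edges and hence from each $\gamma_k$, so it can be straightened into an extra ray $r^*$ from $O$ disjoint from all $R_k$.

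The next step is the pivotal observation: for any edge $a_i a_j$ of $D_A$, simplicity forces it to be disjoint from $b_1 a_i, b_2 a_i, b_1 a_j, b_2 a_j$ (they share endpoints), so its image in the straightened drawing $D^*$ meets $R_i \cup R_j$ only at the endpoints $\psi(\phi(a_i))$ and $\psi(\phi(a_j))$. Consequently, the straightened image of $a_i a_j$ lies entirely in one of the two open wedges of the sphere bounded by $R_i$ and $R_j$; denote this wedge by $W(i,j)$. To build the {\cmonotone} drawing $D_A^\star$, I would keep the vertex positions on the rays and reroute each edge $a_i a_j$ as an angularly monotone curve inside $W(i,j)$, i.e., one whose angular coordinate around $O$ sweeps monotonically from the angle of $R_i$ to that of $R_j$. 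Every ray from $O$ would then cross each such edge at most once, which is exactly the definition of {\cmonotonicity}.

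The main obstacle I foresee is verifying that this rerouting preserves the crossing pattern of $D_A$. For two edges $a_i a_j$ and $a_p a_q$ I would compare their wedges $W(i,j)$ and $W(p,q)$: in the disjoint case, neither drawing has a crossing; in the nested or interlocked cases, I would argue that whether the edges cross depends only on the cyclic order of $\{i,j,p,q\}$ around $O$ together with the wedge sides chosen, both of which are combinatorial invariants that are preserved by the rerouting. The radii of the vertices $\psi(\phi(a_k))$ along their rays can be freely adjusted, providing enough flexibility to realize the required crossing pattern within each wedge. Finally, the {\gtwisted} conclusion comes essentially for free: under the hypothesis of the second part, every edge of $D_A$ crosses $b_1 b_2$ in $D$, hence every edge of $D_A^\star$ is crossed by the straightened ray $r^*$, which together with {\cmonotonicity} matches the definition of a {\gtwisted} drawing.
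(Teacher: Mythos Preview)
Your overall architecture matches the paper's: send $b_1$ to a center $O$ and $b_2$ to infinity, straighten the $n$ paths $b_1a_kb_2$ to rays, and then replace each edge of $D_A$ by an angularly monotone arc in the correct wedge. The place where your argument breaks down is the verification that this rerouting preserves the crossing pattern.

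You establish that $a_ia_j$ avoids $R_i\cup R_j$ and hence lives in a single wedge $W(i,j)$, but you never control how $a_ia_j$ meets the \emph{intermediate} rays $R_k$ inside that wedge. In a simple drawing nothing a priori prevents $a_ia_j$ from crossing some $R_k$ three (or more) times; if that happens, two edges with overlapping wedges can cross or not depending on fine geometric data, and your claim that the crossing is determined ``only by the cyclic order of $\{i,j,p,q\}$ together with the wedge sides'' is simply false. The suggested fix of adjusting the radii of the $a_k$ does not help either: each radius participates in many pairwise constraints at once, and you give no reason why a consistent choice exists. The paper closes exactly this gap with a short but essential lemma: for any $a_1,a_2,a_3\in A$, the edge $a_1a_2$ crosses the combined path $b_1a_3\cup a_3b_2$ at most once. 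The proof uses the two plane quadrilaterals $b_1a_1b_2a_2$ and $b_1a_2b_2a_3$ in $D'$ to trap $a_1a_2$ and force it either to miss $r_3$ entirely or to cross exactly one of $a_3b_1$, $a_3b_2$. Once every edge of $D_A$ meets every ray at most once, each edge has a well-defined height on each ray, two edges cross between consecutive rays iff their heights swap there, and replacing each inter-ray piece by a straight segment gives a {\cmonotone} drawing with the identical crossing pattern. Your final paragraph on the {\gtwisted} case is fine and agrees with the paper.
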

	
\pagebreak	
	\section{Disjoint Edges in Simple Drawings}\label{sec:gen}
	
	In this section, we show that every simple drawing of~$K_n$ contains at least~$\floor[\Big]{\sqrt{\frac{n}{48}}}$ pairwise disjoint edges, improving the previously known best bound of $\Omega(n^{\frac{1}{2}-\varepsilon})$, for any $\varepsilon>0$, by Ruiz-Vargas~\cite{RUIZVARGAS2017}.
	In addition to the properties of generalized twisted drawings from Section~\ref{sec:gtwisted_crossings}, we use the following theorems and observations to prove this new lower bound.
	
	\begin{theorem}[\cite{biconnected}]\label{thm:2_connected}
		For $n\ge 3$, every maximal plane subdrawing of any simple drawing of~$K_n$ is biconnected.
	\end{theorem}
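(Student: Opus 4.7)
The plan is to argue by contradiction: suppose $D'$ is a maximal plane subdrawing of a simple drawing $D$ of $K_n$ (with $n\ge 3$) that is not biconnected, and exhibit an edge of $D\setminus D'$ that can be added to $D'$ without creating any crossing, contradicting maximality. I would first reduce to the case where $D'$ has a cut vertex: if $D'$ were disconnected, I would look at a bounded face of one component that contains another component and treat its outer boundary as playing the role that the cut vertex plays below. So assume $v$ is a cut vertex of $D'$ and $B_1, B_2$ are two blocks of $D'$ meeting precisely at $v$.

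The next step is to locate the right face. A standard fact about planar embeddings is that the edges incident to a cut vertex are partitioned into cyclically consecutive arcs in the rotation at $v$, one arc per block. Between the arc belonging to $B_1$ and the arc belonging to $B_2$, there must be a face $F^\star$ of $D'$ whose boundary walk visits $v$ at least twice: once flanked by an edge $vu$ with $u\in V(B_1)\setminus\{v\}$, and once flanked by an edge $vw$ with $w\in V(B_2)\setminus\{v\}$. In particular, $u$ and $w$ both lie on the boundary of~$F^\star$, and the edge $uw$ of~$D$ is not an edge of~$D'$, since $u$ and $w$ belong to different blocks.

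The core of the argument, and the main obstacle, is to show that $uw$ (as drawn in~$D$) crosses no edge of $D'$. To set up this step, I would consider the $K_3$-subdrawing of~$D$ on $\{u,v,w\}$: any simple drawing of $K_3$ is automatically plane, so the three edges $uv$, $vw$, $uw$ form a simple closed curve bounding two open disks, and I would focus on the disk $R$ lying on the same side of $uv\cup vw$ as the sliver of $F^\star$ incident to $v$. Any edge $e\in D'$ that crosses $uw$ must have an endpoint in $R$ (because $e$ cannot cross $uv$ or $vw$, which are edges of the plane drawing $D'$), and that endpoint must therefore lie on the boundary of~$F^\star$. If no such $e$ exists we are done; otherwise I would iterate, replacing the pair $(u,w)$ by a pair involving such an endpoint $x$ and strictly reducing the number of edges of $D'$ crossed by the candidate edge. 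The delicate point — and really the heart of the proof — is to show that this replacement always makes progress, so that the iteration terminates with an uncrossed edge between two vertices on the boundary of a common face, yielding the desired contradiction to maximality.
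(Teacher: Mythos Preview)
The paper does not contain a proof of this theorem: it is stated as a known result, cited from~\cite{biconnected}, and used as a black box in the proofs of Theorem~\ref{thm:gen} and Theorem~\ref{the:bound}. There is therefore no proof in the present paper to compare your proposal against.

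That said, your overall strategy --- assume a cut vertex $v$, locate neighbours $u$ and $w$ in different blocks lying on a common face of $D'$, and argue that the edge $uw$ of the ambient drawing $D$ can be added to $D'$ without crossings --- is the natural one and is in the spirit of the argument in the cited source. Your sketch has two soft spots. First, the disconnected case is waved away; it needs either a separate argument or a reduction showing that a maximal plane subdrawing is automatically connected. Second, and more seriously, you explicitly flag the iteration step as ``the delicate point'' without carrying it out: when $uw$ is crossed by some edge of $D'$ and you replace $(u,w)$ by a pair involving an interior vertex $x$, you must verify both that the new candidate edge is crossed by strictly fewer edges of $D'$ and that its endpoints still lie in different blocks (so that the edge is genuinely absent from $D'$). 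Neither is automatic from what you have written, and this is where the real content of the proof lies; as it stands, the proposal is a plan rather than a proof.
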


	The following theorem is a direct consequence of Corollary~5 in~\cite{triangles_alone}.
	\begin{theorem}\label{thm:new_edges}
		Let~$D$ be a simple drawing of $K_n$ with $n\geq 3$. Let~$H$ be a connected plane subdrawing of~$D$ containing at least two vertices, and let~$v$ be a vertex in~$D \setminus H$. Then $D$ contains two edges incident to~$v$ that connect~$v$ with~$H$ and do not cross any edges of~$H$.
	\end{theorem}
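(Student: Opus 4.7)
The plan is to reduce the statement to Theorem~\ref{thm:2_connected} via a restriction-and-extension argument. Concretely, I would restrict attention to the subdrawing $D'$ of $D$ induced by the vertex set $V(H)\cup\{v\}$. Since $H$ contains at least two vertices, this set has size $m\geq 3$, so $D'$ is a simple drawing of $K_m$. Starting from $H$, which is plane by hypothesis, I would greedily add edges of $D'$ whose insertion preserves planarity, continuing until no further edge of $D'$ can be added without creating a crossing. Call the resulting plane subdrawing $M$; by construction $M$ contains $H$ as a subdrawing and is a maximal plane subdrawing of $D'$.

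Next, I would invoke Theorem~\ref{thm:2_connected} on the drawing $D'$ of $K_m$ with $m\geq 3$, which guarantees that the maximal plane subdrawing $M$ is biconnected. Since $V(M)=V(D')=V(H)\cup\{v\}$, biconnectedness forces $v$ to have degree at least two in $M$, and the two neighbors of $v$ in $M$ necessarily lie in $V(H)$. Because $M$ is plane and contains $H$, the two edges of $M$ incident to $v$ cross neither each other nor any edge of $H$, yielding exactly the two edges required by the theorem.

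The only real obstacle is ensuring that the set-up actually falls within the scope of Theorem~\ref{thm:2_connected}: one needs $|V(H)\cup\{v\}|\geq 3$, which is immediate from $|V(H)|\geq 2$, and one needs the induced $D'$ to be a simple drawing of a complete graph, which is immediate from $D$ being one. In particular, no finer topological or visibility analysis of the face structure of $H$ around $v$ appears necessary, even though such face-tracing arguments are the more common route taken in the literature on plane subdrawings of simple drawings.
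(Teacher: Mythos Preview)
Your argument is logically correct: extending $H$ together with the vertex $v$ to a maximal plane subdrawing $M$ of the induced $K_m$ on $V(H)\cup\{v\}$ and invoking Theorem~\ref{thm:2_connected} forces $\deg_M(v)\ge 2$, which yields the two required edges. One cosmetic point: as written you build $M$ by adding only edges to $H$, so you should say explicitly that $v$ is adjoined as a vertex at the outset (otherwise nothing guarantees $v\in V(M)$ prior to appealing to maximality); you clearly intend this, since you later assert $V(M)=V(H)\cup\{v\}$.

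The paper itself gives no proof here; it simply records the statement as a direct consequence of Corollary~5 in~\cite{triangles_alone}. Your route is therefore different: you derive the two-edge property from the biconnectedness theorem of~\cite{biconnected} rather than from the face-based extension argument of~\cite{triangles_alone}. This is a neat reduction within the paper's own toolbox. One caveat worth flagging, though: the proof of Theorem~\ref{thm:2_connected} in~\cite{biconnected} itself relies on a lemma that is essentially the statement you are proving (that a vertex outside a connected plane subgraph can be joined to it by two non-crossing edges). So while your derivation is formally valid given both theorems as black boxes in this paper, it is circular at the level of the underlying literature and does not constitute an independent proof of the two-edge lemma.
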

	
	\begin{observation}\label{thm:Euler}
		For any $n \geq 3$, the number of edges in a planar graph with $n$~vertices is at most~$3n-6$.
	\end{observation}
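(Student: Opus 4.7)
The plan is to derive the bound directly from Euler's polyhedral formula, since the statement is a classical consequence and no combinatorial work specific to the simple-drawing setting is required. The first step is to reduce to the case of a connected plane triangulation: any planar graph on $n\ge 3$ vertices can be augmented with additional edges until every face (including the unbounded one) is a triangle, and this augmentation can only increase the number of edges. Hence it suffices to prove $E=3n-6$ for a maximal plane graph.

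Next I would apply Euler's formula $V-E+F=2$ for connected plane graphs. The key combinatorial identity comes from a double counting of incidences between edges and faces: in a triangulation, every face is bounded by exactly three edges, and every edge is incident to exactly two faces, giving $2E=3F$. Substituting $F=2E/3$ into Euler's formula yields $n-E+2E/3=2$, which rearranges to $E=3n-6$, as desired.

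The only mildly delicate point is the disconnected case. I would handle it by first adding edges between distinct components until the graph is connected (preserving planarity and only increasing the edge count), or equivalently by using the generalized form $V-E+F=1+c$ with $c$ components; either way the inequality $E\le 3n-6$ survives. Since every planar graph with $n\ge 3$ vertices embeds as a subgraph of some maximal one, the bound follows for all planar graphs, not just triangulations.

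I do not expect any genuine obstacle here; the statement is textbook material and the only care needed is the reduction to the triangulated, connected case before invoking Euler's formula.
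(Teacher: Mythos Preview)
Your argument is correct and is the standard textbook derivation of the bound from Euler's formula. The paper itself offers no proof of this observation; it is stated as a well-known fact and used directly, so there is nothing further to compare.
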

	
	A drawing is \emph{outerplane} if it is plane, and all vertices lie on the unbounded face of the drawing. A graph is \emph{outerplanar} if it can be drawn outerplane. Outerplanar graphs have a smaller upper bound on their number of edges than planar graphs.
	
	\begin{observation}\label{thm:outerplanar}
		For any $n \geq 3$, the number of edges in an outerplanar graph with $n$~vertices is at most $2n-3$.
	\end{observation}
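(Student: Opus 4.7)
The plan is to reduce Observation~\ref{thm:outerplanar} to Observation~\ref{thm:Euler} by the standard trick of adding one auxiliary vertex in the outer face. Concretely, I would start from an outerplane drawing of the given graph~$G$ on $n$ vertices, place a new vertex~$v^\ast$ somewhere in the (unbounded) outer face, and connect $v^\ast$ by a Jordan arc to each of the $n$ vertices of~$G$; since all vertices of~$G$ lie on the outer face, these arcs can be routed so that they are pairwise non-crossing and also do not cross any edge of~$G$. The resulting drawing is a plane drawing of a graph~$G^\ast$ with $n+1$ vertices and $e(G)+n$ edges.

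Then I would simply apply Observation~\ref{thm:Euler} to~$G^\ast$: since $G^\ast$ is planar on $n+1\geq 4$ vertices, it has at most $3(n+1)-6 = 3n-3$ edges. Rearranging gives $e(G) \leq 3n-3-n = 2n-3$, which is the desired bound. The only mild subtlety, and the step I would be most careful about, is justifying that the $n$ new edges at~$v^\ast$ really can be drawn without crossings; this is where outerplanarity is used in an essential way (it fails for general planar graphs), but it follows directly from the fact that all vertices are accessible from a single point in the outer face. For $n=3$, the bound $2n-3=3$ is tight and matches the triangle, so no separate small case is needed.
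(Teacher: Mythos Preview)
Your argument is correct: adding a new vertex in the outer face adjacent to all $n$ vertices yields a planar graph on $n+1$ vertices, and Observation~\ref{thm:Euler} then gives $e(G)+n \le 3(n+1)-6$, i.e., $e(G)\le 2n-3$. The paper itself does not prove Observation~\ref{thm:outerplanar}; it is stated as a well-known fact without argument, so there is no approach to compare against. Your justification that the $n$ new edges can be routed without crossings (because all vertices lie on the outer face) is exactly the point where outerplanarity enters, and it is sound.
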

	
	\begin{theorem}\label{thm:gen}
		Every simple drawing of~$K_n$ contains at least~$\floor[\Big]{\sqrt{\frac{n}{48}}}$ pairwise disjoint edges.
	\end{theorem}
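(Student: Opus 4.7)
Setting $k := \lfloor \sqrt{n/48}\rfloor$, the plan is to locate in $D$ a plane subdrawing of $K_{2,2k+1}$ in which the edge $b_1 b_2$ is crossed by every edge of the induced subdrawing $D_A$ on the $(2k+1)$-vertex $A$-side. Lemma~\ref{lem:quasi_c_monotone} then makes $D_A$ weakly isomorphic to a {\gtwisted} drawing, Theorem~\ref{thm:twisted} supplies a plane Hamiltonian path on $A$, and picking every other edge of that path produces the required $k$ pairwise disjoint edges (since $\lfloor (2k+1)/2 \rfloor = k$).

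To construct the substrate, I would first iterate Theorem~\ref{thm:new_edges} starting from an arbitrary edge, absorbing each remaining vertex of $D$ with two non-crossing edges, to build a plane subdrawing $H$ of $D$ spanning all of $V(D)$ with at least $2n-3$ edges. Extending $H$ to a maximal plane subdrawing keeps it plane and, by Theorem~\ref{thm:2_connected}, makes it biconnected, so that every face boundary of $H$ is a simple cycle. Observations~\ref{thm:Euler} and~\ref{thm:outerplanar} bound $|E(H)|$ between $2n-3$ and $3n-6$ and force $H$ to be non-outerplanar (once $|E(H)|>2n-3$); combined with Euler's formula $|F| = |E(H)|-n+2$ and $\sum_f|f|=2|E(H)|$, they pin down both the number and the average length of the faces of $H$.

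The extraction of the desired $K_{2,2k+1}$ then proceeds by a dichotomy on face sizes. If some face of $H$ has boundary cycle of length at least $2k+1$, then that cycle is a plane cycle in $D$ and its alternating edges already form $k$ pairwise disjoint edges, so we are done. Otherwise all face cycles of $H$ have length at most $2k$, which together with biconnectedness and the $\ge 2n-3$ edge bound forces many small faces sharing common vertices; a counting argument on vertex--face incidences then locates two vertices $b_1, b_2\in V(H)$ with $2k+1$ common neighbors, yielding a plane $K_{2,2k+1}$ subdrawing. A final pigeonhole on how each chord of the $A$-side interacts with the edge $b_1 b_2$ in $D$ restricts $A$ by at most a constant factor---this is where the constant $48$ in the bound accumulates---while preserving the crossing condition required by Lemma~\ref{lem:quasi_c_monotone}.

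The main obstacle is precisely the joint extraction of a plane $K_{2,m}$ with $m=\Omega(\sqrt n)$ \emph{together with} the crossing hypothesis of Lemma~\ref{lem:quasi_c_monotone}: pure edge counting in a planar graph with $\Theta(n)$ edges yields only $K_{2,O(1)}$ common neighbors on average, so the argument must exploit both the biconnected face structure of $H$ (Theorem~\ref{thm:2_connected} and Observations~\ref{thm:Euler} and~\ref{thm:outerplanar}) and the geometric interaction of the non-plane chords among $A$ with $b_1 b_2$ in the ambient drawing $D$ to beat this trivial bound.
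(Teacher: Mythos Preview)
Your proposal has a genuine gap at the extraction step, and it stems from building $H$ on \emph{all} vertices of $D$. Once $H$ spans $V(D)$, there are no ``extra'' vertices sitting inside faces, and a planar graph on $n$ vertices with $\Theta(n)$ edges simply need not contain a $K_{2,m}$ with $m=\Omega(\sqrt{n})$: a $3$-regular planar graph already kills any hope of finding even a single vertex of degree $\Omega(\sqrt{n})$. Your own closing paragraph essentially concedes this, but the remedy you gesture at (``exploit the biconnected face structure and the geometric interaction'') is not an argument. The dichotomy on face lengths does not help either: bounding all face lengths by $2k$ gives no leverage toward common neighbours.

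The paper's proof avoids this obstruction by a different choice of $H$. It first takes a maximal plane matching $M$; if $|M|\ge\sqrt{n/48}$ we are done, so assume $|M|<\sqrt{n/48}$. Then $H$ is a maximal plane subdrawing \emph{on the $2|M|=O(\sqrt{n})$ matched vertices only}. Now $H$ has $O(\sqrt{n})$ edges and hence $O(\sqrt{n})$ faces (Observation~\ref{thm:Euler}), while almost all $n$ vertices are unmatched and lie strictly inside faces. A pigeonhole over faces yields a face $f$ with $u(f)\ge \frac{\sqrt{48n}}{12}|f|$ interior vertices. Each interior vertex is attached by two plane edges to $\partial f$ via Theorem~\ref{thm:new_edges}; maximality of $M$ forces both attachment points to lie on $\partial f$, so these pairs become ``long edges'' forming an outerplanar multigraph on $|f|$ vertices. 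Observation~\ref{thm:outerplanar} then gives two boundary vertices $v,w$ with $>\frac{\sqrt{48n}}{24}$ common interior neighbours $U_{vw}$. Finally, the crossing hypothesis of Lemma~\ref{lem:quasi_c_monotone} is not obtained by pigeonhole but directly: for $x,y\in U_{vw}$, maximality of $M$ forces $xy$ outside the quadrilateral $vxwy$ (it must cross a matching edge), and maximality of $H$ forces $vw$ outside $f$; the alternating order $v,x,w,y$ on the quadrilateral boundary then makes $xy$ and $vw$ cross. This is where the constant $48$ actually arises, and no further factor is lost.
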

	
	\begin{proof}
		Let $D$ be a simple drawing of~$K_n$, and let $M$ be a maximal plane matching of $D$. If $m := |M| \geq \sqrt{\frac{n}{48}}$, then Theorem~\ref{thm:gen} holds. So assume that~$|M| < \sqrt{\frac{n}{48}}$. We will show how to find another plane matching, whose size is at least $\floor{\sqrt{\frac{n}{48}}}$. 
		
		The overall idea is the following:
		Let $H$ be a maximal plane subdrawing of $D$ whose vertex set is exactly the vertices matched in $M$ and that contains~$M$.
		We will find a face~$f$ in~$H$ that contains much more unmatched vertices inside than matched vertices on its boundary. Then we will show that there exists a subset of the vertices inside that face, which induces a subdrawing of $D$ that is weakly isomorphic to a {\gtwisted} drawing and contains enough vertices to guarantee the desired size of the plane matching.
		
		We start towards finding the face $f$.
		By Theorem~\ref{thm:2_connected}, $H$ is biconnected. Thus, $H$ partitions the plane into faces, where the boundary of each face is a simple cycle. Note that the vertices of~$H$ are exactly the vertices that are matched in $M$, and the vertices inside faces are the vertices that are unmatched in~$M$.
		Let $U$ be the set of vertices of $D$ that are not matched by any edge of $M$.
		We denote the set of vertices of $U$ inside a face $f_i$ by $U(f_i)$, the number of vertices in $U(f_i)$ by $u(f_i)$,
		and the number of vertices on the boundary of the face $f_i$ by $|f_i|$.
		
		We next show that there exists a face $f$ of $H$ such that $u(f) \geq {\frac{\sqrt{48n}}{12}}|f|$.
		Assume for a contradiction that for every face $f_i$ it holds that
		\begin{equation*}
			u(f_i) <  {\frac{\sqrt{48n}}{12}}|f_i|.
		\end{equation*}
		There are exactly $n-2m$ unmatched vertices. As every unmatched vertex is in the interior of a face of $H$ (that might be the unbounded face), we can count the unmatched vertices by summing over the number of vertices in each face (including the unbounded face). Thus,
		\begin{equation}n-2m \leq \sum_{f_i}u(f_i) < {\frac{\sqrt{48n}}{12}}\sum_{f_i}|f_i|.\end{equation}
		The number of edges in $H$ is $\frac{1}{2}\sum_{f_i}|f_i|$. Since~$H$ is plane, we can use Observation~\ref{thm:Euler} to bound the number of edges of~$H$ by $3n'-6$, where $n'$ is the number of vertices in~$H$. As the vertices of~$H$ are exactly the matched vertices, their number is $n'=2m$. Hence, \begin{equation*}\sum_{f_i}|f_i|\leq 6 \cdot 2m-12.\end{equation*}
		From $m < \sqrt{\frac{n}{48}}$ it follows that \begin{equation}\sum_{f_i}|f_i| < 12\sqrt{\frac{n}{48}}-12\end{equation} and \begin{equation}n-2 \sqrt{\frac{n}{48}} < n-2m.\end{equation}
		Putting equations (1) to (3) together we obtain that \begin{equation*}n-2 \sqrt{\frac{n}{48}} <   {\frac{\sqrt{48n}}{12}} (12\sqrt{\frac{n}{48}}-12)=n - \sqrt{48n}.\end{equation*}
		
		However, this inequality cannot be fulfilled by any $n \geq 0$. Thus, there exists at least one face $f_i$ with $u(f_i) \geq  {\frac{\sqrt{48n}}{12}}|f_i|$. We call that face $f$. (If there are several such faces, we take an arbitrary one of them and call it~$f$.)
		
		As a next step, we will find two vertices on the boundary of~$f$ to which many vertices inside~$f$ are connected via edges that do not cross each other or~$H$.
		From $f$ and the set $U(f)$, we construct a plane subdrawing $H'$ as follows; cf.~Figure~\ref{fig:edges}~(left). We add the vertices and edges on the boundary of $f$. 
		Then we iteratively add all the vertices in $U(f)$, where for each added vertex $v$ we also add two edges of $D$ incident to $v$ such that the resulting drawing stays plane. 
		Two such edges exist by Theorem~\ref{thm:new_edges}. Since the matching~$M$ is maximal, any edges between two unmatched vertices must cross at least one edge of $M$ and thus must cross the boundary of $f$. Hence, no edge in $H'$ can connect two vertices of $U(f)$ (as they are unmatched). Consequently, every vertex in $U(f)$ is connected in $H'$ to exactly two vertices that both lie on the boundary of $f$.
		
		\begin{figure}[htb]
			\centering
			\includegraphics[page=15, scale=0.92]{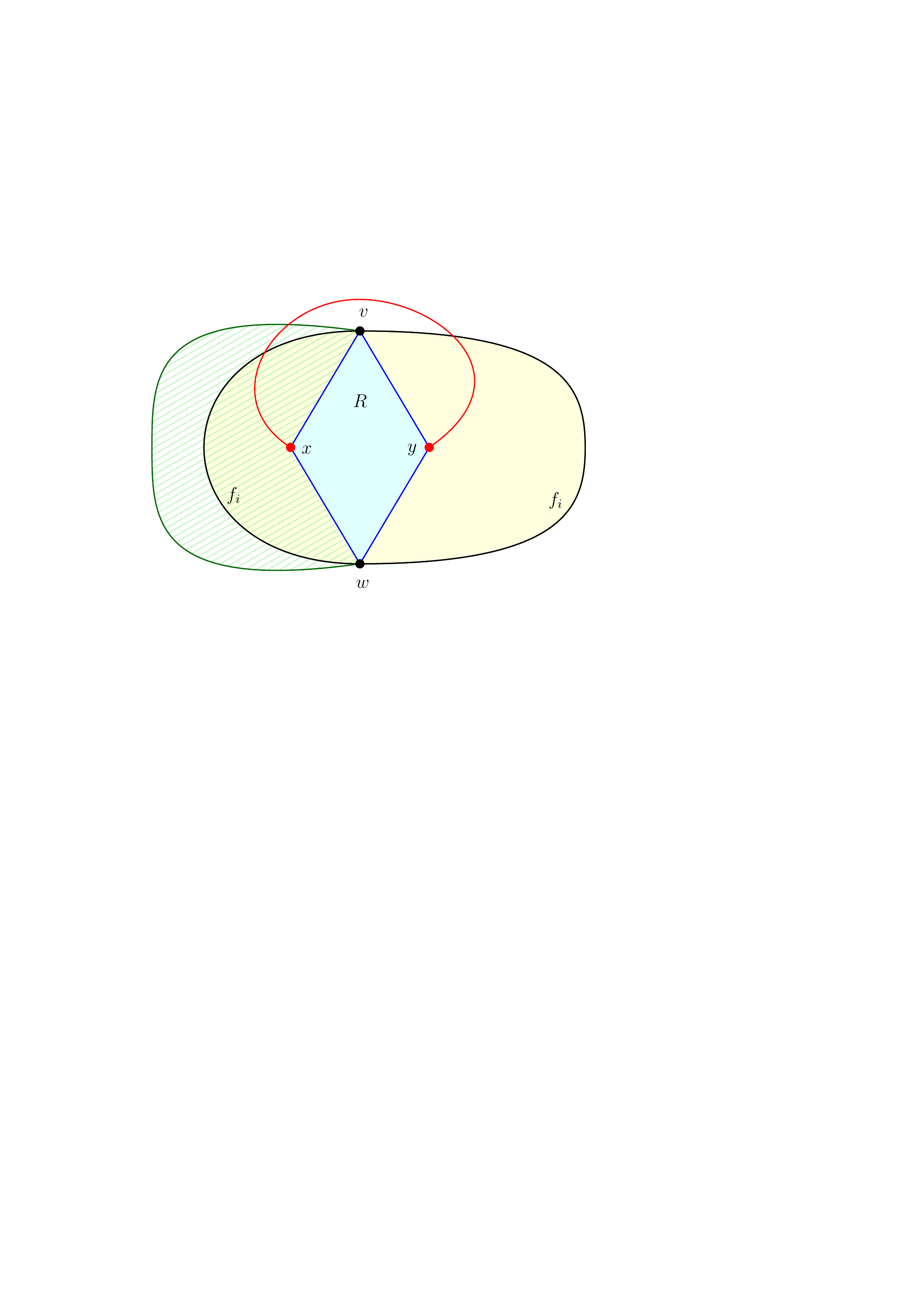}
			\caption{Left: The face $f$ in $H$ containing the plane drawing $H'$ (blue lines) inside. Right: We can obtain an outerplane drawing from $H'$ by interpreting bundles of edge pairs incident to the same black vertices as plane edges.}
			\label{fig:edges}
		\end{figure}

		We consider the edges in $H'$ that connect a vertex in $U(f)$ as a pair of edges. Every edge in such a pair is contained in exactly one pair, since it is incident to exactly one unmatched vertex. Thus, we can see every such pair of edges as one \emph{long edge} incident to two vertices on the boundary of~$f$. If several of those long edges have the same endpoints, we call them a bundle of edges; see Figure~\ref{fig:edges}~(right).
		
		From the long edges, we can define a graph $G'$ as follows. The vertices of $G'$ are the vertices of $D$ that lie on the boundary of~$f$. Two vertices $u$ and $v$ are connected in $G'$ if there is at least one long edge in $H'$ that connects them. By the definition of long edges, $G'$ is outerplanar (as can be observed in Figure~\ref{fig:edges}~(right)).  	
		Note that every unmatched vertex in $U(f)$ defines a long edge, so the number of long edges is $u(f) \geq \frac{\sqrt{48n}}{12}|f|$.
		From Observation~\ref{thm:outerplanar}, it follows that $G'$ has at most $2|f|-3$ edges. As a consequence,
		there is a pair of vertices on the boundary of $f$ such that the number of long edges in its bundle is at least
		\begin{equation*}
			{\frac{1}{(2 |f|-3)}}\frac{\sqrt{48n}}{12}|f|> 	\frac{\sqrt{48n}}{{24}}.
		\end{equation*}
		This implies that there are two vertices, say $v$ and $w$, to which more than $\frac{\sqrt{48n}}{{24}}$ vertices inside~$f$ have two plane incident edges. We call the set of vertices in $U(f)$ that have plane edges to both vertices~$v$ and~$w$ the set~$U_{vw}$. This set is marked in Figure~\ref{fig:gtwisted_subset}~(left). We denote the subdrawing of $D$ induced by~$U_{vw}$ by $D_{vw}$; see Figure~\ref{fig:gtwisted_subset}~(right).
		
		\begin{figure}[htb]
			\centering
			\includegraphics[page=10, scale=0.92]{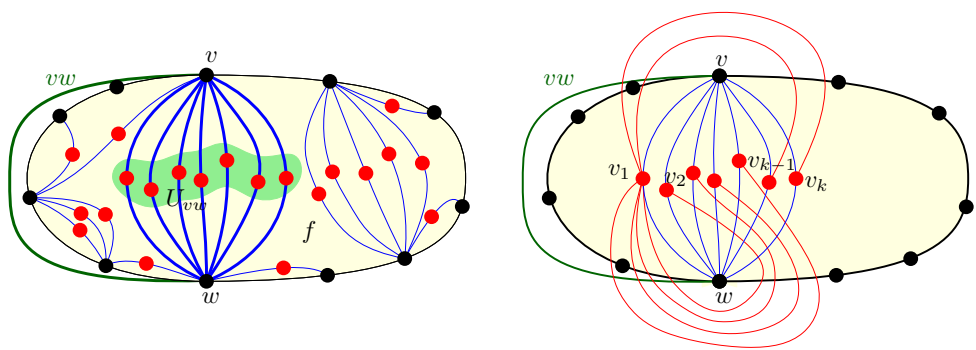}
			\caption{The subdrawing $D'$ induced by $U_{vw}$ and the edges in $D_{vw}$. Left: The set~$U_{vw}$. Right: The edges adjacent to the leftmost vertex, $v_1$, are drawn (in red).}
			\label{fig:gtwisted_subset}
		\end{figure}
		
		We show that all edges between vertices in $U_{vw}$ cross the edge $vw$. 
		Let $x$ and $y$ be two vertices of $D_{vw}$. Let $R_1$ be the region bounded by the edges $xv$, $vy$, $yw$, and $wx$ that lies inside the face~$f$; see Figure~\ref{fig:is_twisted}. We show that $xy$ and  $vw$ lie completely outside~$R_1$.
		The edge $xy$ has to lie either completely inside or completely outside~$R_1$, because it is adjacent to all edges on the boundary of~$R_1$. As $M$ is maximal and the edge~$xy$ connects two unmatched vertices, it has to cross at least one matching edge. Thus, $xy$ has to lie completely outside~$R_1$. (There can be no matching edges in~$R_1$, as $R_1$ is contained inside the face~$f$.)
		As $H$ is a maximal plane subdrawing, $vw$ cannot lie inside the face $f$ and thus has to be outside $R_1$.
		Since both edges $vw$ and  $xy$ lie completely outside $R_1$ and the vertices along the boundary of $R_1$ are sorted $vxwy$, the two edges have to cross. Thus, all edges of~$D_{vw}$ cross the edge~$vw$.

		\begin{figure}[hbt]
			\centering
			\includegraphics[page=16, scale=0.92]{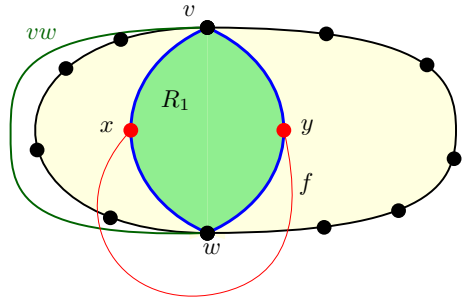}
			\caption{The edge $xy$ has to cross the edge $vw$.}
			\label{fig:is_twisted}
		\end{figure}
		
		Since the edges from vertices in $U_{vw}$ to $v$ and $w$ are plane, it follows from Lemma~\ref{lem:quasi_c_monotone} that $D_{vw}$ is weakly isomorphic to a generalized twisted drawing. Thus, $D_{vw}$ contains at least $\lfloor \frac{1}{2}\frac{\sqrt{48n}}{{24}}\rfloor$ pairwise disjoint edges by Theorem~\ref{thm:twisted}. Hence, $D$ contains at least $\lfloor\sqrt{\frac{n}{48}}\rfloor$
		pairwise disjoint edges.
	\end{proof}
	
\pagebreak	
	\section{Plane Paths in Simple Drawings}\label{sec:planepaths}
	
	In the previous section, we used generalized twisted drawings to improve the lower bound on the number of disjoint edges in simple drawings of~$K_n$. In this section, we show that generalized twisted drawings are also helpful to improve the lower bound on the length of the longest path in such drawings, where the length of a path is the number of its edges, to $\Omega(\frac{\log n }{\log \log n})$. This improves the previously known best bound of $\Omega((\log n )^{\frac{1}{6}})$, which follows from a result of Pach, Solymosi, and T{\'o}th~\cite{bound_2003}.
	
	\begin{theorem}\label{the:bound}
		Every simple drawing $D$ of $K_n$ contains a plane path of length~$\Omega(\frac{\log n }{\log \log n})$.
	\end{theorem}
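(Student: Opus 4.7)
I would adapt the framework of Theorem~\ref{thm:gen}, using a case analysis on the size of a maximal plane matching $M$ of $D$, with threshold at $\sqrt{n/48}$. The two cases naturally align with the two tools we have: generalized twisted subdrawings (inside the interior of a face, when the matching is small) and biconnected maximal plane subdrawings (when the matching is large).

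In the small-matching case ($|M|<\sqrt{n/48}$), I would run the entire construction from the proof of Theorem~\ref{thm:gen} to produce a set $U_{vw}$ of at least $\sqrt{48n}/24=\Omega(\sqrt{n})$ vertices whose induced subdrawing $D_{vw}$ is, by Lemma~\ref{lem:quasi_c_monotone}, weakly isomorphic to a generalized twisted drawing. Theorem~\ref{thm:twisted} then supplies a plane Hamiltonian path of $D_{vw}$ of length $|U_{vw}|-1=\Omega(\sqrt{n})$. Since $D_{vw}$ is an induced subdrawing of $D$, edges that are pairwise non-crossing in $D_{vw}$ remain pairwise non-crossing in $D$, so this path is already plane in $D$. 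This alone exceeds $\Omega(\log n/\log\log n)$ by a wide margin, so no further work is required in this case.

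In the large-matching case ($|M|\ge\sqrt{n/48}$), the maximal plane subdrawing of $D$ carrying $M$ has $N=2|M|=\Omega(\sqrt{n})$ vertices and is biconnected by Theorem~\ref{thm:2_connected}. Here I would appeal to (or prove as a separate lemma) the fact that every biconnected planar graph on $N$ vertices contains a path of length $\Omega(\log N/\log\log N)$. Applied to the plane subdrawing in question, and using $N=\Omega(\sqrt{n})$ so that $\log N=\Theta(\log n)$ and $\log\log N=\Theta(\log\log n)$, this yields a plane path of length $\Omega(\log n/\log\log n)$ in $D$.

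\textbf{Main obstacle.} The bottleneck is entirely in the large-matching case: the stated bound $\Omega(\log n/\log\log n)$ is determined precisely by the long-path guarantee available inside biconnected planar graphs, whereas the small-matching case alone delivers the much stronger $\Omega(\sqrt{n})$ almost for free from the constructions already established in Section~\ref{sec:gen}. The delicate step is therefore to secure a path of length $\Omega(\log N/\log\log N)$ inside a biconnected planar graph on $N$ vertices -- either by invoking a known combinatorial result of this form, or by giving a direct argument that exploits the specific structure of maximal plane subdrawings of simple drawings of $K_n$ (for instance, the density bound of $2N-3$ edges from Ruiz-Vargas combined with the biconnected face structure).
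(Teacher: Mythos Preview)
Your small-matching case is fine and indeed much stronger than needed. The gap is entirely in the large-matching case: the lemma you propose to invoke or prove --- that every biconnected planar graph on $N$ vertices contains a path of length $\Omega(\log N/\log\log N)$ --- is false. The graph $K_{2,N-2}$ is biconnected and planar on $N$ vertices, yet its longest path has length~$4$. Adding a perfect matching among the $N-2$ vertices on the large side keeps the graph biconnected and planar, pushes the edge count above $2N-3$, and still leaves the longest path of constant length. So neither biconnectivity nor the density bound you mention can rescue the argument; you would need to exploit something specific to maximal plane subdrawings of simple drawings of complete graphs, and nothing in the proposal does that.

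The paper sidesteps this by a different decomposition. Rather than splitting on the size of a matching, it fixes a vertex $v$, extends the star $S(v)$ to a maximal plane subdrawing $H$ (biconnected by Theorem~\ref{thm:2_connected}, and spanning all $n$ vertices because $S(v)$ already does), and then splits on the maximum degree in $H\setminus v$. If some $w\neq v$ has degree at least $(\log n)^2$ in $H$, then $v$ and $w$ have at least $(\log n)^2$ common neighbours in $H$; Lemma~\ref{lem:quasi_c_monotone} makes their induced subdrawing weakly isomorphic to a c-monotone drawing, and Theorem~\ref{the:pathcmonotone} yields a plane path of length $\Omega(\log n)$. Otherwise every vertex of $H\setminus v$ has degree below $(\log n)^2$; since $H$ is biconnected, $H\setminus v$ is connected and hence contains a spanning tree on $n-1$ vertices of maximum degree below $(\log n)^2$, which forces height $\Omega(\log n/\log\log n)$. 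The crucial point your approach loses is that this tree has $n-1$ vertices, not $\Theta(\sqrt{n})$; that is what makes the bounded-degree tree bound bite.
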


	To prove the new lower bound, we first show that all c-monotone drawings on $n$ vertices contain either a generalized twisted drawing on $\sqrt{n}$ vertices or a drawing weakly isomorphic to an x-monotone drawing on $\sqrt{n}$ vertices. We know that drawings weakly isomorphic to generalized twisted drawings or x-monotone drawings  contain plane Hamiltonian paths (by Theorem~\ref{thm:twisted} and Observation~\ref{cor:monotone} below). We conclude that c-monotone drawings contain plane paths of the desired size. We then show that every simple drawing of the complete graph contains either a c-monotone drawing or a plane $d$-ary tree. With easy observations about the length of the longest path in $d$-ary trees and by putting all results together, we obtain that every simple drawing $D$ of $K_n$ contains a plane path of length~$\Omega(\frac{\log n }{\log \log n})$.
	
	\subsection{Plane Paths in C-Monotone Drawings}
	
	A simple drawing is $x$-monotone if any vertical line intersects any edge of the drawing at most once (see Figure~\ref{fig:xmonotone}). This family of drawings has been studied extensively in the literature 
	(see for example~\cite{shell1, monotone_algo,  monotone_characterization, monotone_tutte, monotone_crossings}).
	By definition, c-monotone drawings in which there exists a ray emanating from $O$, which crosses all edges of the drawing, are generalized twisted.
	In contrast, consider a c-monotone drawing $D$ such that there exists a ray $r$ emanating from~$O$ that crosses no edge of $D$. Then it is easy to see that $D$ is strongly isomorphic to an $x$-monotone drawing.
	(A c-monotone drawing on the sphere can be cut along the ray $r$ and the result drawn on the plane such that all rays are vertical lines and the ray $r$ is to the very left of the drawing.)
	Figure~\ref{fig:cmonotone} shows a c-monotone drawing $D$ of $K_5$ where no edge crosses the ray $r$, and Figure~\ref{fig:xmonotone} shows an $x$-monotone drawing of $K_5$ strongly isomorphic to $D$. We will call simple drawings that are strongly isomorphic to $x$-monotone drawings \emph{monotone} drawings. In particular, any c-monotone drawing for which there exists a ray emanating from~$O$ that crosses no edge of the drawing is monotone.
	
	\begin{figure}
		\centering
		\begin{subfigure}[b]{0.4\textwidth}
			\centering
			\includegraphics[scale=0.6,page=5]{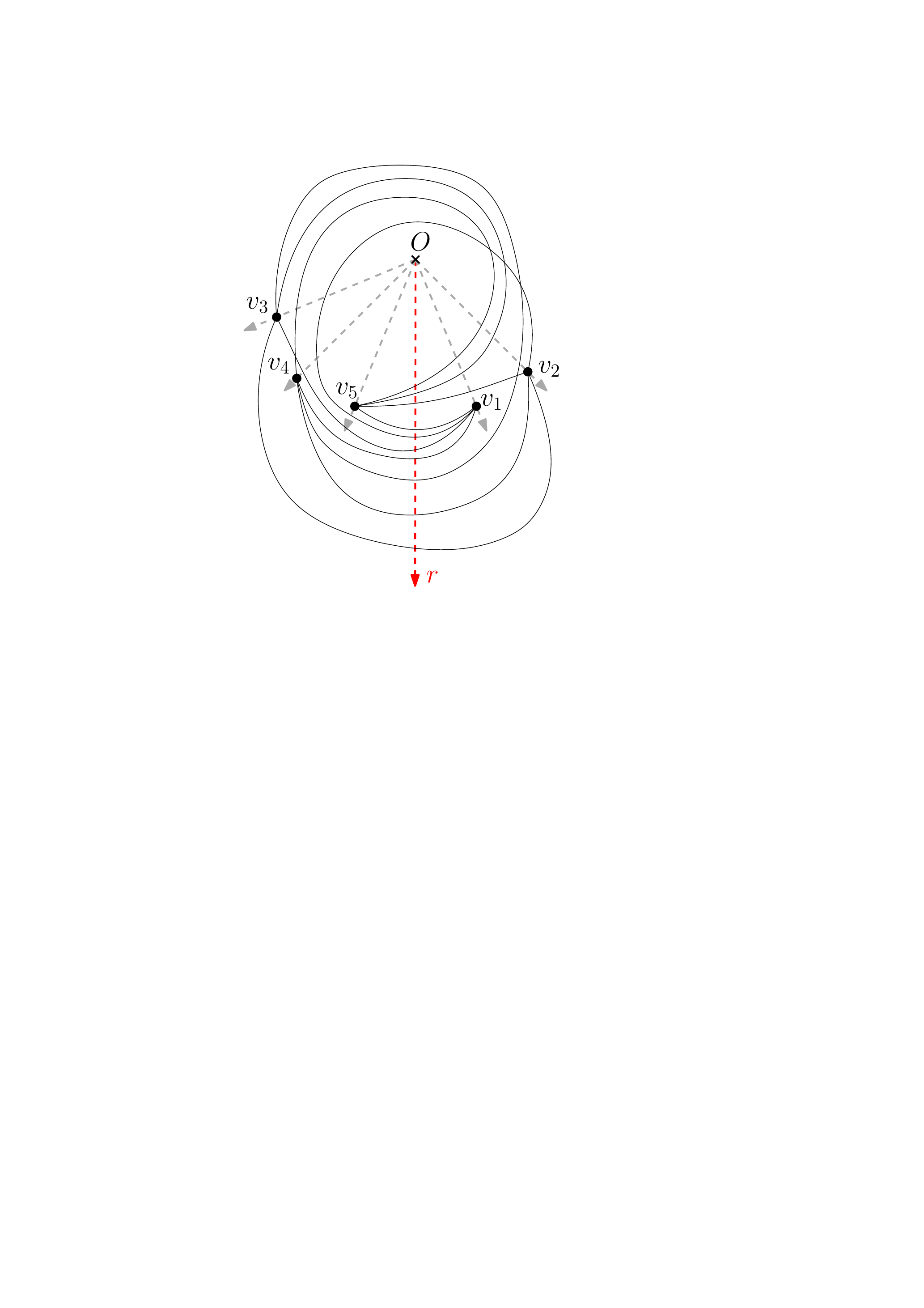}
			\caption{A c-monotone drawing $D$ of $K_5$ such that the ray $r$ crosses no edge of $D$.}
			\label{fig:cmonotone}
		\end{subfigure}
		\hfill
		\begin{subfigure}[b]{0.4\textwidth}
			\centering
			\includegraphics[scale=0.6,page=4]{figures.pdf}\caption{An $x$-monotone drawing of $K_5$ strongly isomorphic to $D$ of Figure~\ref{fig:cmonotone}.}	\label{fig:xmonotone}
		\end{subfigure}
		\caption{Two strongly isomorphic monotone drawings of $K_5$.}
		\label{fig:monotone}
	\end{figure}	
	
	It is well-known that any $x$-monotone drawing of $K_n$ contains a plane Hamiltonian path. For instance, assuming that the vertices are ordered by increasing $x$-coordinates, the set of edges $v_1v_2, v_2v_3 \ldots , v_{n-1}v_n$ form a plane Hamiltonian path.
	
	\begin{observation}\label{cor:monotone}
		Every monotone drawing of $K_n$ contains a plane Hamiltonian path.
	\end{observation}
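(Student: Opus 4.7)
The plan is to reduce the statement to the $x$-monotone case and then verify that the natural left-to-right spanning path is non-crossing. First, I would invoke the definition of monotone to obtain an $x$-monotone drawing $D'$ of $K_n$ that is strongly isomorphic to the given drawing $D$; since strong isomorphism preserves which subdrawings are plane, it suffices to exhibit a plane Hamiltonian path in $D'$.

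Next, I would label the vertices of $D'$ as $v_1, v_2, \ldots, v_n$ in order of increasing $x$-coordinate (which one may assume pairwise distinct after an arbitrarily small perturbation that leaves the combinatorial drawing unchanged), and take $P = v_1 v_2 \cdots v_n$ as the candidate path. I would then check pairwise that its edges are non-crossing. Given two edges $v_i v_{i+1}$ and $v_j v_{j+1}$ with $i < j$, there are two cases. If $j = i+1$, the edges share the endpoint $v_{i+1}$ and so cannot cross in a simple drawing. Otherwise $i+1 < j$, and I would argue that $v_i v_{i+1}$ is contained in the closed vertical strip between $x$-coordinates $x_i$ and $x_{i+1}$, while $v_j v_{j+1}$ is contained in the strip between $x_j$ and $x_{j+1}$; these two strips have disjoint interiors because $x_{i+1} < x_j$, so the two edges cannot meet.

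The only real content lies in the strip containment, which is the single place where $x$-monotonicity enters: if the edge from $v_i$ to $v_{i+1}$ ever exited the interval $[x_i, x_{i+1}]$ in its $x$-coordinate, then by continuity it would have to re-enter it to reach $v_{i+1}$, forcing some vertical line to meet the edge at least twice and contradicting $x$-monotonicity. I do not anticipate a genuine obstacle; as hinted in the paragraph preceding the observation, the statement is essentially folklore, and the proof collapses to this single geometric fact together with the preservation of planarity under strong isomorphism.
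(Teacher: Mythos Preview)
Your proposal is correct and follows exactly the approach the paper indicates in the paragraph preceding the observation: pass to an $x$-monotone drawing via strong isomorphism and take the left-to-right path $v_1v_2\cdots v_n$, which is plane because consecutive edges share an endpoint and non-consecutive edges live in disjoint vertical strips by $x$-monotonicity. The paper treats this as folklore and gives no further detail, so your write-up is in fact more explicit than the original (your perturbation remark is unnecessary, incidentally, since two vertices sharing an $x$-coordinate would force the vertical line through them to meet their connecting edge twice).
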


	We will show that c-monotone drawings contain plane paths of size $\sqrt{n}$, by showing that any c-monotone drawing of $K_n$ contains a subdrawing of $K_{\sqrt{n}}$ that is either generalized twisted or monotone.
	To do so, we will use Dilworth's Theorem on chains and anti-chains in partially ordered sets. 
	A \emph{chain} is a subset of a partially ordered set such that any two distinct elements are comparable. An \emph{anti-chain} is a subset of a partially ordered set such that any two distinct elements are incomparable.
	
	\begin{theorem}[Dilworth's Theorem, \cite{dilworth}]\label{thm:dilworth}
		Let $P$ be a partially ordered set of at least \mbox{$(s\!-\!1)(t\!-\!1)\!+\!1$}~elements. Then $P$ contains a chain of size $s$ or an antichain of size~$t$.
	\end{theorem}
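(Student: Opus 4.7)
The plan is to give a short direct proof by a labeling-and-pigeonhole argument (essentially Mirsky's trick), avoiding the full strength of Dilworth's duality theorem which is not needed for this one-sided statement.

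First, for each element $x \in P$, I would define $\ell(x)$ to be the size of a longest chain in $P$ whose maximum element is $x$ (that is, a chain ending at $x$). Clearly $\ell(x) \geq 1$ for every $x$. If some $x$ satisfies $\ell(x) \geq s$, then by definition there is a chain of size $s$ and we are done. So the only case to handle is when $\ell(x) \in \{1, 2, \ldots, s-1\}$ for every $x \in P$.

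In that case, $\ell$ is a map from a set of at least $(s-1)(t-1)+1$ elements into a set of $s-1$ values, so by the pigeonhole principle some value $k \in \{1, \ldots, s-1\}$ is attained by at least $t$ elements. Let $A \subseteq P$ be the set of elements $x$ with $\ell(x) = k$; I would then argue that $A$ is an antichain. Indeed, suppose for contradiction that $x, y \in A$ are comparable, say $x < y$. Let $C$ be a chain of size $\ell(x) = k$ ending at $x$. Then $C \cup \{y\}$ is a chain ending at $y$ of size $k+1$, so $\ell(y) \geq k+1 > k$, contradicting $y \in A$. Hence $A$ is an antichain of size at least $t$.

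This argument is self-contained and elementary; I do not expect a genuine obstacle. The only point that needs care is making sure that the definition of $\ell(x)$ requires the chain to end at $x$ (rather than merely contain $x$), since this is exactly what forces $\ell(y) > \ell(x)$ whenever $x < y$ and is therefore what makes the antichain conclusion work.
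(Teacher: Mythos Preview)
Your proof is correct. The paper does not actually prove this statement at all: it is quoted as a known result (attributed to Dilworth) and used as a black box in the proof of Theorem~\ref{thm:subdrawing}. What you have written is the standard Mirsky-style argument --- label each element by the length of the longest chain ending at it, then apply pigeonhole --- and it is entirely self-contained and elementary. Compared to invoking the full Dilworth duality theorem, your approach is lighter and more direct for this one-sided conclusion, though of course the paper's citation buys brevity since no argument is needed at all.
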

	
	\begin{restatable}{theorem}{thmsubdrawing}\label{thm:subdrawing}
		Let $s,t$ be two integers, $1\leq s,t\leq n$, such that $(s-1)(t-1)+1\leq n$. Let~$D$ be a c-monotone drawing of~$K_n$. Then $D$ contains either a {\gtwisted} drawing of $K_s$ or a monotone drawing of $K_t$ as subdrawing. In particular, if $s=t= \lceil \sqrt{n} \rceil $, $D$ contains a complete subgraph $K_s$ whose induced drawing is either  {\gtwisted} or monotone.
	\end{restatable}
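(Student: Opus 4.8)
The plan is to equip the vertex set $\{v_1,\dots,v_n\}$ with a partial order whose chains induce {\gtwisted} subdrawings and whose antichains induce monotone subdrawings, and then to invoke Dilworth's Theorem (Theorem~\ref{thm:dilworth}).

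First I would fix a ray $r_0$ emanating from $O$ that lies in the angular gap between $v_n$ and $v_1$; this is possible since there are only finitely many vertices, and it is consistent with the labeling in the sense that, going counterclockwise from $r_0$, one meets the vertices in the order $v_1,\dots,v_n$. Write $\alpha_\ell$ for the direction of the ray from $O$ through $v_\ell$. Since $D$ is {\cmonotone}, the direction is injective along every edge, so the image of $v_iv_j$ on the circle of directions is one of the two arcs with endpoints $\alpha_i$ and $\alpha_j$. As the direction of $r_0$ lies outside $[\alpha_1,\alpha_n]$, for $i<j$ I call $v_iv_j$ \emph{direct} if its image is the arc $[\alpha_i,\alpha_j]$ and \emph{wrapping} if its image is the complementary arc -- equivalently, if $r_0$ crosses $v_iv_j$. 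Now define $v_i\prec v_j$ whenever $i<j$ and $v_iv_j$ is wrapping. This relation is automatically irreflexive and antisymmetric, so the real task is transitivity.

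The crux -- and the step I expect to be the main obstacle -- is the claim: if $i<j<k$ and both $v_iv_j$ and $v_jv_k$ are wrapping, then $v_iv_k$ is wrapping. To prove it I would restrict to the $K_3$ on $\{v_i,v_j,v_k\}$; since adjacent edges of a simple drawing do not cross, its three edges form a simple closed curve $C$. Suppose, for contradiction, that $v_iv_k$ is direct, traverse $C$ along $v_i\to v_j\to v_k\to v_i$, and follow a continuous lift of the angular coordinate. Inspecting the three endpoints shows that the wrapping edge $v_iv_j$ leaves $v_i$ toward decreasing angle, the wrapping edge $v_jv_k$ leaves $v_j$ toward decreasing angle, and the direct edge $v_kv_i$ leaves $v_k$ toward decreasing angle; the three angular decrements are $2\pi-(\alpha_j-\alpha_i)$, $2\pi-(\alpha_k-\alpha_j)$, and $\alpha_k-\alpha_i$, which sum to $4\pi$. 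Hence the total angular variation of the direction along $C$ is $4\pi$, so $C$ has winding number $\pm 2$ about $O$. But $O$ lies on no edge, and a simple closed curve has winding number in $\{-1,0,1\}$ about any point off it -- contradiction. (If $v_iv_k$ were wrapping as well, the last term would be replaced by an \emph{increment} of $2\pi-(\alpha_k-\alpha_i)$, the total variation would be $2\pi$, and no contradiction would arise -- consistent with $\prec$ being a genuine order.) Thus $\prec$ is a partial order on $\{v_1,\dots,v_n\}$.

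It then remains to read off the statement. By Theorem~\ref{thm:dilworth}, since $(s-1)(t-1)+1\le n=|\{v_1,\dots,v_n\}|$, the poset $(\{v_1,\dots,v_n\},\prec)$ contains a chain of size $s$ or an antichain of size $t$. In a chain every two vertices are joined by a wrapping edge, so the induced subdrawing is {\cmonotone} and $r_0$ crosses each of its edges; by definition it is a {\gtwisted} $K_s$. In an antichain every two vertices are joined by a direct edge, so $r_0$ crosses none of its edges, and by the discussion preceding Observation~\ref{cor:monotone} the induced subdrawing is monotone, i.e.\ a monotone $K_t$. For the ``in particular'' part, with $s=t=\lceil \sqrt{n} \rceil$ we have $1\le s,t\le n$ and $(s-1)(t-1)+1=(\lceil \sqrt{n} \rceil-1)^2+1\le n$, because $\lceil \sqrt{n} \rceil-1<\sqrt{n}$; so the hypothesis applies and produces the desired $K_s$.
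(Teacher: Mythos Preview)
Your proof is correct and follows the same high-level strategy as the paper: define $v_i\prec v_j$ when $i<j$ and the edge $v_iv_j$ crosses the ray, verify that this is a partial order, and apply Dilworth's Theorem. Chains then induce {\gtwisted} subdrawings and antichains monotone ones, exactly as in the paper.

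The one substantive difference lies in how transitivity is established. The paper argues by a direct geometric case analysis: assuming $v_iv_j$ and $v_jv_k$ both cross $r$, it distinguishes two cases according to where $v_jv_i$ crosses the ray $r_k$, constructs a region $Q$ bounded by pieces of $v_jv_i$, $v_jv_k$ and two segments from $O$, and shows that $v_iv_k$ cannot lie in the ``short'' wedge between $r_i$ and $r_k$ without violating simplicity or {\cmonotonicity}. Your argument instead is topological: you observe that {\cmonotonicity} forces the angular coordinate to be monotone along each edge, compute that the hypothetical triangle (two wrapping edges and one direct edge) would have total angular variation $\pm4\pi$ about $O$, and invoke the fact that a Jordan curve has winding number in $\{-1,0,1\}$ about any point off it. This is a cleaner and more conceptual route to the same conclusion; it avoids the case split entirely and makes transparent \emph{why} transitivity holds (the winding-number obstruction), at the mild cost of invoking a topological fact rather than staying purely pictorial. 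Either way, the remainder of the argument and the ``in particular'' computation match the paper.
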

	
	The full proof of Theorem~\ref{thm:subdrawing} can be found in Appendix~\ref{appendix:thm_subdrawing}
	\begin{proof}[Proof Sketch]
		Without loss of generality we may assume that the vertices of $D$ appear counterclockwise around $O$ in the order $v_1,v_2,\ldots ,v_n$. Let $r$ be a ray emanating from $O$, keeping $v_1$ and~$v_n$ on different sides. We define an order, $\preceq $, in this set of vertices as follows:  $v_i\preceq v_j$ if and only if either $i=j$ or $i<j$ and the edge $(v_i,v_j)$ crosses $r$.
		
		We show that $\preceq$ is a partial order. The relation is clearly reflexive and antisymmetric. Besides, if $v_i\preceq v_j$ and $v_j\preceq v_k$, then $v_i\preceq v_k$, because $i<j$ and $j<k$ imply $i<k$, and if $v_iv_j$ and $v_jv_k$ cross $r$, then $v_iv_k$ also crosses $r$ (see Figure~\ref{fig:PartialOrder}). Hence, the relation is transitive.
		
		\begin{figure}
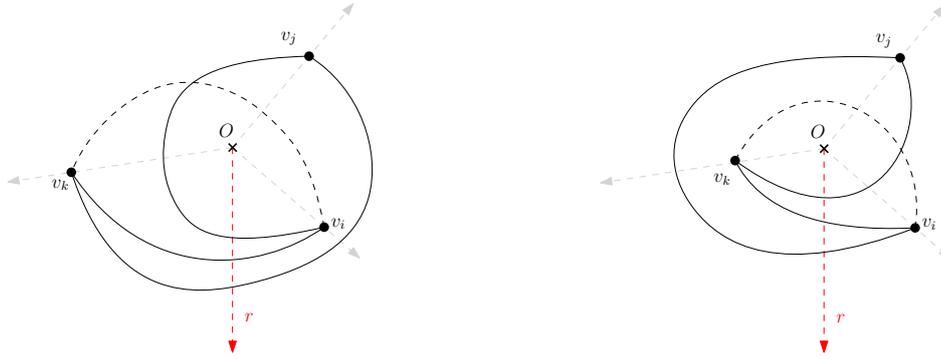

			\begin{subfigure}[b]{0.45\textwidth}
				\centering
				\includegraphics[scale=0.6, page=6]{figures.pdf}
			\end{subfigure}
			\hfill
			\begin{subfigure}[b]{0.45\textwidth}
				\centering
				\includegraphics[scale=0.6, page=8]{figures.pdf}
			\end{subfigure}
			\caption{If edges $v_iv_j$ and $v_jv_k$ cross $r$ in a c-monotone drawing, then $v_iv_k$ must also cross~$r$.}\label{fig:PartialOrder}
		\end{figure}
		
		In this partial order $\preceq $, a chain consists of a subset $v_{i_1},\ldots ,v_{i_{s-1}}$ of pairwise comparable vertices, that is, a subset of vertices such that their induced subdrawing is {\gtwisted} (all edges cross $r$). An antichain, $v_{j_1},\ldots ,v_{j_{t-1}}$, consists of a subset of pairwise incomparable vertices, that is, a subset of vertices such that their induced subdrawing is monotone (no edge crosses $r$). 
		Therefore, the first part of the theorem follows from applying Theorem~\ref{thm:dilworth} to the set of vertices of $D$ and the partial order $\preceq $.
		
		Finally, observe that if $s=t\leq \lceil \sqrt{n}\rceil $, then $(s-1)(t-1)+1\leq n$. Thus, $D$ contains a complete subgraph $K_{\lceil \sqrt{n}\rceil}$ whose induced subdrawing is either  {\gtwisted} or monotone.
	\end{proof}
	
	Combining Theorems~\ref{thm:twisted} and~\ref{thm:subdrawing} with Observation~\ref{cor:monotone}, we obtain the following theorem.
	
	\begin{theorem}\label{the:pathcmonotone}
		Every c-monotone drawing of $K_n$ contains a plane path of length~$\Omega (\sqrt{n})$.
	\end{theorem}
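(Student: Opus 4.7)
The plan is to combine the three earlier results in a straightforward way. First, I would apply Theorem~\ref{thm:subdrawing} with the specific choice $s=t=\lceil\sqrt{n}\rceil$. Since $(s-1)(t-1)+1 \leq n$ for this choice, the theorem guarantees that any c-monotone drawing of $K_n$ contains a subdrawing on $\lceil\sqrt{n}\rceil$ vertices that is either generalized twisted or monotone.

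Next, I would handle the two cases separately, each of which is immediate. If the subdrawing is generalized twisted, Theorem~\ref{thm:twisted} provides a plane Hamiltonian path on its $\lceil\sqrt{n}\rceil$ vertices, which has length $\lceil\sqrt{n}\rceil -1$. If instead the subdrawing is monotone, Observation~\ref{cor:monotone} provides a plane Hamiltonian path of the same length. In either case, the resulting path uses only edges of the subdrawing, hence is a plane path in the original drawing $D$ as well. Its length is $\lceil\sqrt{n}\rceil - 1 = \Omega(\sqrt{n})$, finishing the proof.

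Since both of the case-specific ingredients (Theorem~\ref{thm:twisted} and Observation~\ref{cor:monotone}) are already established, and Theorem~\ref{thm:subdrawing} has already done the Dilworth-based partitioning work, there is essentially no obstacle here — the proof is a one-line consequence of stitching these three facts together. The only thing worth double-checking is that a plane path in the induced subdrawing remains plane in $D$, but this is automatic: the edges of the subdrawing are edges of $D$, and non-crossing among them is a property of $D$.
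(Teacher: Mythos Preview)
Your proposal is correct and matches the paper's approach exactly: the paper states Theorem~\ref{the:pathcmonotone} as an immediate consequence of combining Theorem~\ref{thm:subdrawing} (with $s=t=\lceil\sqrt{n}\rceil$), Theorem~\ref{thm:twisted}, and Observation~\ref{cor:monotone}, which is precisely what you do.
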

	
	\subsection{Plane Paths in Simple Drawings}\label{sec:planepathsgen}
	
	To show that any simple drawing of $K_n$ contains a plane path of length~$\Omega(\frac{\log n }{\log \log n})$, 
	we will 
	use $d$-ary trees.
	A $d$-ary tree is a rooted tree in which no vertex has more than $d$ children. It is well-known that the height of a $d$-ary tree on $n$ vertices is $\Omega(\frac{\log n}{\log d})$. 
	
	\begin{proof}[Proof of Theorem~\ref{the:bound}]
		Let $v$ be a vertex of $D$ and let $S(v)$ be the star centered at $v$, that is, the set of edges of $D$ incident to $v$. $S(v)$ can be extended to a maximal plane subdrawing $H$ that must be biconnected by Theorem~\ref{thm:2_connected}. See Figure~\ref{fig:S_v} for a depiction of $S(v)$ and $H$.
		
		\begin{figure}[!htb]
			\centering
			\includegraphics[scale=0.5, page=7]{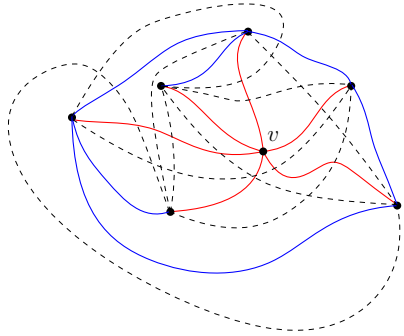}
			\caption{A simple drawing of $K_7$. The red edges show the star $S(v)$, the red and blue edges together form a maximal plane subdrawing $H$. Dashed edges are edges of $K_7$ that are not in~$H$.}\label{fig:S_v}
		\end{figure}
		
		Assume first that there is a vertex $w$ in $H\setminus v$ that has degree at least $(\log n)^2$ in $H$. Let $U_{vw}$ be the set of vertices neighboured in $H$ to both, $v$ and $w$. Note that $|U_{vw}| \ge (\log n)^2$.  The subdrawing $H'$ of $H$ consisting of the vertices in $U_{vw}$, the vertices $v$, and $w$, and the edges from $v$ to vertices in $U_{vw}$, and from $w$ to vertices in $U_{vw}$ is a plane drawing of $K_{2,|U_{vw}|}$. From Lemma~\ref{lem:quasi_c_monotone}, it follows that the subdrawing of $D$ induced by $U_{vw}$ is weakly isomorphic to a {\cmonotone} drawing.
		Therefore, by Theorem~\ref{the:pathcmonotone}, the subdrawing induced by $U_{vw}$ contains a plane path of length~$\Omega (\sqrt{|U_{vw}|}) = \Omega (\log n)$.
		
		Assume now that the maximum degree in $H\setminus v$ is less than $(\log n)^2$. Since $H$ is biconnected, $H\setminus v$ contains a plane tree $T$ of order $n-1$ whose maximum degree is at most $(\log n)^2$. Thus, considering that $T$ is rooted, the diameter of $T$ is at least $\Omega(\frac{\log n }{\log \log n})$.
		Therefore, since $T$ is plane, it contains a plane path of length at least $\Omega(\frac{\log n }{\log \log n})$ and the theorem follows.
	\end{proof}
	
	\section{Characterizing Generalized Twisted Drawings}\label{sec:char}
	In previous sections, we have seen how generalized twisted drawings were used to make progress on open problems of simple drawings.
	In addition to this, generalized twisted drawings are also interesting in their own right and have some quite surprising structural properties.
	Despite the fact that research on generalized twisted drawings is rather recent and still ongoing, there are already several interesting characteristics and structural results. 
	Some of them will be presented in this section.

	One characterization involves curves crossing every edge once. From the definition of {\gtwisted} drawing (see Figure~\ref{fig:example_gtwisted}), there always exists a simple curve that crosses all edges of the drawing exactly once (for instance, a curve that starts at $O$ and follows $r$ until it reaches a point $Z$ on $r$ in the unbounded cell). 
	In Theorem~\ref{thm:characterizations}, we show that the converse is also true. That is, every simple drawing $D$ of $K_n$ in which we can add a simple curve that crosses every edge of $D$ exactly once is weakly isomorphic to a {\gtwisted} drawing.

	Another characterization is based on what we call \emph{antipodal vi-cells}.
	For any three vertices in a simple drawing $D$ of $K_n$, the three edges connecting them form a simple cycle which we call a \emph{triangle}.
	Every such triangle partitions the plane (or sphere) into two disjoint regions which are the \emph{sides} of the triangle (in the plane a bounded and an unbounded one).
	Two cells of $D$ are called \emph{antipodal} if for each triangle of $D$, they lie on different sides.
	Further, we call a cell with a vertex on its boundary a vertex-incident-cell or, for short, a \emph{vi-cell}.
	
	By definition, every {\gtwisted} drawing $D$ contains two antipodal cells, namely, the cell containing the starting point of the ray $r$ and the unbounded cell.
	This follows from the fact that the ray $r$ crosses every edge exactly once.
	Hence, $r$ crosses the boundary of any triangle exactly three times, so the cells containing the ``endpoints'' of $r$ must be on different sides of the triangle.

	\begin{figure}[h!tb]
		\centering
		\includegraphics[page=1, scale=0.6]{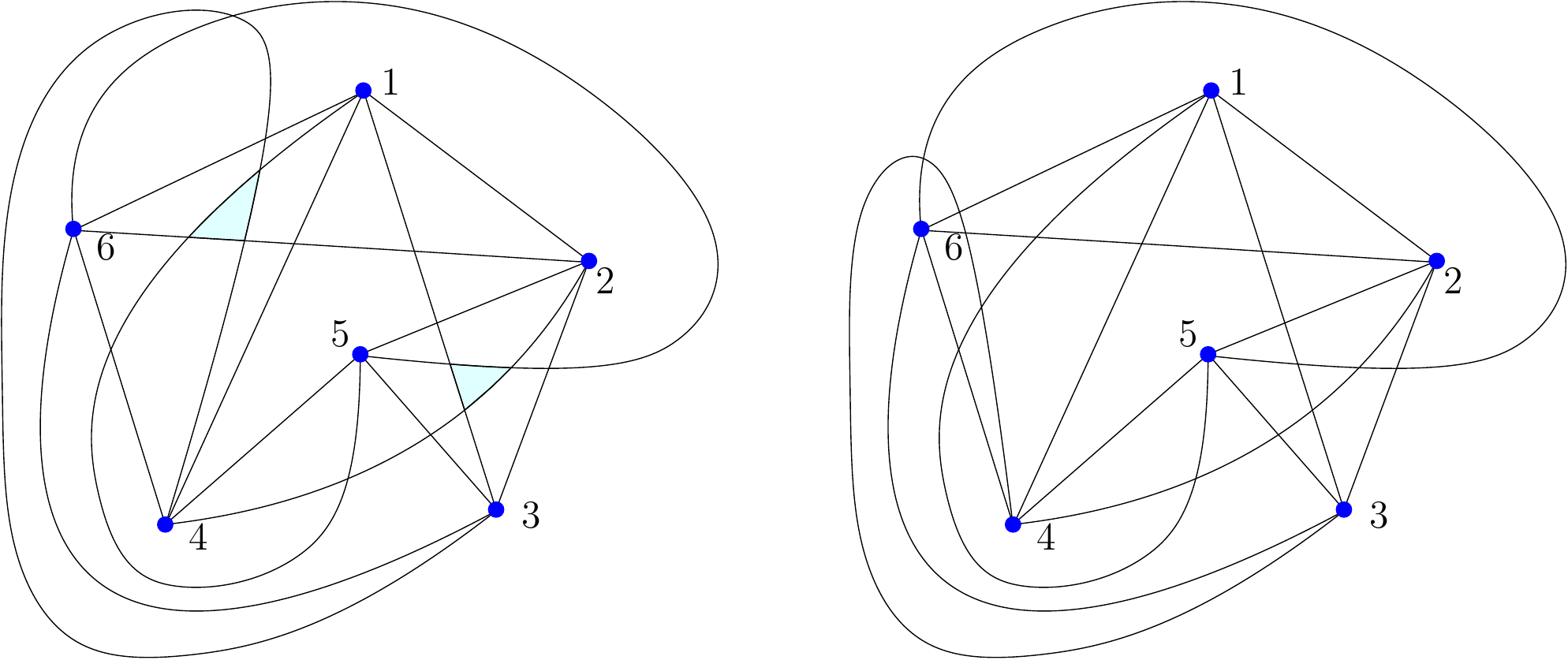}
		\caption{Two weakly isomorphic drawings of~$K_6$ that are not weakly isomorphic to any {\gtwisted} drawing. 			
		Antipodal cells are marked in blue. 
	}
		\label{fig:special_k6}
	\end{figure}
	
	It turns out that the converse (existence of two antipodal cells implies weakly isomorphic to \gtwisted) is not true. 
	Figure~\ref{fig:special_k6}~(left) shows a drawing of $K_6$ that contains two antipodal cells, but no antipodal vi-cells. 
	From Theorem~\ref{thm:characterizations} bellow it will follow that such drawings cannot be weakly isomorphic to a {\gtwisted} drawing. 
	However, we observed that for all generalized twisted drawings of $K_n$ with $n\leq 6$, both, the cell containing the startpoint of the ray $r$ and the unbounded cell,
	are vi-cells.
	Figure~\ref{fig:gtwisted_k6} shows all (up to strong isomorphism) simple drawings of~$K_6$ that are weakly isomorphic to generalized twisted drawings.
	We show that this is true in general. More than that, we show in Theorem~\ref{thm:vicells} that every drawing of~$K_n$ that is weakly isomorphic to a {\gtwisted} drawing contains a pair of antipodal vi-cells.
	In the other direction, we show
	in Theorem~\ref{thm:characterizations} that 
	every simple drawing containing a pair of antipodal vi-cells is weakly isomorphic to a {\gtwisted} drawing.
	
		\begin{figure}[htb]
		\centering
		\includegraphics[page=8, scale=0.68]{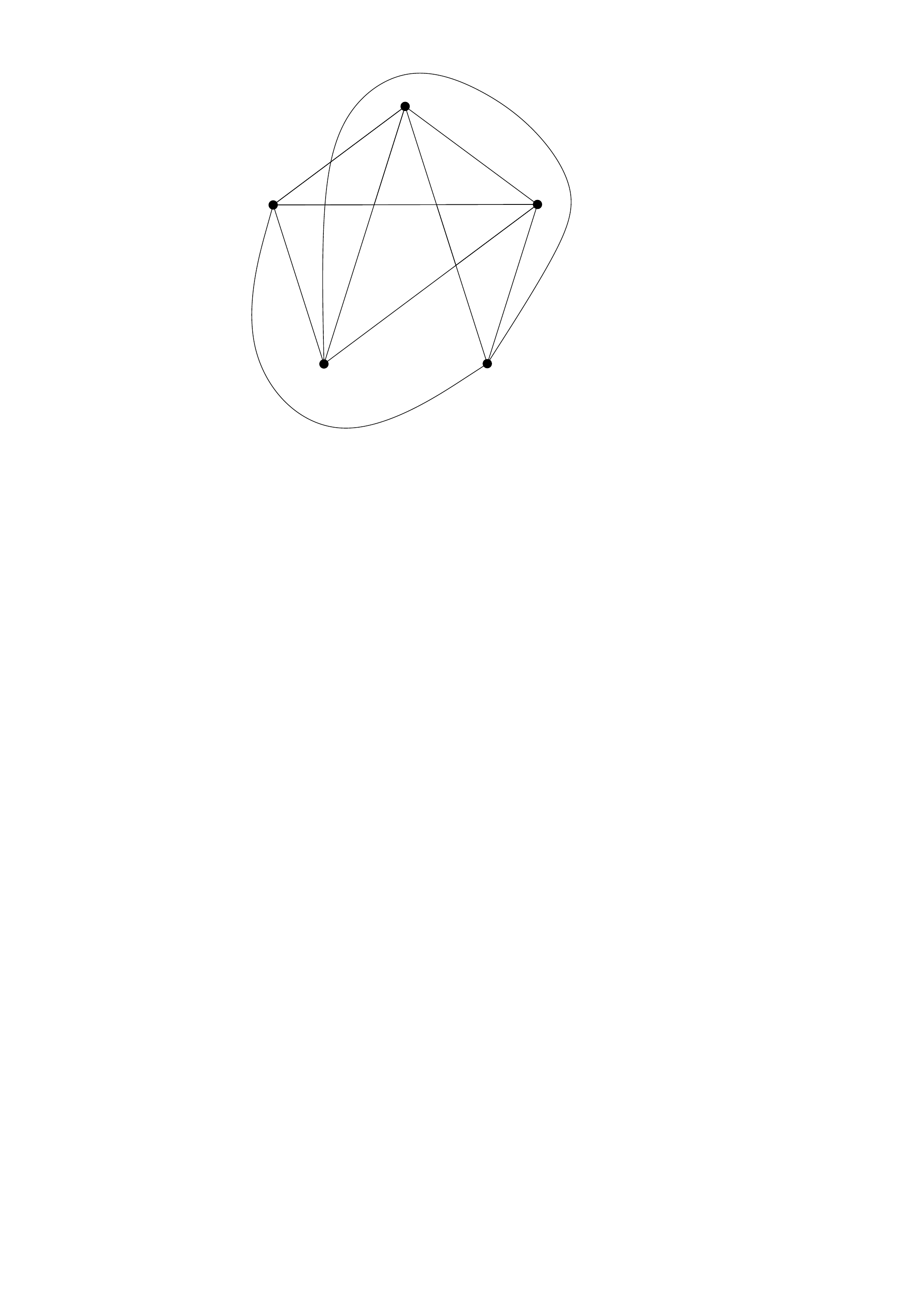}
		\caption{All different generalized twisted drawings of $K_6$ (up to weak isomorphism). The rightmost drawing is twisted.}
		\label{fig:gtwisted_k6}
	\end{figure}

	The final characterization is based on the extension of a given drawing of the complete graph to a drawing containing a spanning, plane bipartite graph that has all vertices of the original drawing on one side of the bipartition.
	From the definition of generalized twisted drawings, it follows that any genereralized twisted drawing $D$ of $K_n$ can be extended to a simple drawing $D'$ of $K_{n+2}$ including new vertices $O$ and $Z$ such that $D'$ contains a plane drawing of a spanning bipartite graph. One side of the bipartition consists of all vertices in $D$ and the other side of the bipartition consists of the new vertices $O$ and $Z$. Moreover, the edge $OZ$ crosses all edges of $D$.
	One way to add the new vertices and edges incident to them is to draw (1) the vertex $O$ at point $O$, (2) the vertex $Z$ in the unbounded cell on the ray $r$, (3) the edge $OZ$ straight-line (along the ray $r$), (4) edges from $O$ to the vertices of $D$ straight-line (along the inner segment of the rays crossing through the vertices), and (5) edges from $Z$ to the vertices of $D$ first far away in a curve and the final part straight-line (along the outer segment of the rays crossing through the vertices). The converse, that every drawing that can be extended like this is weakly isomorphic to a {\gtwisted} drawing, has already been shown in Lemma~\ref{lem:quasi_c_monotone}.
	
	We show the following characterizations.
	
	\begin{theorem}[Characterizations of generalized twisted drawings]\label{thm:characterizations}
		Let $D$ be a simple drawing of $K_n$. Then, the following properties are equivalent.
		\begin{enumerate}[leftmargin=*,label={\emph{Property~\arabic*}}]
			\item\label{char:weak_iso} $D$ is weakly isomorphic to a generalized twisted drawing.
			\item\label{char:antipodal} $D$ contains two antipodal vi-cells.
			\item\label{char:curve} $D$ can be extended by a simple curve $c$ such that $c$ crosses every edge of $D$ exactly once.
			\item\label{char:bipartite} $D$ can be extended by two vertices, $O$ and $Z$, and edges incident to the new vertices such that $D$ together with the new vertices and edges is a simple drawing of $K_{n+2}$, the edge $OZ$ crosses every edge of $D$, and no edge incident to $O$ crosses any edge incident to $Z$. 		
		\end{enumerate}
	\end{theorem}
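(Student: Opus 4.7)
The plan is to establish the equivalence through the cyclic chain of implications
$\ref{char:bipartite}\Rightarrow\ref{char:weak_iso}\Rightarrow\ref{char:antipodal}\Rightarrow\ref{char:curve}\Rightarrow\ref{char:bipartite}$,
so that each of the four properties implies the next. Two of these steps are short. For $\ref{char:bipartite}\Rightarrow\ref{char:weak_iso}$, the extension provided by Property~\ref{char:bipartite} is by construction a plane spanning drawing of $K_{2,n}$ with parts $\{O,Z\}$ and $V(D)$ in which every edge of $D$ crosses the edge $OZ$; Lemma~\ref{lem:quasi_c_monotone} applied to this bipartite subdrawing immediately yields that $D$ is weakly isomorphic to a generalized twisted drawing. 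The step $\ref{char:weak_iso}\Rightarrow\ref{char:antipodal}$ is exactly Theorem~\ref{thm:vicells}, which is established separately.

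For $\ref{char:curve}\Rightarrow\ref{char:bipartite}$, I would place $O$ at one endpoint of $c$ and $Z$ at the other, so that $c$ itself serves as the edge $OZ$ and by assumption crosses every edge of $D$ exactly once. To add the $2n$ star-edges of $O$ and $Z$ without producing any $O$-edge/$Z$-edge crossing, I would work in a thin tubular neighborhood of $c$: for each vertex $v$, identify the first edge $e_v$ of $D$ incident to $v$ that $c$ crosses while travelling from $O$, route $Ov$ along the $O$-side of $c$ until just before the crossing with $e_v$, and then detour along one side of $e_v$ to reach $v$; the edge $Zv$ is built symmetrically from the $Z$-side. Because all $O$-edges stay on the $O$-side of $c$ and all $Z$-edges on the $Z$-side, no $O$-edge meets any $Z$-edge, and choosing the neighborhoods thin enough (and detours close to $D$) keeps the whole drawing simple.

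The main obstacle is $\ref{char:antipodal}\Rightarrow\ref{char:curve}$. Given antipodal vi-cells $f_O$ and $f_Z$ with boundary vertices $v_O$ and $v_Z$, my plan is to take a simple curve $c$ from an interior point of $f_O$ to an interior point of $f_Z$ that minimizes the total number of crossings with $D$, and to show that every edge of $D$ is crossed by $c$ exactly once. The \emph{at most once} direction is a standard bigon argument: two crossings of $c$ with the same edge $e$ would bound a disk, and pushing $c$ across $e$ would reduce the crossing count by at least two, contradicting minimality. For \emph{at least once}, antipodality of $f_O$ and $f_Z$ forces $c$ to cross the boundary of every triangle of $D$ an odd number of times, so together with \emph{at most once} the curve $c$ meets exactly one or three edges of every triangle. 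Supposing for contradiction that some edge $e=uv$ is not crossed, the parity constraint partitions $V(D)\setminus\{u,v\}$ into two classes according to whether $c$ crosses $uw$ or $vw$; a careful analysis involving the triangles containing $v_O$ or $v_Z$ and exploiting that $v_O$ and $v_Z$ lie on the boundaries of $f_O$ and $f_Z$ should then yield a local modification of $c$ that removes two crossings, again contradicting minimality. The actual use of the vi-cell property (rather than mere antipodality of cells, which is known from Figure~\ref{fig:special_k6} not to suffice) is where I expect the bulk of the technical work to lie.
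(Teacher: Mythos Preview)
Your cyclic scheme and the two short steps ($\ref{char:bipartite}\Rightarrow\ref{char:weak_iso}$ via Lemma~\ref{lem:quasi_c_monotone}, and $\ref{char:weak_iso}\Rightarrow\ref{char:antipodal}$ via Theorem~\ref{thm:vicells}) match the paper exactly. The other two steps, however, have genuine gaps.

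\textbf{On $\ref{char:antipodal}\Rightarrow\ref{char:curve}$.} Your ``at most once'' bigon argument does not go through as stated: a bigon bounded by an arc of $c$ and an arc of an edge $e$ may contain vertices of~$D$, and then rerouting $c$ along $e$ trades crossings with $e$ for crossings with the edges leaving those vertices through the $e$-arc, so minimality is not violated. Your ``at least once'' part is only a hope; you correctly note that bare antipodality is not enough (Figure~\ref{fig:special_k6}), but you do not say how the vi-cell hypothesis enters. The paper's key idea, which you are missing, is to anchor the curve at the vertex~$v_1$ on the boundary of one vi-cell and to minimise crossings \emph{among curves that avoid $S(v_1)$ entirely}. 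Then for any edge $w_2w_3\notin S(v_1)$, the triangle $v_1w_2w_3$ forces an odd number of crossings with $w_2w_3$ specifically (so ``at least once'' is free), and the vi-cell hypothesis combined with a lemma on triangles through $v_1$ shows that any lens over $w_2w_3$ is vertex-free, so the rerouting really does reduce crossings. A final wrap around $v_1$ picks up the $S(v_1)$ edges. Without this anchoring trick both halves of your argument stall.

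\textbf{On $\ref{char:curve}\Rightarrow\ref{char:bipartite}$.} Your construction does not give a simple drawing, and the claimed separation into an ``$O$-side'' and a ``$Z$-side'' of $c$ does not survive the detours. Once $Ov$ leaves the tubular neighbourhood of $c$ and follows $e_v$ towards~$v$, it is no longer confined to one side of~$c$ (indeed, it may be forced to cross $c=OZ$, which is already forbidden since both are incident to~$O$); and an edge of $D$ can cross the $c$-portion of $Ov$ and later the $e_v$-portion of $Ov$, producing a double crossing. The paper's construction is substantially more delicate: it first classifies each oriented edge as \emph{top} or \emph{bottom} according to the rotation at its crossing with $c$, proves that there is a vertex all of whose outgoing edges are top, and iterates this to obtain a total order $w_1,\ldots,w_n$ with $w_iw_j$ top iff $i<j$. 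The stars of $O$ and $Z$ are then built inductively along this order, with two different routing rules depending on how the extremal top edge of $S(w_i)$ meets the already-built $Ow_{i-1}$ and $Zw_{i-1}$, and a lengthy case analysis is needed to verify simplicity and the nesting of the regions $OZw_i$. Your one-line recipe does not capture this.
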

	
	To prove Theorem~\ref{thm:characterizations}, we will first show that \ref{char:weak_iso} implies \ref{char:antipodal} (Theorem~\ref{thm:vicells}). We next show that \ref{char:antipodal} implies \ref{char:curve} (Theorem~\ref{thm:antipodal_then_gtwisted}). Then, we show that \ref{char:curve} implies \ref{char:bipartite} (Theorem~\ref{thm:curve_then_gtwisted}). By Lemma~\ref{lem:quasi_c_monotone}, \ref{char:bipartite} implies \ref{char:weak_iso}. Thus, all properties are equivalent. In a full version of this work, we will extend the theorem to show that also strong isomorphism to a generalized twisted drawing is equivalent to the properties of Theorem~\ref{thm:characterizations}. We show this by proving that any simple drawing of $K_n$ fulfilling \ref{char:bipartite} is strongly isomorphic to a generalized twisted drawing. However, the reasoning for strong isomorphism is quite lengthy and would exceed the space constraints of this submission.
	
	In the remaining parts of this section, we will show sketches of the proofs of the above mentioned theorems. The full proofs can be found in the Appendix (Theorem~\ref{thm:vicells} in Appendix~\ref{appendix:vicell}, Theorem~\ref{thm:antipodal_then_gtwisted} in Appendix~\ref{appendix:antipodal_gtwisted}, and Theorem~\ref{thm:curve_then_gtwisted} in Appendix~\ref{appendix:char_curve}).

	\begin{restatable}{theorem}{thmvicells}\label{thm:vicells}
		Every simple drawing of $K_n$ which is weakly isomorphic to a {\gtwisted} drawing of $K_n$, with $n\ge 3$, contains a pair of antipodal vi-cells. In {\gtwisted} drawings the cell containing $O$ and the unbounded cell form such a pair.
	\end{restatable}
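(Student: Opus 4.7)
The plan is to first prove that in a generalized twisted drawing $D$, the cell $C_O$ containing $O$ and the unbounded cell $C_\infty$ form a pair of antipodal vi-cells---directly establishing the second sentence of the theorem---and then extend the statement to drawings weakly isomorphic to $D$.

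For antipodality in $D$, I would extend the ray $r$ to a simple curve $c$ starting at $O$ (in $C_O$) and ending at a point at infinity in $C_\infty$; since $r$ already crosses every edge of $D$ exactly once, so does $c$. Hence for any triangle $T$ of $D$, $c$ crosses the boundary of $T$ exactly three times---an odd number---so the two endpoints of $c$ lie on different sides of $T$, giving antipodality of $C_O$ and $C_\infty$.

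For the vi-cell property, $C_\infty$ contains a vertex on its boundary by the standard topological argument that the outer face of any finite simple drawing of a connected graph contains a vertex on its boundary. The delicate case is $C_O$. My plan is to show that at least one of $v_1$ or $v_n$---the vertices flanking the direction of $r$ in the counterclockwise order around $O$---lies on the boundary of $C_O$. I would consider the angular sector around $O$ between the directions to $v_n$ and $v_1$ that contains $r$; no vertex of $D$ lies inside this sector. Sweeping rays from $O$ across it, the boundary of $C_O$ is traced out by \emph{first-hit} arcs of edges, with transitions at edge crossings. Combining c-monotonicity with the fact that every edge of $D$ crosses $r$ (so every edge's angular arc from $O$ contains the direction of $r$), I would argue that this trace cannot close up without reaching $v_1$ or $v_n$; otherwise a structural contradiction arises, such as two edges forced to cross more than once or a violation of the total order on vertices induced by the crosses-$r$ relation.

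For the extension to drawings $D'$ weakly isomorphic to $D$, the key observation is that for each triangle $T$ of the underlying complete graph, the bipartition of the remaining vertices into those on the same side of $T$ as a given vertex and those on the opposite side is determined purely by the parity of crossings with the edges of $T$, and hence is preserved under weak isomorphism. Using vertices on the boundaries of $C_O$ and $C_\infty$ in $D$ as combinatorial witnesses whose triangle-side behavior consistently places them on opposite sides of every triangle, the same combinatorial data in $D'$ lets me identify a pair of antipodal vi-cells. The main obstacle will be the vi-cell argument for $C_O$, requiring a careful case analysis to rule out the possibility that its boundary consists purely of edge arcs meeting at crossings without ever reaching a vertex; I expect the weak-isomorphism extension to be more routine once the structural result is in hand.
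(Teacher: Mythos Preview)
Your antipodality argument for $C_O$ and $C_\infty$ is correct and matches the paper. The rest of the proposal, however, has genuine gaps.

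\textbf{The unbounded cell.} Your claim that the outer face of a finite simple drawing of a connected graph always has a vertex on its boundary is false. Think on the sphere: any cell---including one bounded only by edge arcs meeting at crossings---can be made the unbounded cell by choosing the point of stereographic projection inside it. So the ``standard topological argument'' you invoke does not exist; the vi-cell property of $C_\infty$ needs a proof specific to generalized twisted drawings. The paper handles it symmetrically to the $C_O$ case, looking at the \emph{last} edge crossing $r$ rather than the first.

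\textbf{The cell containing $O$.} The paper does not show that $v_1$ or $v_n$ lies on $\partial C_O$. Instead it takes the \emph{first} edge (along $r$ from $O$) that $r$ meets, proves this edge is $v_iv_{i+1}$ for some consecutive pair, shows the region bounded by $v_iv_{i+1}$, $\overline{Ov_i}$, $\overline{Ov_{i+1}}$ is empty of vertices, and then by a case analysis rules out that both segments $\overline{Ov_i}$ and $\overline{Ov_{i+1}}$ are crossed by edges of $D$. The uncrossed segment witnesses that $v_i$ (or $v_{i+1}$) is on $\partial C_O$. Your sweeping argument in the sector between $v_n$ and $v_1$ is too vague to assess, and since the paper's $i$ need not be $1$ or $n-1$, it is not clear your target vertices are even the right ones.

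\textbf{Weak isomorphism.} The paper's route is entirely different: it invokes Gioan's theorem that weakly isomorphic drawings of $K_n$ are connected by triangle-flips (plus possibly a reflection), and observes that a triangle-flip is only ever applied to a cell with no vertex on its boundary, so the pair of antipodal vi-cells survives each flip. Your combinatorial-witness idea is not obviously workable: even if the side-of-triangle data of a vertex $v$ transfers under weak isomorphism (and for triangles through $v$ this is already delicate), you still have to locate, in $D'$, a specific \emph{cell} incident to $v$ that lies on the correct side of every triangle. You have not said how to do that, and it is the heart of the matter.
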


	\begin{proof}[Proof sketch]
		\begin{figure}[h!tb]
			\centering
			\includegraphics[page=37,scale=0.6]{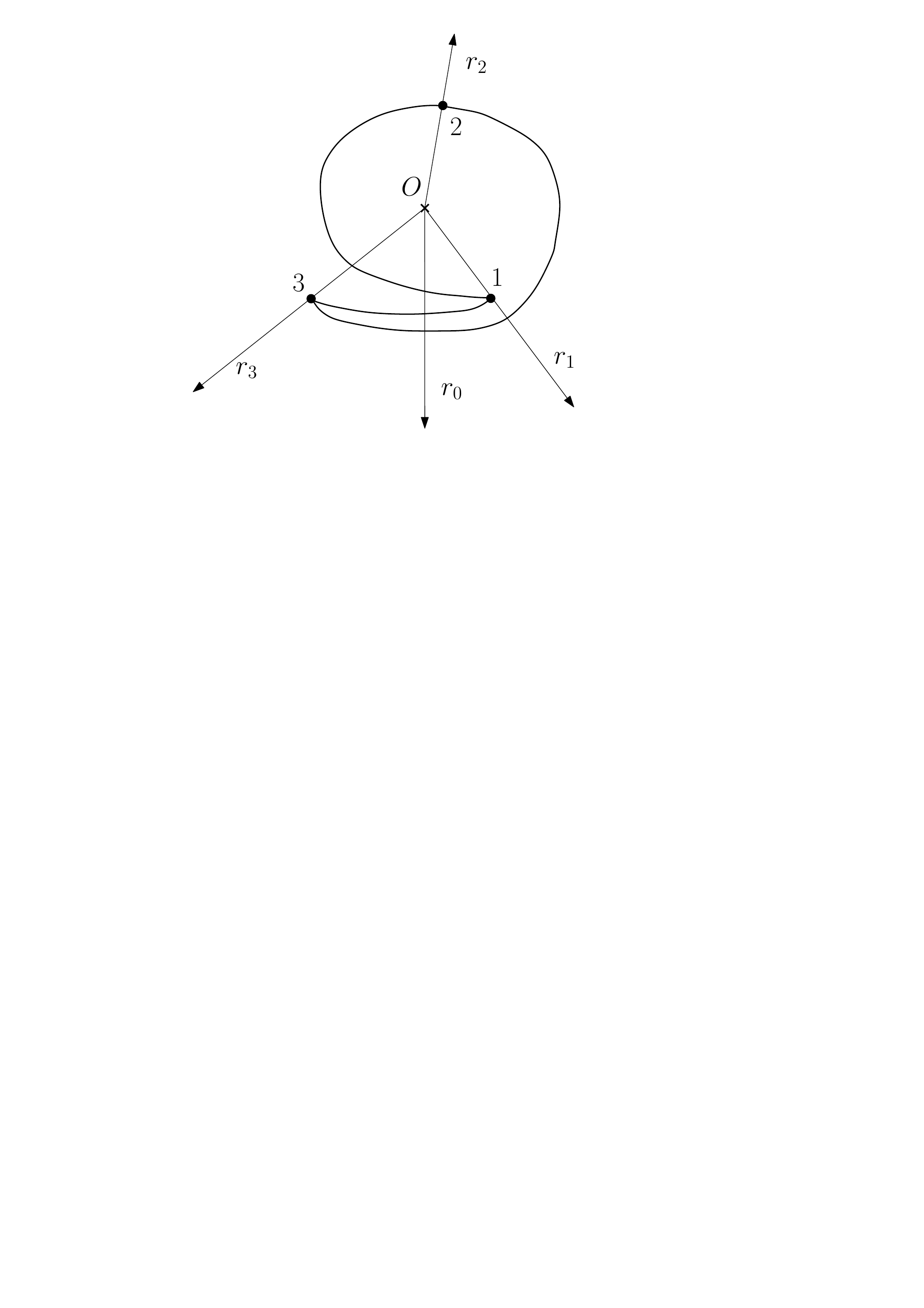}
			\caption{Left: If there is a vertex $v_l$ in $R$, it cannot be connected to $v_i$ without crossing $r$ before~$x$. Right: If the edge $v_jv_k$ crosses the segment $\overline{Ov_i}$ and the edge $v_{j'}v_{k'}$ crosses the segment $\overline{Ov_{i+1}}$, then there is no way of connecting $v_{i+1}$ and $v_{j'}$.}
			\label{fig:vicells}
		\end{figure}
		
		We first show that every {\gtwisted} drawing $D$ of $K_n$, with $n\ge 3$, contains a pair of antipodal vi-cells, where $O$ lies in a cell of that pair.
		Let $c$ be the segment $OZ$, where $Z$ is a point on $r$ in the unbounded cell. By definition of {\gtwisted}, $c$ crosses every edge of $D$ once, so $O$ and $Z$ are in two antipodal cells $C_1$ and $C_2$, respectively.
		
		To prove that $C_1$ is a vi-cell, we use the following properties. First, if we take the first edge~$v_iv_k$ that crosses~$c$ (as seen from~$O$) at point~$x$, then we can prove that $k=i+1$ and the bounded region~$R$ defined by the edge~$v_iv_{i+1}$ and the segments~$\overline{Ov_i}$ and~$\overline{Ov_{i+1}}$ is empty (see Figure~\ref{fig:vicells}, left). Second, using this empty region we can prove that $D$ cannot contain simultaneously an edge~$v_jv_k$ crossing $\overline{Ov_i}$ and another edge~$v_{j'}v_{k'}$ crossing~$\overline{Ov_{i+1}}$ (see Figure~\ref{fig:vicells}, right). Therefore, at least one of the segments~$\overline{Ov_i}$ and~$\overline{Ov_{i+1}}$ is uncrossed, and~$O$ necessarily lies in a vi-cell (with either $v_i$ or $v_{i+1}$ on the boundary). Finally, arguing on the last edge crossing $c$ and the unbounded cell, we can show that $Z$ also lies in a vi-cell.
		
		To show that also every drawing which is weakly isomorphic to a generalized twisted drawing contains a pair of antipodal vi-cells, we use Gioan's Theorem~\cite{arroyo,gioan}. By Gioan's Theorem, any two weakly isomorphic drawings of $K_n$ can be transformed into each other with a sequence of triangle-flips and at most one reflection of the drawing. A \emph{triangle-flip} is an operation which transforms a triangular cell $\triangle$ that has no vertex on its boundary by moving one of its edges across the intersection of the two other edges of~$\triangle$. We show that if a drawing $D_1$ contains two antipodal vi-cells, then after performing a triangle flip on $D_1$, the resulting drawing $D_2$ still has two antipodal vi-cells. The main argument is that triangle-flips are only applied to cells without vertices on their boundary, and thus the antipodality of the vi-cells cannot change.
	\end{proof}
	
	\begin{figure}
		\centering
		\includegraphics[page=8, scale=0.9]{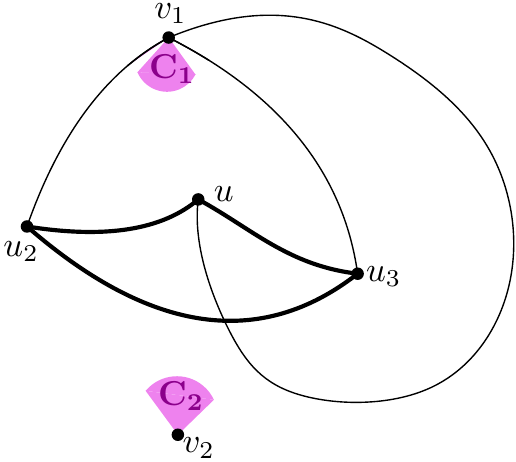}
		\caption{Building a curve such that it crosses every edge of $D$ once and its endpoints do not lie on any edges or vertices of $D$.}\label{fig:curvec}
	\end{figure}

	\begin{figure}
		\centering
		\includegraphics[page=10, scale=0.91]{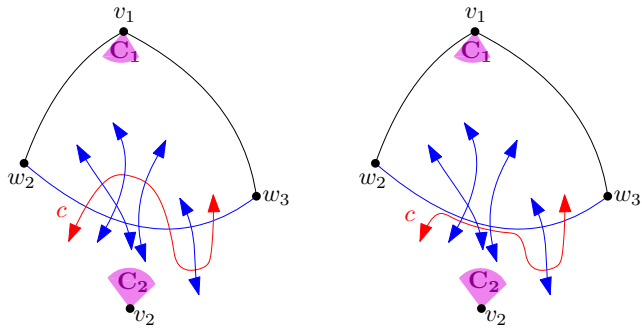}
		\caption{Decreasing the number of crossings between $c$ and the edge $w_2w_3$.}
		\label{fig:decreasing}
	\end{figure}

	\begin{restatable}{theorem}{thmantipodal}\label{thm:antipodal_then_gtwisted}
		In any simple drawing $D$ of $K_n$ that contains a pair of antipodal vi-cells, it is possible to draw a curve $c$ that crosses every edge of~$D$ exactly once.
	\end{restatable}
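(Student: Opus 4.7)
The plan is to build the curve $c$ in two stages: first produce an initial curve $c_0$ from a point in $C_1$ to a point in $C_2$ that crosses every edge of $D$ at least once (Figure~\ref{fig:curvec}), then iteratively reroute to remove excess crossings (Figure~\ref{fig:decreasing}) until every edge is crossed exactly once.

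For the initial stage, let $u_1, u_2$ be the vertices on the boundaries of $C_1, C_2$ provided by the vi-cell hypothesis, and start $c_0$ inside $C_1$ close to $u_1$ and end it inside $C_2$ close to $u_2$. The antipodality of $C_1, C_2$ is the key: it forces the boundary cycle of every triangle $abc$ of $D$ to separate $C_1$ from $C_2$, so any curve from $C_1$ to $C_2$ crosses that cycle an odd number of times. In $K_n$ the subspace of $\mathbb{F}_2^{E(D)}$ consisting of the vectors whose sum over every triangle is zero coincides with the cut space spanned by the stars of the vertices, so modulo this cut space the all-ones vector lies in the same coset as the crossing-parity vector of any curve from $C_1$ to $C_2$. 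Starting with any simple curve between the two cells and attaching small simple loops around a suitable subset of $V(D)$, I would obtain a simple curve $c_0$ of all-ones crossing parity, which in particular crosses every edge at least once.

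For the reduction stage, assume $c$ currently crosses an edge $e = w_2w_3$ at least twice, and pick two crossings $x_1, x_2$ of $c$ with $e$ that are consecutive along $c$. The sub-arc $\alpha$ of $c$ from $x_1$ to $x_2$ does not meet $e$ elsewhere, so $\alpha$ together with the sub-arc $\beta$ of $e$ from $x_1$ to $x_2$ bounds a disk $R$; by choosing an innermost such pair, $R$ can be taken to contain no vertex of $D$. Replace $\alpha$ by a curve running just outside $R$ parallel to $\beta$. This reroute removes exactly two crossings of $c$ with $e$, and since $D$ is simple every edge $e' \neq e$ meets $\beta$ at most once, so together with the vertex-freeness of $R$ the reroute changes the number of crossings of $c$ with each other edge by an even amount. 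The odd-parity invariant is preserved on every edge, no edge drops to zero crossings, the total crossing count strictly decreases, and the procedure terminates with every edge crossed exactly once.

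The main obstacle is the initial stage: antipodal cells alone do not suffice for the existence of such a $c$, as witnessed by the $K_6$ examples of Figure~\ref{fig:special_k6}, so the vi-cell hypothesis must be used essentially. I would use it to anchor the endpoints of $c_0$ near the specific vertices $u_1, u_2$, which is what makes the parity-correction by loops (and hence the realisation of the all-ones parity as a \emph{simple} curve) carry out cleanly. A secondary technical difficulty is confirming that the innermost rerouting disk is genuinely vertex-free and that the attached loops in the initial stage can be chosen pairwise disjoint from the rest of $c_0$; both require careful geometric arguments in the spirit of Figures~\ref{fig:curvec} and~\ref{fig:decreasing}.
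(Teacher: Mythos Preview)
The reduction stage has a genuine gap. Your claim that an innermost lens $R$ ``can be taken to contain no vertex of $D$'' is not justified, and in your setup there is no mechanism to enforce it. If a vertex $u$ lies inside $R$, then for any edge $uu'$ with $u'$ outside $R$ that edge crosses $\partial R = \alpha \cup \beta$ an odd number of times; since it meets $\beta \subset e$ at most once by simplicity, its crossings with $\alpha$ and with $\beta$ have opposite parities, and replacing $\alpha$ by a curve parallel to $\beta$ therefore changes its crossing count with $c$ by an odd amount. So both the parity invariant and the termination argument collapse. Worse, as you yourself note, the $K_6$ drawings of Figure~\ref{fig:special_k6} have antipodal cells but admit no curve crossing every edge once; yet nothing in your argument as written actually uses that the antipodal cells are vi-cells (``anchoring endpoints near $u_1,u_2$'' plays no role in either stage), so if your scheme worked it would apply to those drawings too. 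This shows the gap is not a secondary technicality but the heart of the matter.

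The paper's proof uses the vi-cell vertex $v_1$ in a much stronger way than anchoring an endpoint: it routes the entire curve inside the unique face of the star $S(v_1)$, so that $c$ avoids every edge incident to $v_1$. This yields the odd parity on each non-star edge $w_2w_3$ directly from the triangle $v_1w_2w_3$, with no algebraic correction needed. More importantly, it is precisely what makes the lens vertex-free: if some $u$ lay in the lens (which lies on the $C_1$-side of $v_1w_2w_3$), a short lemma exploiting antipodality shows that the edge $v_1u$ cannot cross $w_2w_3$, so it would have to cross $c$ to escape the lens --- contradicting that $c$ avoids $S(v_1)$. Only after this is established does the paper wrap the curve once around $v_1$ to pick up the star edges. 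Your parity-via-cut-space idea is elegant, but it discards exactly the constraint that powers the key step.
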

	
	\begin{proof}[Proof sketch]
		
		Let $(C_1,C_2)$ be a pair of antipodal vi-cells of $D$. Let $v_1$ be a vertex on the boundary of $C_1$ and $v_2$ a vertex on the boundary of $C_2$.
		We construct the curve as follows: First, we draw a simple curve $c$ from $C_1$ to $C_2$ such that (1) it emanates from $v_1$ in $C_1$ and ends in $C_2$ very close to $v_2$, (2) does not cross any edge incident to $v_1$, (3) only intersects edges of $D$ in proper crossings, and (4) has the minimum number of crossings with edges of $D$ among all curves that fulfill (1), (2) and (3). This curve $c$ always exists since $S(v_1)$ is a plane drawing that has only a face in which both $v_1$ and $v_2$ lie (see Figure~\ref{fig:curvec}, left).
		
		Then, we prove that $c$ crosses every edge $w_2w_3$ in $D$ that is not incident to $v_1$ exactly once. On the one hand, since $c$ connects two antipodal cells, the endpoints of $c$ have to be on two different sides of the triangle $T$ formed by $v_1$, $w_2$ and $w_3$. Thus, $c$ has to cross $w_2w_3$ an odd number of times because it does not cross $S(v_1)$ and must cross the boundary of $T$ an odd number of times. On the other hand, if $c$ crosses $w_2w_3$ at least three times, then we can prove that $c$ can be redrawn as shown in Figure~\ref{fig:decreasing}, decreasing the number of crossings, which contradicts (4). Therefore, $c$ crosses every edge $w_2w_3$ at most twice and, consequently, only once.
		
		Finally, we change the end of $c$ from $v_1$ to a point in $C_1$ in the following way (see  Figure~\ref{fig:curvec}, right). From some
		point of $c$ sufficiently close to $v_1$ and inside $C_1$, we reroute $c$ by going around $v_1$  
such that only the edges incident to $v_1$ are crossed, and end at a point in~$C_1$.
	\end{proof}

	\begin{figure}
		\centering
		\includegraphics[page=1, scale=0.67]{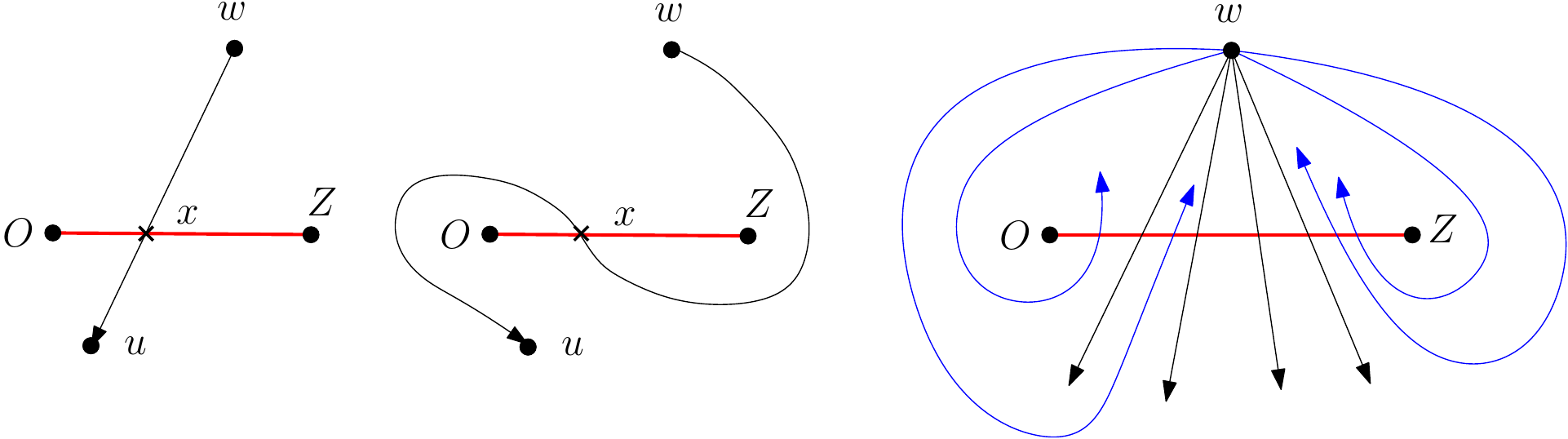}
		\caption{Top and bottom edges. For simplicity, the curve $OZ$ is drawn as a horizontal line. Left: A top edge $wu$. Centre: A bottom edge $wu$. Right: The (black) top and (blue) bottom edges of $S(w)$.}\label{fig:topbottom}
	\end{figure}

	\begin{restatable}{theorem}{thmcurvegtiwsted}\label{thm:curve_then_gtwisted}
		Let~$D$ be a simple drawing of $K_n$ in which it is possible to draw a simple curve $c$ that crosses every edge of~$D$ exactly once. Then, $D$ can be extended by two vertices $O$ and $Z$ (at the position of the endpoints of the curve), and edges incident to those vertices 
		such that the obtained drawing is a simple drawing of $K_{n+2}$, no edge incident to $O$ crosses any edge incident to $Z$, and all edges in $D$ cross the edge $OZ$.
	\end{restatable}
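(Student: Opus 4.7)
The plan is to place $O$ and $Z$ at the two endpoints of the curve $c$ and to use $c$ itself as the edge $OZ$ in the extended drawing. This immediately guarantees that $OZ$ crosses every edge of $D$ exactly once, which is one of the three required properties. What remains is to construct, for each vertex $v$ of $D$, the new edges $Ov$ and $Zv$ so that the resulting drawing of $K_{n+2}$ is simple and no edge incident to $O$ crosses any edge incident to $Z$.

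First I would set up a local classification of the edges of $D$ with respect to $c$. Fix a narrow tubular neighborhood $N$ of $c$; since $c$ is a simple arc, $N$ is a topological disk whose long boundary splits naturally into a ``top'' side and a ``bottom'' side, joined at two short arcs around $O$ and $Z$. Each edge $e$ of $D$ crosses $c$ transversally at a unique point, so locally $e$ enters $N$ from one long side and exits from the other. For every vertex $v$ of $D$ and every edge $vw$ of $D$, I can thus label $vw$ as a \emph{top edge} of $v$ if the arc of $vw$ from $v$ to its crossing with $c$ approaches $c$ from the top, and a \emph{bottom edge} otherwise; this is precisely the classification illustrated in Figure~\ref{fig:topbottom}.

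Using this classification, for each vertex $v$ of $D$ I would draw $Ov$ by starting at $O$, running alongside $c$ on the top side of $N$, and branching off at a carefully chosen crossing $c\cap vw$ with a top edge $vw$ of $v$; the final portion of $Ov$ then follows the subarc of $vw$ from $c$ to $v$. The edge $Zv$ is drawn symmetrically from $Z$ inside the bottom part of $N$, using the analogous choice of bottom edge of $v$. A vertex whose incident edges are all of one type receives its missing edge via a short local detour around $v$ that enters from the other side of its star without creating additional crossings.

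The verification splits into three tasks: (i) each new edge crosses each edge of $D$ at most once, in a proper crossing; (ii) two $O$-edges (and symmetrically two $Z$-edges) do not cross each other; and (iii) no $O$-edge crosses any $Z$-edge. Task (iii) is almost immediate from the construction, since $O$-edges stay in the top half of $N$ while $Z$-edges stay in the bottom half. The main obstacle will be task (i) combined with the correct choice of branching edge: a careless choice may force the final subarc along $vw$ to re-cross some edge of $D$ that $Ov$ already crossed while following $c$. My plan is to resolve this by always branching at the ``innermost'' top edge of $v$ relative to $c$, so that the cyclic order of branching points along $c$ is compatible with the cyclic order of the $O$-edges around $O$; a local analysis at each $v$ then shows that no edge of $D$ is crossed twice by any $Ov$, and task (ii) follows from the same cyclic compatibility.
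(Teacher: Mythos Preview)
Your claim that task (iii) is ``almost immediate'' because $O$-edges stay in the top half of $N$ while $Z$-edges stay in the bottom half is where the argument breaks. Only the initial portion of $Ov$ (the part running alongside $c$) stays in $N$; once you branch off and follow the arc of a top edge $vw$ from its crossing point to $v$, you leave $N$ entirely and trace a subarc of an edge of $D$. The same happens for $Zu$ along a bottom edge $uu'$. These two subarcs are pieces of two edges of $D$, which may well cross in $D$, and nothing you have said rules out that their crossing lies on both subarcs. So (iii) is in fact the heart of the difficulty, not a triviality. The same issue hits (ii): the tails of $Ov$ and $Ow$ are subarcs of two distinct edges of $D$ and can cross, and ``cyclic compatibility of branching points along $c$'' does not by itself prevent this.

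What you are missing is the structural input the paper develops before constructing anything: one first proves that there is a vertex $w_1$ all of whose incident edges are top, and by induction obtains a total order $w_1,\ldots,w_n$ of the vertices with the property that $w_iw_j$ is top iff $i<j$. The edges $Ow_i$ and $Zw_i$ are then added \emph{iteratively in this order}, and the key is not a top/bottom separation but the invariant that the triangular region $OZw_i$ contains $OZw_{i-1}$ and contains exactly the vertices $w_1,\ldots,w_{i-1}$. This nesting is what forces $O$-edges and $Z$-edges to be disjoint. To maintain it, $Ow_i$ is glued to $Ow_{i-1}$ (not to $c$) before switching to the first top edge of $w_i$, with a fallback construction (following $c$ and then the last \emph{bottom} edge of $w_i$) when the first top edge fails to meet $Ow_{i-1}$; simplicity then requires a substantial case analysis using several preparatory lemmas about how edges of $D$ can cross the boundary of $OZw_{i-1}$. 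Your one-shot ``follow $c$, then branch at the innermost top/bottom edge'' scheme does not produce nested regions in general, and the vague ``local detour'' for vertices with no top (or no bottom) edges does not substitute for this.
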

	
	\begin{proof}[Proof sketch]
		Let $c=OZ$ be the curve crossing every edge of $D$ once, oriented from $O$ to~$Z$. Let $wu$ be an edge of $D$, oriented from $w$ to $u$, crossing $OZ$ at a point $x$. We say that $wu$ is a \emph{top} (respectively \emph{bottom}) edge if the clockwise order of $w, Z, u$ and $O$ around $x$ is $w, Z, u, O$ (respectively $w, O, u, Z$). See Figure~\ref{fig:topbottom}. With these definitions, we can prove that there is a vertex $w_1$ in $D$ such that all the oriented edges emanating from $w_1$ are top in relation to~$c$. Thus, by removing $w_1$ and all its incident edges from $D$, there is a vertex $w_2$ in the new drawing such that all its incident edges are top, and so on. As a consequence, there is a \emph{natural} order $w_1, w_2, \ldots , w_n$ of the vertices of $D$ such that for any vertex $w_i$, the edges $w_iw_j$ with $j>i$ are top, and the edges $w_iw_j$ with $j<i$ are bottom.
		
		\begin{figure}
			\centering
			\includegraphics[page=2, scale=0.50]{curvesketch}
			\caption{Building the (dashed) edges $w_iO$ and $w_iZ$.}\label{fig:edgeswi}
		\end{figure}

		Given the natural order $w_1, w_2, \ldots , w_n$, our construction of the extended drawing is as follows. Let $D_0'$ be the simple drawing formed by the vertices and edges of $D$, $O$ and $Z$ as new vertices, and $c$ as the edge connecting $O$ and $Z$. From $D_0'$, we build new drawings $D_1', D_2', \ldots , D'_n$, by adding in step $i$ the edges $w_iO$ and $w_iZ$. These two edges are added very close to some edges in $D'_{i-1}$. Figure~\ref{fig:edgeswi} illustrates how these two edges are added in each step.
		
		In the first step, the edge $Ow_1$ follows the curve $OZ$ until the crossing point between $OZ$ and the first top edge $w_1u$ emanating from $w_1$, and then it follows this top edge until reaching $w_1$. The edge $Zw_1$ is built in an analogous way, taking the last top edge emanating from $w_1$. See Figure~\ref{fig:edgeswi} top-left. For $i=2, \ldots , n-1$, in step $i$ we do different constructions depending on whether the first and last top edges of $S(w_i)$ cross the edges $w_{i-1}O$ and $w_{i-1}Z$. If the first top edge $w_iu_1$ crosses $w_{i-1}O$ at a point $x$ and the last top edge $w_iu_k$ crosses $w_{i-1}Z$ at a point $y$ (see Figure~\ref{fig:edgeswi} top-right), then $Ow_i$ follows $Ow_{i-1}$ until $x$, and then it follows $u_1w_i$ until $w_i$. The edge $Zw_i$ is built following $Zw_{i-1}$ until $y$ and then following $u_kw_i$. On the contrary, if the first and the last top edges of $S(w_i)$ only cross one of $w_{i-1}O$ and $w_{i-1}Z$, say $w_{i-1}Z$ (see Figure~\ref{fig:edgeswi} bottom-left), then $Ow_i$ follows $OZ$ until the crossing point between $OZ$ and the last bottom edge of $S(w_i)$, and then it follows this bottom edge until $w_i$. The edge $Zw_i$ is built as in the first step, using the last top edge of $S(w_i)$. In the last step, we build $Ow_n$ and $Zw_n$ as in the first step, but using the first and the last bottom edges of $S(w_n)$ instead of the first and last top edges. See Figure~\ref{fig:edgeswi} bottom-right.
		
		By a detailed analysis of cases, we can prove for $i=1, \ldots , n$ that $D'_i$ is a simple drawing such that no edge incident to $O$ crosses any edge incident to $Z$. Therefore,  $D'_n$ is the drawing of $K_{n+2}$ satisfying the required properties.
	\end{proof}

	\section{Conclusion and Outlook}\label{sec:conclusion}
	Generalized twisted drawings have a suprisingly rich structure and many useful properties.
	We showed several of those properties in Section~\ref{sec:gtwisted_crossings} and different characterizations of generalized twisted drawings in Section~\ref{sec:char}. We have proven in Section~\ref{sec:gtwisted_crossings} that every generalized twisted drawing on an odd number of vertices contains a plane Hamiltonian cycle, and therefore one especially interesting open question is the following.
	
	\begin{conjecture}
		Every generalized twisted drawing of~$K_n$ contains a plane Hamiltonian cycle.
	\end{conjecture}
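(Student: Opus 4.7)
The plan is to prove the conjecture by splitting on the parity of~$n$. For odd $n$ the statement already follows from the remark right after Theorem~\ref{thm:twisted}: the cyclic sequence $v_1, v_{\lceil n/2 \rceil +1}, v_2, v_{\lceil n/2 \rceil +2}, \ldots, v_n, v_{\lceil n/2 \rceil}, v_1$ has the property that every pair of non-adjacent edges is interleaved in index order, so Lemma~\ref{lem:twist} guarantees that no two such edges cross. Hence all the work goes into the case $n=2k$.

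For even~$n$, the natural first attempt is to take the plane Hamiltonian path $P = v_1, v_{k+1}, v_2, v_{k+2}, \ldots, v_k, v_n$ from the proof of Theorem~\ref{thm:twisted} and close it with the single edge $v_1v_n$. Every remaining edge of $P$ has the form $v_iv_j$ with $1 < i < j < n$, so it is nested inside $(1,n)$ rather than interleaved with $(1,n)$, and Lemma~\ref{lem:twist} does not directly preclude a crossing. I would therefore try to prove a targeted strengthening of Lemma~\ref{lem:twist}: in any generalized twisted drawing, the edge $v_1v_n$ joining the two vertices that flank the distinguished ray $r$ does not cross any other edge of $D$. Morally this should hold because $r$ separates $v_1$ from $v_n$, every edge crosses $r$, and $v_1v_n$ can be routed essentially alongside $r$. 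The intended proof mimics the case analysis of Lemma~\ref{lem:twist}: assume a crossing of $v_1v_n$ with some $v_iv_j$, use the c-monotone structure of the five vertices $\{O,v_1,v_i,v_j,v_n\}$ and the fact that $r$ must cross each edge exactly once, and derive a contradiction with the position of $v_i$ or $v_j$ relative to the bounded region bounded by the already-drawn edges and the segments from~$O$.

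If the direct strengthening fails (small examples, in particular among the $K_6$ drawings of Figure~\ref{fig:gtwisted_k6}, should be checked first), the backup plan is induction from $n-1$ to $n$. Removing $v_n$ leaves a generalized twisted drawing of $K_{n-1}$ on an odd number of vertices; by the odd case it contains a plane Hamiltonian cycle $C$. The aim is then to locate a single edge $v_iv_j$ of $C$ whose subdivision into $v_iv_n$ and $v_nv_j$ produces a plane Hamiltonian cycle of $D$. Since the star $S(v_n)$ is c-monotone with respect to~$O$, its edges appear in a well-defined cyclic order at $v_n$, and $C$ partitions a neighborhood of $v_n$ into two sides; a parity/extremal argument on how often $C$ is crossed by consecutive edges of $S(v_n)$ should force the existence of a suitable~$v_iv_j$.

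The main obstacle in both approaches is exactly the feature that distinguishes even from odd~$n$: the nested crossing structure. Lemma~\ref{lem:twist} only controls interleaved pairs, whereas the closing edge $v_1v_n$ and its potential partners are nested, and the single crossing of a generalized twisted $K_4$ may fall on either the disjoint-index or the nested-index pair. A complete proof will therefore almost certainly require a finer structural statement about nested edges in generalized twisted drawings, likely obtained by combining Lemma~\ref{lem:quasi_c_monotone} with the antipodal vi-cell characterization of Theorem~\ref{thm:characterizations} to pin down where the two ``extremal'' vertices $v_1$ and $v_n$ sit. Producing such a statement that is strong enough to close the cycle for every even $n$ is the hard part.
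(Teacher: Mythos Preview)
The statement you are trying to prove is a \emph{conjecture} (Section~\ref{sec:conclusion}), and the paper does not prove it. Right after Theorem~\ref{thm:twisted} the authors settle the odd case exactly as you describe and then write that ``its structure for even $n$ is still an open problem.'' So there is no proof in the paper to compare your attempt against; what you have submitted is an attack on an open problem, not an alternative proof of a known result.

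Your primary approach for even $n$ has a genuine gap. The proposed strengthening of Lemma~\ref{lem:twist}---that $v_1v_n$ crosses no other edge of $D$---is not true in general. Lemma~\ref{lem:twist} rules out crossings only for \emph{interleaved} pairs; every pair $(v_1v_n,\,v_iv_j)$ with $1<i<j<n$ is nested, and nested pairs can and do cross in generalized twisted drawings. Already for $K_4$ the unique crossing is either $\{v_1v_2,v_3v_4\}$ or $\{v_1v_4,v_2v_3\}$, and nothing in the paper excludes the second option. Your informal justification that ``$v_1v_n$ can be routed essentially alongside $r$'' does not apply, because $v_1v_n$ is a fixed edge of $D$, not a curve you may choose; indeed Lemma~\ref{lem:empty} in Appendix~\ref{appendix:vicell} shows that the first and the last edge crossing $r$ are each of the form $v_iv_{i+1}$ with $1\le i\le n-1$, so $v_1v_n$ is neither and is sandwiched between other edges along~$r$. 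For the twisted drawing of $K_n$---the prototypical generalized twisted drawing, explicitly listed among the generalized twisted $K_6$'s in Figure~\ref{fig:gtwisted_k6}---one edge crosses every edge not adjacent to it; whenever the generalized twisted labelling assigns that edge to $v_1v_n$, your cycle fails. Checking that figure, as you yourself suggest, would already expose the issue.

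Your backup plan (reinsert $v_n$ into a plane Hamiltonian cycle of the odd $K_{n-1}$) is only an outline; the ``parity/extremal argument on $S(v_n)$'' is precisely where the difficulty sits, and nothing in the paper supplies a tool for it. As you correctly diagnose in your final paragraph, closing the even case requires new structural control over nested pairs, and neither the paper nor your proposal provides it.
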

	
	Using properties of generalized twisted drawings has turned out to be helpful for investigating simple drawings in general. We first improved the lower bound on the number of disjoint edges in simple drawings of~$K_n$ to $\Omega(\sqrt{n})$ (Section~\ref{sec:gen}). Then generalized twisted drawings played the central role to improve the lower bound on the length of plane paths contained in every simple drawing of~$K_n$ to $\Omega(\frac{\log n }{\log \log n})$ (Section~\ref{sec:planepaths}).
	
	On the other hand, from Theorem~\ref{thm:antipodal_then_gtwisted} it immediately follows that no drawing that is weakly isomorphic to a generalized twisted drawing can contain three interior-disjoint triangles (since the endpoints of the curve crossing every edge once must be on opposite sides of every triangle, the maximum number of interior-disjoint triangles is two). Up to strong isomorphism, there are only two simple drawings of $K_4$. The plane drawing contains three interior-disjoint triangles. Thus, (up to strong isomorphism) the only drawing of $K_4$ that is weakly isomorphic to a generalized twisted drawing, is the drawing with a crossing. Hence, in every generalized twisted drawing all subdrawings induced by $4$ vertices contain a crossing and thus every generalized twisted drawing is crossing maximal. Up to strong isomorphism, there are two crossing maximal drawings of $K_5$: the convex drawing of $K_5$ and the twisted drawing of $K_5$. Since the convex drawing contains three interior-disjoint triangles, the only (up to strong isomorphism) drawing of $K_5$ that is weakly isomorphic to a generalized twisted drawing is the twisted drawing of $K_5$ (that is drawn generalized twisted in Figure~\ref{fig:example_gtwisted}).
	
	It is part of our ongoing work to show that for $n \geq 7$, a drawing is weakly isomorphic to a generalized twisted drawing if and only if all subdrawings induced by five vertices are weakly isomorphic to the twisted $K_5$.
	Interestingly, the $n \geq 7$ is necessary as 
	there is a drawing with $6$ vertices that contains only twisted drawings of $K_5$ but is not weakly isomorphic to a generalized twisted drawing (see the drawings in Figure~\ref{fig:special_k6}). There are (up to strong isomorphism) three more simple drawings of~$K_6$ that consist of only twisted drawings of $K_5$ and they are all weakly isomorphic to generalized twisted drawings (see Figure~\ref{fig:gtwisted_k6}).
	
	\bibliography{gtwisted}

\appendix

	\section{Proof of Lemma~\ref{lem:twist}}\label{appendix:lem_twist}
	
	\twist*
	\begin{proof}
		Assume, for a contradiction, that the edge~$v_2v_4$ crosses the edge~$v_1v_3$. 
		Since any simple drawing of $K_4$ has at most one crossing, no other edges of $D$ can cross.
		Recall that in any {\gtwisted} drawing, all edges are drawn {\cmonotone} and intersect the ray $r$. 
		For every edge, this determines in which direction it emanates from its vertices.
	Hence there are (up to strong isomorphism) two possibilities how the crossing edges~$v_1v_3$ and~$v_2v_4$ can be drawn in $D$, depending on whether~$v_1v_3$ crosses the ray from $O$ through~$v_4$ at a point $x_3$ before or after $v_4$; cf.\ Figure~\ref{fig:appendix_prop1}.
	In both cases, $v_1v_2$ has to cross the ray from $O$ through $v_4$ at a point $x_2$.
	This point $x_2$ has to lie after $v_4$ in the first case and before $v_4$ in the second case.
	In both cases, as the edge $v_3v_4$ has to cross $r$, it must emanate from $v_4$ in the interior of the triangular region bounded by the segment $x_2x_3$, the portion $v_1x_3$ of $v_1v_3$, and the portion $v_1x_2$ of $v_1v_2$. 
		However, the vertex $v_3$ is in the exterior of that triangular region, and therefore $v_3v_4$ would have to cross the segment $x_2x_3$, contradicting that $D$ is {\cmonotone}, or one of $v_1v_2$ and $v_1v_3$, contradicting the simplicity of $D$. 
		\begin{figure}[htb]
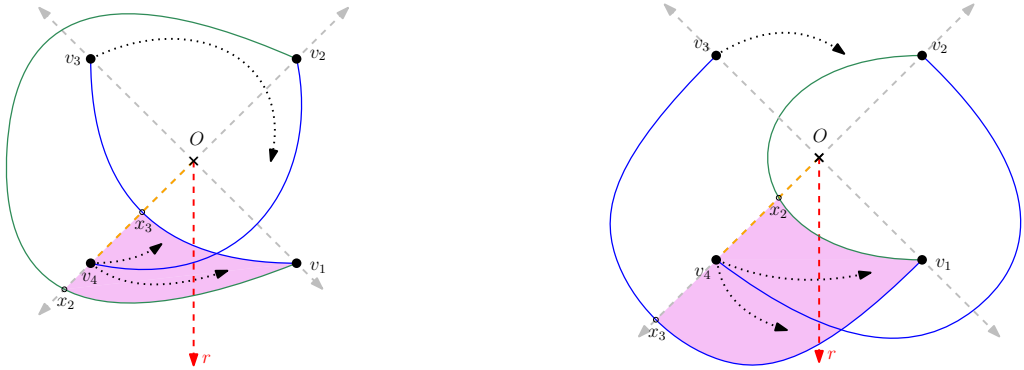

		\centering
		\begin{subfigure}[b]{0.4\textwidth}
			\centering
			\includegraphics[scale=0.6,page=5]{propertie1}
		\end{subfigure}
		\hfill
		\begin{subfigure}[b]{0.4\textwidth}
			\centering
			\includegraphics[scale=0.6,page=6]{propertie1}
		\end{subfigure}
		\caption{The two possibilities to draw $v_1v_3$ and~$v_2v_4$ crossing and \gtwisted.   
		}
		\label{fig:appendix_prop1}
	\end{figure}	
	\end{proof}
	
	\section{Proof of Lemma~\ref{lem:quasi_c_monotone}}\label{appendix:quasi_c_monotone}
	Lemma~\ref{lem:quasi_c_monotone} has been implicitly shown in \cite{bound_2014} and \cite{triangles_together}. For completeness, we include 
	a detailed proof of the lemma in this appendix. We remark that the proof presented here is in parts similar to the one in~\cite{triangles_together}.

	\begin{figure}[htb]
		\centering
		\includegraphics[page=11, scale=0.92]{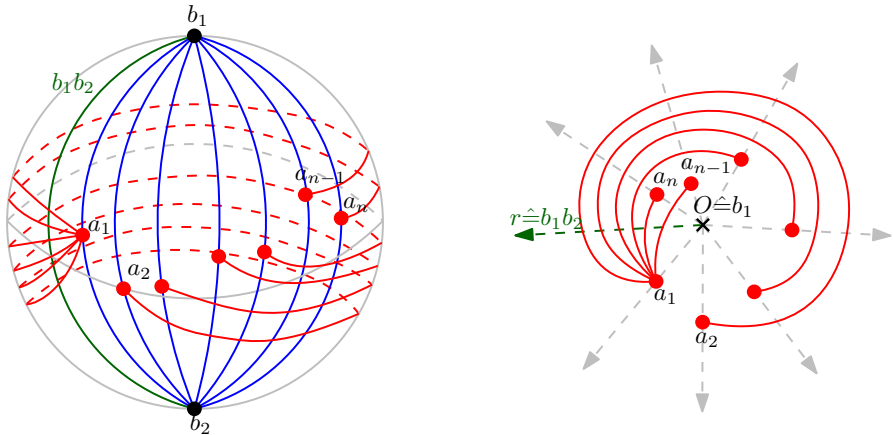}
		\caption{The homeomorphisms of $D'$. Left: $D_A$, the edges in $R_{A}$ and $b_1b_2$ are drawn on the sphere, such that $R_{A}$ and $b_1b_2$ are meridians. Right: The steographic projecton from $b_2$.}
		\label{fig:homeomorphism}
	\end{figure}
	
	\quasi*
	
	\begin{proof}
		We call the pair of edges in $D'$ incident to $a_i$, $1 \leq i \leq n$, the long edge $r_i$. Let $R_{A}$ be the set of long edges. We first show that any edge between vertices in $A$ crosses any long edge at most once. Then we show how to draw $D'$ such that $b_1$ can be taken as origin $O$ in a c-monotone drawing weakly isomorphic to $D_A$, where the long edges of $R_A$, as well as the edge $b_1b_2$, emanate as rays to infinity.
		
		We now show that every edge between two vertices of $A$ crosses every edge of $R_A$ at most once. Let $a_1$, $a_2$, and $a_3$ be vertices in $A$. 	
		Let $R_1$ be the region bounded by the edges $b_1a_1$, $a_1b_2$, $b_2a_2$ and $a_2b_1$ that does not contain $a_3$.	
		Let $R_2$ be the region bounded by the edges $b_1a_2$, $a_2b_2$, $b_2a_3$ and $a_3b_1$ that does not contain $a_1$.
		Since $D'$ is plane, these regions are disjoint.

		As the edge $e=a_1a_2$ is incident to all edges on the boundary of $R_1$, it cannot cross it. Thus, $e$ has to lie either completely inside or completely outside $R_1$ (and meet the boundary only in its endvertices). If $e$ lies inside $R_1$, it can cross neither $a_3b_1$ nor $a_3b_2$. If it lies outside $R_1$, it has to cross the boundary of $R_2$ an odd number of times. (Since $e$ must begin at $a_1$ outside $R_2$ and finish at $a_2$ inside $R_2$, and
		passing through $R_1$ is not possible.) As $e$ cannot cross edges incident to $a_2$, this means it has to cross exactly one of the edges $a_3b_1$ or $a_3b_2$. Thus, $e$ crosses the long edge $r_3$ at most once, for any vertex $a_3$.
		
		We can draw $D'$ such that $b_1$ is functioning as the origin and $R_{A}$ as rays emerging from it by doing the following transformations; see Figure~\ref{fig:homeomorphism}. We draw the subdrawing induced by the vertices of $D'$ on the sphere such that $b_1$ and $b_2$ are antipodes, and the long edges of $R_{A}$, as well as the edge $b_1b_2$, are meridians. By the general Jordan-Schoenflies theorem~\cite{homeomorphism1, homeomorphism2}, the drawing on the sphere is homeomorphic to the original drawing on the plane. We then apply a stereographic projection from $b_2$ onto the plane. This way, the long edges in $R_{A}$ and the edge $b_1b_2$ correspond to rays emerging from vertex $b_1$, where the long edges in $R_{A}$ are exactly the rays through the vertices of $D'$.

		Finally, we can obtain a c-monotone drawing that is weakly isomorphic to $D_A$. We consider the stereographic projection. As all edges of $D_A$ cross the long edges in $R_{A}$ only once, they cross in between two long edges (or rays in the projection) $r_1$ and $r_2$ if and only if their order along the rays changes (that is, the edge closer to $b_1$ at $r_1$ is further away from $b_1$ at $r_2$). Consequently, we can draw the edge-segments between every two rays as straight-lines and obtain a c-monotone drawing that is weakly isomorphic to $D_A$. If all edges of $D_A$ cross the edge $b_1b_2$, they cross a ray to infinity in the weakly isomorphic c-monotone drawing, and thus the c-monotone drawing is also generalized twisted.
	\end{proof}
	
	\section{Proof of Theorem~\ref{thm:subdrawing}}\label{appendix:thm_subdrawing}
	\thmsubdrawing*
	
	\begin{proof}
		Without loss of generality we may assume that the vertices of $D$ appear counterclockwise around $O$ in the order $v_1,v_2,\ldots ,v_n$. Let $r$ be a ray emanating from $O$, keeping $v_1$ and~$v_n$ on different sides. We define an order, $\preceq $, in this set of vertices as follows:  $v_i\preceq v_j$ if and only if either $i=j$ or $i<j$ and the edge $(v_i,v_j)$ crosses $r$.
		
		We show that $\preceq$ is a partial order. The relation is clearly reflexive and antisymmetric. Besides, if $v_i\preceq v_j$ and $v_j\preceq v_k$, then $i<j$ and $j<k$ imply $i<k$, so for the transitive property, we only have to prove that if $v_iv_j$ and $v_jv_k$ cross $r$, then $v_iv_k$ also crosses $r$. We denote by $r_i,r_j,r_k$ the rays emanating from $O$ and passing trough $v_i,v_j,v_k$, respectively. 
		We have two cases depending on where $v_jv_i$ crosses the ray $r_k$ at a point $x_k$; in the first case, $x_k$ is located before $v_k$ on $r_k$, while in the second one it is located after $v_k$. 
		Then $v_jv_k$ has to cross the ray $r_i$ at a point $x_i$, which is after $v_i$ in the first case and before $v_i$ in the second case (see Figure~\ref{fig:appendix_PartialOrder}). 
		Let $Q$ be the region bounded by the segments $Ox_i,Ox_k$ and the portions
		$v_jx_i, v_jx_k$ of the edges $v_jv_k,v_jv_i$, respectively. 
		In both cases, the edge $v_kv_i$ cannot be contained in the counterclockwise wedge from $r_i$ to $r_k$, because $v_iv_k$ should connect a vertex placed outside $Q$ with points placed inside that region, contradicting either the simplicity or the {\cmonotonicity} of $D$. 
		Therefore, $v_iv_k$ must be in the clockwise wedge from $r_i$ to $r_k$ and thus crosses the ray $r$.
		
		In this partial order $\preceq $, a chain consists of a subset $v_{i_1},\ldots ,v_{i_{s-1}}$ of pairwise comparable vertices, that is, a subset of vertices such that their induced subdrawing is {\gtwisted} (all edges cross $r$). An antichain, $v_{j_1},\ldots ,v_{j_{t-1}}$, consists of a subset of pairwise incomparable vertices, that is, a subset of vertices such that their induced subdrawing is monotone (no edge crosses $r$). 
		Therefore, the first part of the theorem follows from applying Theorem~\ref{thm:dilworth} to the set of vertices of $D$ and the partial order $\preceq $.
		
		Finally, observe that if $s=t\leq \lceil \sqrt{n}\rceil $, then $(s-1)(t-1)+1\leq n$. Thus, $D$ contains a complete subgraph $K_{\lceil \sqrt{n}\rceil}$ whose induced subdrawing is either  {\gtwisted} or monotone.
			\begin{figure}
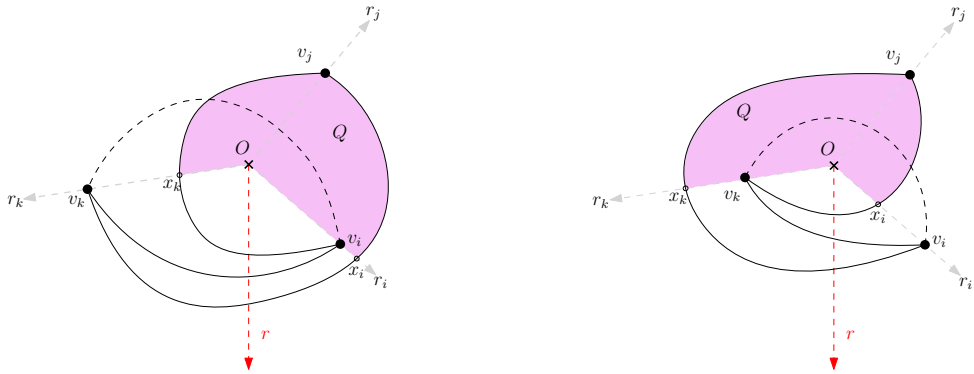

			\begin{subfigure}[b]{0.45\textwidth}
				\centering
				\includegraphics[scale=0.6, page=9]{figures.pdf}
			\end{subfigure}
			\hfill
			\begin{subfigure}[b]{0.45\textwidth}
				\centering
				\includegraphics[scale=0.6, page=10]{figures.pdf}
			\end{subfigure}
			\caption{If edges $v_iv_j$ and $v_jv_k$ cross $r$ in a c-monotone drawing, then $v_iv_k$ must also cross~$r$.}\label{fig:appendix_PartialOrder}
		\end{figure}
	\end{proof} 
	
	\section{{\Gtwisted} drawings contain a pair of antipodal vi-cells}\label{appendix:vicell}
	
	In this section, we will show that every drawing weakly isomorphic to a generalized drawing of $K_n$ contains a pair of antipodal vi-cells (Theorem~\ref{thm:vicells}). Before proving the theorem, we will see some useful properties of {\gtwisted} drawings. Recall that in a {\gtwisted} drawing, vertices are labeled $v_1, v_2, \ldots , v_n$ counterclockwise around the origin $O$, the ray emanating from $O$ and passing through a vertex $v_i$ is denoted by $r_i$, and the ray $r$ that emanates from $O$ and crosses every edge once is between $r_n$ and $r_1$, counterclockwise from~$r_n$.
	
	\begin{figure}[htb]
		\centering
		\includegraphics[page=35, scale=0.6]{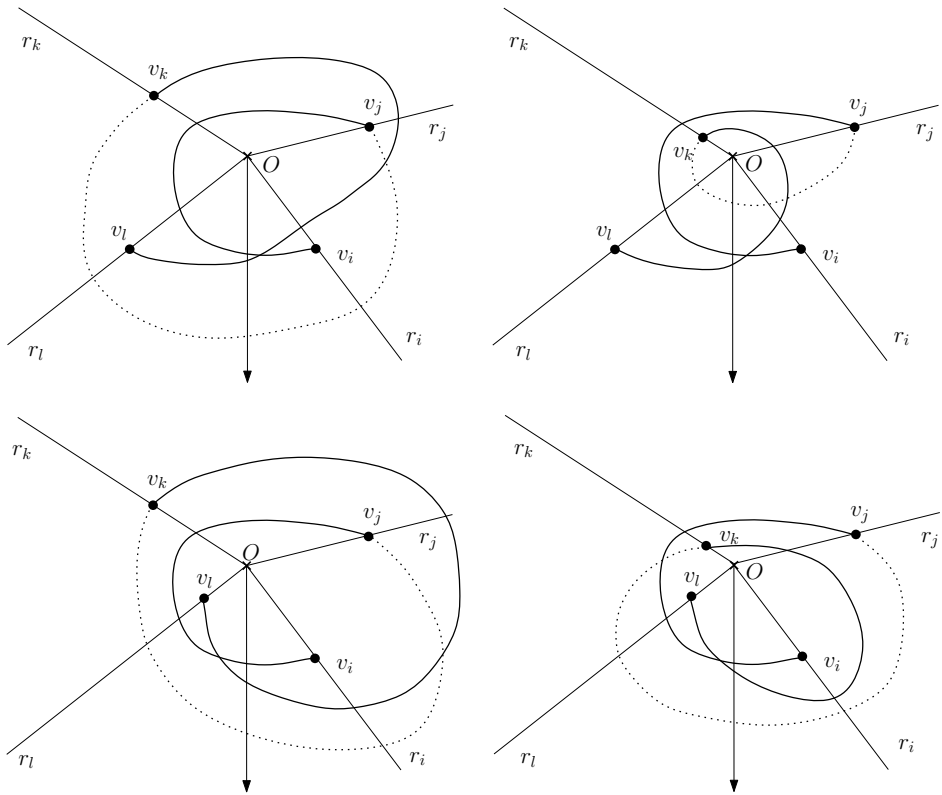}
		\caption{Illustrating the proof of Lemma~\ref{lem:cross}.}\label{fig:cross}
	\end{figure}

	\begin{restatable}{lem}{lemcross}\label{lem:cross}
		Let $D$ be a {\gtwisted} drawing of $K_n$ with $n\ge 4$. Suppose the two edges $v_iv_j$ and $v_kv_l$ of $D$ cross, and $i<j<k<l$. Then the crossing point between these two edges is in the wedge $W$ defined by $r_j$ and $r_k$, counterclockwise from $r_j$ to $r_k$.
	\end{restatable}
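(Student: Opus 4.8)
We first reduce to the case $n=4$. The subdrawing $D_4$ of $D$ induced by $\{v_i,v_j,v_k,v_l\}$ is again generalized twisted with respect to the same point $O$ and the same ray $r$: c-monotonicity and ``$r$ crosses every edge once'' are inherited by subdrawings, and since $v_i,\dots,v_l$ all lie between $r_1$ and $r_n$, the ray $r$ still emanates between the first and the last of them. Relabelling $v_i,v_j,v_k,v_l$ as $v_1,v_2,v_3,v_4$, it suffices to prove: if $v_1v_2$ and $v_3v_4$ cross in a generalized twisted drawing of $K_4$, then the crossing lies in the wedge $W$ from $r_2$ to $r_3$ (counterclockwise).

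Next I would pin down the angular span of each edge using c-monotonicity. Along any edge the angle around $O$ is monotone (each ray meets the edge at most once), and the edge is crossed exactly once by $r$; hence the edge $v_av_b$ with $a<b$ sweeps exactly the closed wedge obtained by going counterclockwise from $r_b$ to $r_a$ (the one that contains $r$), not the short wedge between $r_a$ and $r_b$, which avoids $r$. Intersecting the spans of $v_1v_2$ and $v_3v_4$, the angles common to both edges are exactly those in $W$ and those in the wedge $W'$ running counterclockwise from $r_4$ through $r$ to $r_1$. A crossing cannot occur on the ray through an endpoint of one of the two edges, so the crossing point $x$ lies in the interior of $W$ or of $W'$; the task is to exclude $x\in W'$.

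Assume, for contradiction, that $x\in W'$. Reversing the orientation of the plane preserves the class of generalized twisted drawings (it reverses the cyclic vertex order and keeps $r$ between the first and last vertex), exchanges the roles of $v_1v_2$ and $v_3v_4$, and swaps the part of $W'$ between $r_4$ and $r$ with the part between $r$ and $r_1$; so we may assume $x$ lies between $r_4$ and $r$. Now I would argue in the style of the proof of Lemma~\ref{lem:twist}. Form the simple closed curve $\gamma$ consisting of the sub-arc of $v_1v_2$ from $v_2$ to $x$, the sub-arc of $v_3v_4$ from $v_4$ to $x$, and the edge $v_2v_4$; it is simple because a drawing of $K_4$ has at most one crossing. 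Each of the edges $v_1v_3$, $v_1v_4$, $v_2v_3$ meets $\gamma$ only in a vertex of $\gamma$ (again since there is no further crossing), and $v_1v_3$ is disjoint from $\gamma$, so $v_1$ and $v_3$ lie strictly on the same side of $\gamma$. Combining this with the facts that $O$ lies in the bounded region of every triangle of the $K_4$ (since $r$ meets each such triangle's boundary an odd number of times) and that the direction in which each edge emanates from each endpoint is forced by c-monotonicity and by ``must cross $r$'', a short case analysis on which side of $\gamma$ contains $v_1,v_3$ — keeping track of where $O$ and $r$ sit relative to $\gamma$ — shows that in each case one of $v_1v_3$, $v_1v_4$, $v_2v_3$, or the ray $r$ would be forced to cross $\gamma$ illegally or to violate c-monotonicity, the desired contradiction.

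The main obstacle is exactly this final case analysis: tracking the positions of $O$, of the ray $r$, and of the points where $r$ crosses the various edges, relative to $\gamma$ and its interior. A possible alternative is to extend $D_4$ by the two vertices $O$ and $Z$ as in Property~\ref{char:bipartite}, so that the $K_{2,4}$ with parts $\{O,Z\}$ and $\{v_1,\dots,v_4\}$ is plane and $OZ$ crosses all six edges $v_av_b$, and then to analyse the position of $x$ through the ``top/bottom'' classification of edges used in Theorem~\ref{thm:curve_then_gtwisted}; I expect this to localize the crossing with less geometric bookkeeping, but it still requires the same essential content.
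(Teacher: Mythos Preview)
Your setup is fine and matches the paper: reducing to the induced $K_4$, observing via c-monotonicity that each edge $v_av_b$ with $a<b$ sweeps the wedge from $r_b$ counterclockwise through $r$ to $r_a$, and concluding that the crossing must lie in $W$ or in the complementary wedge $W'$ from $r_l$ counterclockwise to $r_i$. The paper makes the same observation (implicitly) and then also argues by contradiction assuming the crossing is in $W'$.

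Where you diverge is in how to reach the contradiction, and this is also where your argument is left incomplete. You build a Jordan curve $\gamma$ from pieces of $v_1v_2$, $v_3v_4$, and the full edge $v_2v_4$, and then announce a case analysis on the side of $\gamma$ containing $v_1,v_3$, tracking the position of $O$ and of $r$; you yourself flag this as ``the main obstacle'' and do not carry it out. As it stands this is a genuine gap: nothing you have written forces any specific edge to cross $\gamma$ illegally or to violate c-monotonicity, and bringing $v_2v_4$ into the picture only adds bookkeeping.

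The paper's route is both simpler and avoids $\gamma$ entirely. It keeps only the two crossing edges $v_iv_j$, $v_kv_l$ and considers the single edge $v_jv_k$. Assuming the crossing lies in $W'$, there are four cases according to whether $v_k$ and $v_l$ lie to the left or to the right of the directed edge $v_jv_i$; in each case one checks directly that $v_jv_k$ cannot be drawn c-monotone and crossing $r$ without hitting $v_iv_j$ or $v_kv_l$ a second time. This is a four-picture argument with no Jordan-curve side-tracking and no need for your symmetry reduction. Your alternative via Property~\ref{char:bipartite} and the top/bottom machinery of Theorem~\ref{thm:curve_then_gtwisted} would work in principle (that theorem does not rely on the present lemma), but it is a much heavier tool for what is, in the paper, a two-line case check on the single edge $v_jv_k$.
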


	\begin{proof}
		Assume for contradiction that the crossing point is not in $W$, so it is in the wedge defined by $r_l$ and $r_i$, counterclockwise from $r_l$. There are four cases, depending on whether $v_k$ and $v_l$ are to the left or the right of the directed edge $v_jv_i$; see Figure~\ref{fig:cross}. In any of the four cases, there is no way of connecting $v_k$ and $v_j$ without crossing either $v_iv_j$ or $v_kv_l$, which is a contradiction.
	\end{proof}

	\begin{lem}\label{lem:empty}
		For every {\gtwisted} drawing $D$ with $n\ge 3$ vertices, the following statements hold.
		\begin{enumerate}
			\item[i)]\label{lem:emptyi} There exists a vertex $v_i$, with $1\le i \le n-1$, such that the bounded region $RB$ defined by the edge $v_iv_{i+1}$ and the segments $\overline{Ov_i}$ and $\overline{Ov_{i+1}}$ is empty.
			\item[ii)]\label{lem:emptyii}  There exists a vertex $v_j\ne v_i$, with $1\le j \le n-1$, such that the unbounded region $RU$ defined by the edge $v_jv_{j+1}$ and the segments $\overline{Ov_j}$ and $\overline{Ov_{j+1}}$ is empty.
		\end{enumerate}
	\end{lem}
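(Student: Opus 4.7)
The plan is to argue both parts by extremizing the distance from $O$ at which edges meet~$r$: for part~(i), take the edge $v_av_b$ ($a<b$) of $D$ whose intersection $x$ with $r$ is closest to $O$; for part~(ii), take the edge $v_pv_q$ whose intersection $x'$ with $r$ is farthest. In each case I will show that the extremal edge connects consecutive vertices and that the associated region contains no other vertex. The requirement $v_i\ne v_j$ is automatic because the closest and farthest crossings lie on distinct edges.

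For consecutivity in part~(i), I suppose $b>a+1$, pick any $c$ with $a<c<b$, and exploit that $\theta_c$ lies in the short arc outside the angular range of $v_av_b$. Using c-monotonicity and simplicity, this implies that $v_av_b$ does not meet the closed curve $\overline{Ov_a}\cup v_av_c\cup\overline{Ov_c}$ except at $v_a$; together with $|Ox|<|Oy|$ (where $y:=v_av_c\cap r$, the strict inequality being forced by the closest-crossing choice and the simplicity of two edges sharing $v_a$), this forces $v_av_b$ to stay in the closure of the bounded region of that curve, placing $v_b$ in its interior. Symmetrically $v_a$ lies in the interior of the bounded region of $\overline{Ov_c}\cup v_cv_b\cup\overline{Ov_b}$. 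Applying the same containment argument to $v_cv_b$ and to $v_av_c$ then forces at $\theta_r$ the inequalities $|Oy|\le|Oz|$ and $|Oz|\le|Oy|$ for $z:=v_cv_b\cap r$, so $|Oy|=|Oz|$, and the resulting common point of $v_av_c$ and $v_cv_b$ on $r$ differs from their only shared vertex $v_c$, contradicting simplicity. For the emptiness of $RB$ (once $b=a+1$), I suppose some vertex $v_l\ne v_a,v_{a+1}$ lies inside and consider the edge $v_av_l$ if $l>a+1$ or $v_lv_{a+1}$ if $l<a$; an analogous angular-range argument shows this edge cannot cross $\partial RB$ except at its shared vertex with $v_av_{a+1}$, so it stays in $\overline{RB}$ and meets $r$ at distance at most $|Ox|$, giving either a second intersection with $v_av_{a+1}$ (contradicting simplicity) or a strictly closer crossing than $x$ (contradicting the extremal choice).

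Part~(ii) runs the same scheme with the farthest crossing $x'$ on edge $v_pv_q$. The emptiness argument for $RU$ dualizes directly: the analogous edge now stays in $\overline{RU}$, meeting $r$ at distance at least $|Ox'|$ and yielding the same two-sided contradiction. Consecutivity is subtler, because the naive dualization only gives $v_p$ inside the unbounded region of $\overline{Ov_c}\cup v_cv_q\cup\overline{Ov_q}$ and $v_q$ inside that of $\overline{Ov_p}\cup v_pv_c\cup\overline{Ov_c}$, and these placements force $v_pv_c$ to cross $\overline{Ov_q}$ and $v_cv_q$ to cross $\overline{Ov_p}$, so the clean containment chain breaks. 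Instead, for $c$ strictly between $p$ and $q$, I will trace the clockwise angular sweep of $v_pv_c$ and of $v_cv_q$ from the respective starting vertex: in each case the unique crossing of the relevant bounding curve occurs at angle $\theta_q$ (for $v_pv_c$) or $\theta_p$ (for $v_cv_q$), which lies \emph{after} $\theta_r$ in that sweep, so the edge value at $\theta_r$ is forced to remain in the same unbounded region as its starting vertex. Setting $y':=v_pv_c\cap r$ and $z':=v_cv_q\cap r$, a WLOG choice of which of $|Oy'|$ and $|Oz'|$ is smaller then contradicts the required region-membership at $\theta_r$. The main obstacle throughout is carefully tracking the clockwise angular order of $\theta_r,\theta_p,\theta_q,\theta_c$, since it governs both which sub-case arises and where the unique forced boundary crossing lies relative to $\theta_r$.
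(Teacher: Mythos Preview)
Your proposal is correct and shares the paper's core idea: pick the edge whose crossing with $r$ is closest to $O$ for part~(i) and farthest for part~(ii). The execution differs, though. The paper reverses your order of the two subclaims: it first shows that the region $R$ bounded by the extremal edge $v_iv_k$ and the segments $\overline{Ov_i},\overline{Ov_k}$ is empty (any vertex inside $R$ could not be joined to $v_i$ without producing a closer crossing on~$r$), and only then derives consecutivity in two lines---if some $v_l$ with $i<l<k$ existed, then $v_lv_k$ would cross $r$ strictly after~$x$, and $v_iv_l$ would be forced either to cross $r$ before $x$ or to cross $v_lv_k$. You instead prove consecutivity first, via a chain of Jordan-curve containments ($v_b$ inside $C_{ac}$, $v_a$ inside $C_{cb}$, then bootstrapping to $|Oy|=|Oz|$), and handle emptiness afterwards. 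This works but is longer, and it is precisely what forces the extra care you flag for part~(ii), where the dual containment chain breaks and you must track where the unique boundary crossing sits relative to $\theta_r$ in the angular sweep. The paper's order sidesteps this entirely: once emptiness of $R$ is in hand, the consecutivity argument dualizes to the ``last crossing'' case without complication. One small slip in your write-up: for $v_cv_q$ the sweep from $v_q$ that reaches $\theta_r$ before $\theta_p$ is counterclockwise, not clockwise (or equivalently, your ``starting vertex'' should be $v_q$, not $v_c$); the argument itself is unaffected.
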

	
	\begin{proof}
		We show statement \emph{i)}, we take the first edge $v_iv_k$ (with $i < k$) that crosses $r$ and we show that $v_i$ satisfies \emph{i)}. Let $x$ be the crossing point between $v_iv_k$ and $r$, and let $R$ be the bounded region defined by the edge $v_iv_{k}$ and the segments $\overline{Ov_i}$ and $\overline{Ov_k}$.
		
		\begin{figure}[htb]
			\centering
			\includegraphics[page=36, scale=0.7]{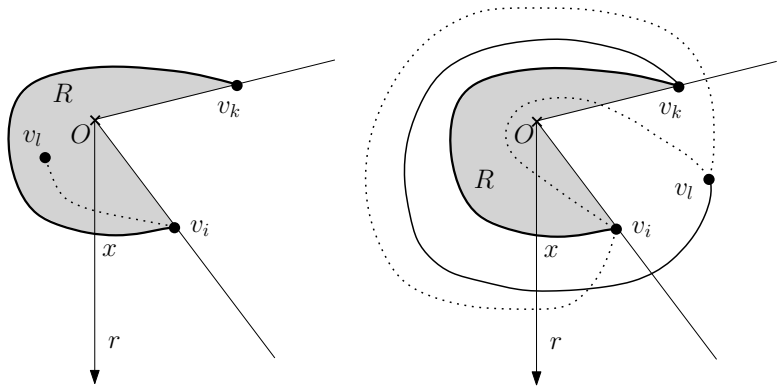}
			\caption{Illustrating the proof of Lemma~\ref{lem:empty}.}\label{fig:empty}
		\end{figure}
		
		Suppose that there is a vertex $v_l$ inside $R$. See Figure~\ref{fig:empty}, left. Then there is no way of connecting $v_l$ and $v_i$ without crossing $r$ before $x$, which contradicts that $v_iv_{k}$ is the first edge crossing $r$. Thus, $R$ must be empty.
		
		Suppose now that $k\ne i+1$, so there is a vertex $v_l$ with $i < l < k$. See Figure~\ref{fig:empty}, right. The edge $v_lv_k$ must cross $r$ at a point after $x$. But then there is no way of adding the edge $v_iv_l$ without crossing $r$ before $x$ or without crossing $v_lv_k$. Therefore, $k=i+1$ and \emph{i)} follows.
		The proof of \emph{ii)} is analogous by taking the last edge crossing $r$. In addition, $v_i$ and $v_j$ must be different since a same edge $v_iv_{i+1}$ cannot be at the same time the first and the last edge crossing $r$.
	\end{proof}

	\begin{lem}\label{lem:vi}
		Let $D$ be a {\gtwisted} drawing of $K_n$ with $n\ge 3$ vertices. Then the cell containing $O$ and the unbounded cell have at least one vertex on their boundaries.
	\end{lem}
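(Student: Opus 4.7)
The plan is to reduce both claims to Lemma~\ref{lem:empty}. That lemma produces an empty bounded region $RB$, bounded by $\overline{Ov_i}$, $\overline{Ov_{i+1}}$, and the edge $v_iv_{i+1}$, as well as an empty unbounded region $RU$, bounded by the unbounded portions of $r_j$ and $r_{j+1}$ and the edge $v_jv_{j+1}$. The overall strategy is to show that, because $RB$ is empty, at least one of the two segments $\overline{Ov_i}$, $\overline{Ov_{i+1}}$ is disjoint from every edge of $D$. If, say, no edge crosses $\overline{Ov_{i+1}}$, then this segment connects $O$ to $v_{i+1}$ without meeting the drawing, so $O$ and $v_{i+1}$ lie in the same cell and that cell is a vi-cell. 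An entirely analogous argument applied to $RU$ then shows that the unbounded cell shares a boundary with either $v_j$ or $v_{j+1}$.

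The core technical step is to rule out the simultaneous existence of an edge $e_1=v_jv_k$ crossing $\overline{Ov_i}$ and an edge $e_2=v_{j'}v_{k'}$ crossing $\overline{Ov_{i+1}}$. Assuming such edges exist, \cmonotonicity{} forces each of them to cross $r_i$, respectively $r_{i+1}$, exactly once; in particular $j<i<k$ and $j'\le i<i+1<k'$ (up to relabeling endpoints). Since $RB$ contains no vertex of $D$, after entering $RB$ through its respective segment, each of $e_1$ and $e_2$ must exit $RB$ through one of the other two boundary pieces, and simplicity together with \cmonotonicity{} bound the number of crossings with each boundary piece by one. The plan is to analyze the few resulting configurations and, using Lemma~\ref{lem:cross} and the fact that every edge must cross the ray $r$, derive a contradiction by showing that one of the missing $K_n$-edges, specifically the one between $v_{i+1}$ and $v_{j'}$ (as indicated in Figure~\ref{fig:vicells}, right), cannot be drawn: it would either have to enter the already ``blocked'' region $RB$, violate \cmonotonicity{} at $r_i$ or $r_{i+1}$, or cross $e_1$ or $e_2$ twice.

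Once this impossibility is established, at least one of $\overline{Ov_i}$, $\overline{Ov_{i+1}}$ is uncrossed by $D$, which provides a curve from $O$ to a vertex disjoint from the drawing and so places $O$ in the same cell as that vertex; this cell is therefore a vi-cell. The unbounded case is completely symmetric: applying the same reasoning to the empty unbounded wedge $RU$ from Lemma~\ref{lem:empty}(ii), one shows that at least one of the two unbounded portions of $r_j$, $r_{j+1}$ is uncrossed, so the unbounded cell contains either $v_j$ or $v_{j+1}$ on its boundary. The main obstacle is the case analysis in the second paragraph, since the interplay of the three boundary pieces of $RB$ with the two edges $e_1,e_2$ admits several sub-configurations, each of which must be excluded individually using \cmonotonicity{}, simplicity, and the fact that all edges cross the distinguished ray $r$.
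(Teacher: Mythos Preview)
Your high-level strategy coincides with the paper's: invoke Lemma~\ref{lem:empty} to obtain the empty region $RB$, argue that at least one of $\overline{Ov_i}$, $\overline{Ov_{i+1}}$ is uncrossed (so $O$ lies in a vi-cell), and treat the unbounded cell symmetrically via $RU$.

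The concrete gap is in your index constraints. You assert that \cmonotonicity{}, together with $e_1=v_jv_k$ crossing $\overline{Ov_i}\subset r_i$, forces $j<i<k$. In a \gtwisted{} drawing this is exactly backwards. Since every edge crosses the ray $r$, the edge $v_jv_k$ (with $j<k$) runs from $v_j$ \emph{clockwise} through $r$ to $v_k$; by \cmonotonicity{} the rays it meets are precisely the $r_m$ with $m<j$ or $m>k$, never those with $j<m<k$. Hence $e_1$ crossing $r_i$ yields $j,k<i$ or $j,k>i$, and likewise $e_2$ crossing $r_{i+1}$ yields $j',k'\le i$ or $j',k'>i+1$; no relabeling of endpoints can produce $j<i<k$. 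The paper's proof in fact spends its first step pinning this down: using the emptiness of $RB$ to exclude $v_{i+1}$ as an endpoint, and Lemma~\ref{lem:cross} to rule out $j,k>i+1$, it concludes $j,k<i$ and, symmetrically, $j',k'>i+1$. With your constraint $j<i<k$ the case analysis you outline would be carried out in a configuration that cannot occur, so the plan as written does not lead to the intended contradiction.

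A smaller point: the contradiction is not always obtained from the single edge $v_{i+1}v_{j'}$. Once $j,k<i$ and $j',k'>i+1$ are established, the paper distinguishes two sub-cases (according to whether $v_{k'}$ lies to the right or the left of the oriented edge $v_kv_j$); the obstructing edge is $v_{j'}v_{i+1}$ in one sub-case and $v_iv_k$ in the other. Your outline should anticipate this split rather than committing to a single edge.
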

	
	\begin{proof}
		We show that the cell containing $O$ is a vi-cell. The proof for the unbounded cell follows analogously.
		
		By Lemma~\ref{lem:empty} \emph{i)}, there exists a vertex $v_i$, with $1\le i \le n-1$, such that the bounded region $RB$ defined by the edge $v_iv_{i+1}$ and the segments $\overline{Ov_i}$ and $\overline{Ov_{i+1}}$ is empty. We now show that either the segment $\overline{Ov_i}$ or the segment $\overline{Ov_{i+1}}$ is uncrossed. Thus, it follows immediately that $O$ lies in a vi-cell (with either $v_i$ or $v_{i+1}$ on the boundary).

		\begin{figure}[ht]
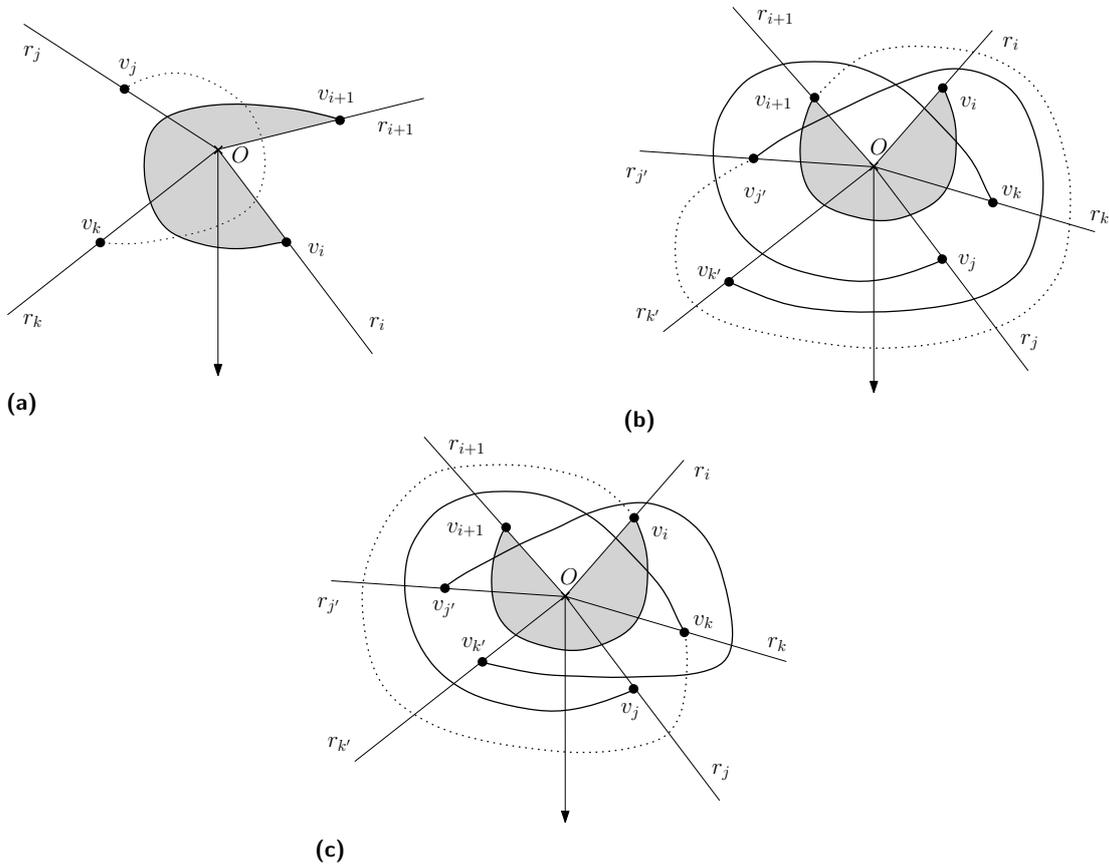

			\centering
			\begin{subfigure}{.42\textwidth}
				\includegraphics[scale=0.6,page=31]{img.pdf}
				\caption{}\label{fig:the1}
			\end{subfigure}
			\hfill
			\begin{subfigure}{.42\textwidth}
				\includegraphics[scale=0.6,page=32]{img.pdf}
				\caption{}\label{fig:the2}
			\end{subfigure}
			
			\begin{subfigure}{.42\textwidth}
				\centering
				\includegraphics[scale=0.6,page=33]{img.pdf}
				\caption{}\label{fig:the3}
			\end{subfigure}
			\caption{One of the segments $\overline{Ov_i}$ and  $\overline{Ov_{i+1}}$ is uncrossed.}\label{fig:the}
		\end{figure}
		
		Suppose to the contrary that neither segments are uncrossed, so there is an edge $v_jv_k$, with $j<k$, crossing the segment $\overline{Ov_i}$, and another edge $v_{j'}v_{k'}$, with $j' < k'$, crossing the segment $\overline{Ov_{i+1}}$.
		
		Observe now the following. First, since $RB$ is empty, an edge $v_{i+1}v_l$ cannot cross $\overline{Ov_i}$ for any $l$, so neither $v_j$ nor $v_k$ can be $v_{i+1}$. Second, suppose that $j,k > i+1$. Since both vertices are outside $RB$, if $v_jv_k$ crosses $\overline{Ov_i}$, then it must also cross $v_iv_{i+1}$ (see Figure~\ref{fig:the1}). Hence, by Lemma~\ref{lem:cross}, the crossing point between $v_jv_k$ and $v_iv_{i+1}$ is in the wedge defined by $r_{i+1}$ and~$r_j$. But then, after crossing $v_iv_{i+1}$ and~$\overline{Ov_i}$, the edge $v_jv_k$ must cross $v_iv_{i+1}$ a second time to reach~$v_k$, which is a contradiction. Therefore, $j,k < i$.
		
		Using an analogous reasoning, we also obtain that if an edge $v_{j'}v_{k'}$ crosses~$\overline{Ov_{i+1}}$, then~$j',k' > i+1$. As a consequence, the relative position of these three edges of $D$ is as shown in Figures~\ref{fig:the2} and~\ref{fig:the3}. After emanating in~$v_k$, the edge $v_kv_j$ crosses $v_iv_{i+1}$ at a point in the wedge defined by $r_k$ and~$r_i$; then it crosses~$\overline{Ov_i}$, and reaches $v_j$ surrounding $v_iv_{i+1}$ by the exterior. The edge $v_{j'}v_{k'}$ must do the same but in opposite direction, crossing first $v_iv_{i+1}$ at a point in the wedge defined by $r_{i+1}$ and~$r_{j'}$, then crossing $\overline{Ov_{i+1}}$ and reaching $v_{k'}$ surrounding $v_iv_{i+1}$ by the exterior. Note that $v_jv_k$ and $v_{j'}v_{k'}$ necessarily cross, and that the crossing point must be in the wedge defined by $r_i$ and~$r_{i+1}$. Hence, $v_{j'}$ must be to the left of the oriented edge $v_kv_j$ and $v_k$ must be to the right of the oriented edge~$v_{j'}v_{k'}$. Otherwise, if $v_{j'}$ is to the right of the oriented edge $v_kv_j$ or $v_k$ is to the left of the oriented edge~$v_{j'}v_{k'}$, then edges $v_jv_k$ and $v_{j'}v_{k'}$ would cross twice.
		
		For the vertices $v_j$ and~$v_{k'}$, there are two possibilities, depending on whether $v_{k'}$ is to the right (see Figure~\ref{fig:the2}) or to the left (see Figure~\ref{fig:the3}) of the oriented edge $v_kv_j$. But in the first case, the edge $v_{j'}v_{i+1}$ would cross $v_jv_k$ twice, and in the second case, the edge $v_iv_k$ would cross $v_{j'}v_{k'}$ twice. Therefore, at least one of $\overline{Ov_i}$ and $\overline{Ov_{i+1}}$ is uncrossed.
	\end{proof}

	\begin{lem}\label{lem:gtwisted_vicells}
		Let $D$ be a {\gtwisted} drawing of $K_n$ with $n\ge 3$ vertices. Then the cell of $D$ containing $O$ an the unbounded cell are a pair of antipodal vi-cells.
	\end{lem}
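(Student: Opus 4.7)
The plan is to combine Lemma~\ref{lem:vi}, which already takes care of the vi-cell aspect, with a simple parity argument on the ray $r$ that shows antipodality. By Lemma~\ref{lem:vi}, both the cell of $D$ containing $O$ and the unbounded cell of $D$ are vi-cells; it remains to show these two cells are antipodal, i.e., that for every triangle $T$ of $D$ they lie on opposite sides of $T$.

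To this end, I would fix any triangle $T$ formed by three vertices $v_a, v_b, v_c$ and the edges $v_av_b$, $v_bv_c$, $v_av_c$. By the definition of {\gtwisted}, the ray $r$ emanating from $O$ crosses every edge of $D$ exactly once, so $r$ crosses the boundary of $T$ in precisely three points, none of them vertices of $T$. Parametrizing $r$ from $O$ outward to infinity and tracking whether we are inside or outside $T$, each such crossing flips this status. Since three flips yield opposite parities at the two ends, the starting cell of $r$ (which contains $O$) and the cell at the unbounded end of $r$ (which is the unbounded cell) lie on different sides of $T$. As $T$ was an arbitrary triangle, the cell containing $O$ and the unbounded cell are antipodal.

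The only mild point of care is making sure the three intersection points of $r$ with $\partial T$ really are transverse crossings that do not coincide with vertices of $T$; this is immediate because $r$ meets every edge of $D$ in an interior crossing (by the definition of a {\gtwisted} drawing, $r$ is placed between $r_n$ and $r_1$ and so passes through no vertex, and the intersection is a single proper crossing). Combined with Lemma~\ref{lem:vi}, this yields the claim. I do not foresee a real obstacle here; the lemma is essentially a restatement of the ``By definition, every {\gtwisted} drawing $D$ contains two antipodal cells'' remark that opens Section~\ref{sec:char}, now upgraded to antipodal vi-cells via Lemma~\ref{lem:vi}.
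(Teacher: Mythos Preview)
Your proposal is correct and follows essentially the same approach as the paper: invoke Lemma~\ref{lem:vi} for the vi-cell part, then use that $r$ crosses each edge exactly once so that every triangle boundary is hit exactly three times, giving the parity flip between the cell of $O$ and the unbounded cell. The paper phrases the parity argument via a segment $\overline{OZ}$ with $Z$ on $r$ in the unbounded cell rather than parametrizing the ray to infinity, but this is a cosmetic difference only.
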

	
	\begin{proof}
		Let $c$ be the segment $\overline{OZ}$, where $O$ is the origin and $Z$ is a point on $r$ in the unbounded cell. By Lemma~\ref{lem:vi}, both $O$ and $Z$ lie in vi-cells. We will show that those cells are antipodal. Since $r$ crosses every edge of $D$ exactly once and $Z$ lies in the unbounded cell, also the segment $c$ crosses every edge exactly once. Consequently, $c$ crosses the boundary of every triangle of $D$ exactly three times. Since every triangle of $D$ is plane, this means $c$ starts and ends at different sides of every triangle. Thus, $O$ and $Z$ have to lie in antipodal cells.
	\end{proof}
	
\begin{figure}[h!tb]
		\centering
		\includegraphics[page=2,scale=0.5]{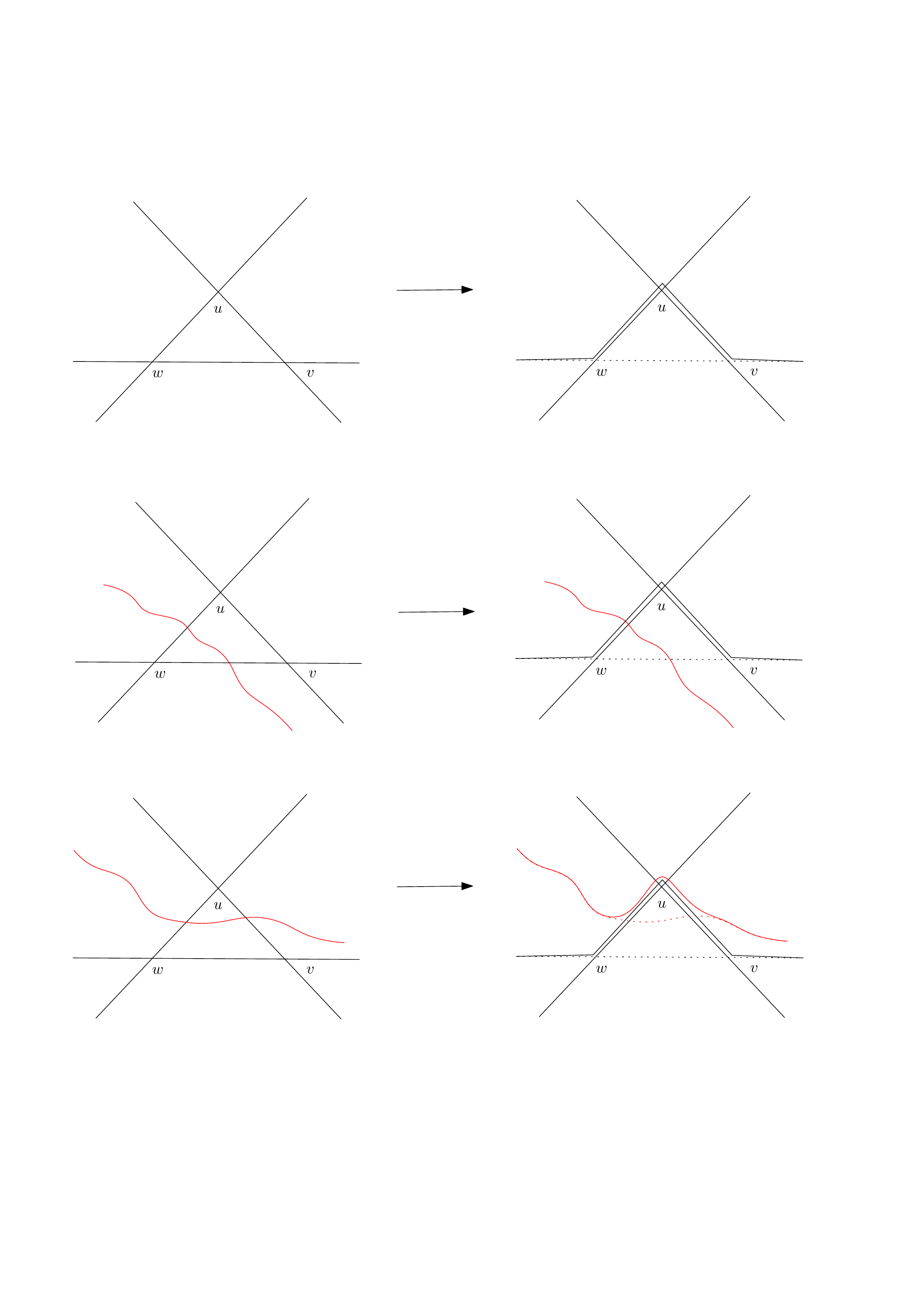}
		\caption{A triangle-flip.}
		\label{fig:triangle_flip}
\end{figure}

	We extend Lemma~\ref{lem:gtwisted_vicells} to drawings weakly isomorphic to generalized twisted drawings using Gioan's Theorem~\cite{arroyo,gioan} that any two weakly isomorphic drawings of $K_n$ can be transformed into each other with a sequence of triangle-flips and at most one reflection of the drawing. A \emph{triangle-flip} is the operation that transforms a triangular cell $\triangle$ that has no vertex on its boundary, by moving one of its edges across the intersection of the two other edges of~$\triangle$ (see Figure~\ref{fig:triangle_flip}).
	
	\thmvicells*
	
	\begin{proof}
		Let $D$ be a simple drawing of $K_n$ that is weakly isomorphic to a generalized twisted drawing $D'$. By Lemma~\ref{lem:gtwisted_vicells}, the cell in which $O$ lies in $D$ and the unbounded cell of $D$ are antipodal vi-cell pairs. 
		Without loss of generality, we can assume that~$O$ is very close to a vertex~$v$ on the boundary of~$C_1$ and $Z$ is very close to a vertex~$w$ on the boundary of~$C_2$.
		Using Gioan's Theorem, it is enough to show that afer every triangle flip that can be transformed on a drawing $\tilde{D}$ containing antipodal vi-cells, the resulting drawing $\tilde{D}_2$ still contains antipodal vi-cells.
		
		As triangle-flips are only applied to cells without vertices on its boundary, and points $O$ and $Z$ are close enough to vertices on the boundary of their cell in~$\tilde{D}$, they stay in vi-cells (with the vertices they are close to) after every triangle-flip. What remains to be shown is that the vi-cells stay antipodal.
		
		Let $T$ be a triangle of~$\tilde{D}$. (Note that a triangle of the drawing is the simple cycle formed by the three edges connecting three vertices of the graph, and not the triangular cells on which we perform triangle flips.) Whenever a flip is performed on a cell $\triangle$, the cell $\triangle$ disappears and a new cell appears. The new cell might (but not has to) be on the other side of $T$, but no other cells are affected. In particular, since triangle flips are never applied on vi-cells, any pair of vi-cells that was antipodal before stays antipodal after the flip.
	\end{proof}
	
	\section{Characterizing via antipodal vi-cells}\label{appendix:antipodal_gtwisted}
	
	In this section, we prove that in any simple drawing $D$ of $K_n$ containing antipodal vi-cells it is possible to add a simple curve crossing every edge of $D$ exactly once (Theorem~\ref{thm:antipodal_then_gtwisted}). To this end, we will use the following lemmata that show some properties of antipodal vi-cells.

	\begin{lem}\label{prop:different_v}
		Let $D$ be a drawing of $K_n$ with $n \geq 4$, and let $(C_1,C_2)$ be a pair of antipodal vi-cells. Then there is no vertex that lies on the boundary of both cells.
	\end{lem}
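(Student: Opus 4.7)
The plan is to argue by contradiction: assume a vertex $v$ lies on the boundary of both $C_1$ and $C_2$, and exhibit a single triangle of $D$ with respect to which $C_1$ and $C_2$ end up on the same side, contradicting antipodality.

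First, using $n\geq 4$, I would pick three vertices $a,b,c$ of $D$ that are all distinct from $v$, and consider the triangle $T$ formed by the edges $ab$, $bc$, $ca$. In a simple drawing no edge passes through a vertex other than its endpoints, so $v$ is not incident to any point of $T$; hence $v$ lies strictly in one of the two open regions into which $T$ partitions the plane. Call this region $S_v$.

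Next, I would choose a neighbourhood $N$ of $v$ that is small enough to (i) contain no vertex of $D$ other than $v$, (ii) avoid $T$ entirely (possible because the distance from $v$ to the closed set $T$ is positive), and (iii) meet only those edges of $D$ that are incident to $v$. Removing the edges incident to $v$ splits $N$ into finitely many open wedges around $v$, each of which is contained in a single cell of $D$. In particular, every cell having $v$ on its boundary occupies at least one such wedge. Since $N \subset S_v$, every wedge lies in $S_v$, and therefore every cell with $v$ on its boundary meets $S_v$.

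Finally, I would use the fact that each cell of $D$ is a connected subset of the plane minus the drawing, hence minus $T$, so each cell lies entirely in one of the two sides of $T$. Because $C_1$ and $C_2$ both meet $S_v$ by the previous step, both lie in $S_v$, i.e. on the same side of $T$. This contradicts the antipodality of $(C_1,C_2)$, which requires them to be separated by $T$. I expect the only mildly delicate step to be verifying that a suitable neighbourhood $N$ with the three properties above exists, which follows from the finiteness of the drawing and the fact that each edge is a closed Jordan arc disjoint from $v$ unless incident to it.
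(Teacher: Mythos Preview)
Your argument is correct, and it takes a genuinely different route from the paper's own proof. The paper does not pick a triangle avoiding $v$; instead it keeps $v$ as one of the three vertices. Concretely, the paper uses that $v$ has degree $n-1\ge 3$, so around $v$ there are at least three cells; it chooses a third cell $C'\neq C_1,C_2$ incident to $v$, lets $u_1,u_2$ be the endpoints of the two star-edges bounding $C'$ at $v$, and takes the triangle $T=v u_1 u_2$. Near $v$ the two sides of $T$ are the two angular sectors cut out by $vu_1$ and $vu_2$; one sector contains only $C'$, so $C_1$ and $C_2$ both sit in the other sector and hence on the same side of $T$.

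Your version sidesteps this rotation argument entirely: by choosing $T=abc$ with $a,b,c\neq v$, the vertex $v$ lies in the open interior of one side of $T$, and a small enough neighbourhood of $v$ (disjoint from $T$) already forces every cell touching $v$ to be on that side. This is arguably cleaner, since you never need to identify a specific third cell $C'$ or reason about which wedges lie on which side of a triangle that passes through $v$. Both proofs use $n\ge 4$ in an essential way (the paper to get degree $\ge 3$, you to get three vertices other than $v$), and both reduce to the same punchline: a single triangle with $C_1,C_2$ on the same side.
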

	
	\begin{proof}
		Assume, for a contradiction, that a vertex $v_1$ lies on the boundary of both cells $C_1$ and $C_2$. See Figure~\ref{fig:not_same}. Consider another cell $C'$ different from $C_1$ and $C_2$, with $v_1$ on its boundary. Since $v_1$ has degree $n-1 \geq 3$, this cell exists. Let $u_1$ and $u_2$ be the two vertices whose edges to $v_1$ are on the boundary of that cell $C'$. Then the triangle formed by vertices $u_1$, $u_2$, and $v_1$ always have the cells $C_1$ and $C_2$ on the same side, contradicting that $C_1$ and $C_2$ are antipodal.
		\begin{figure}[h!tb]
			\centering
			\includegraphics[page=7, scale=0.91]{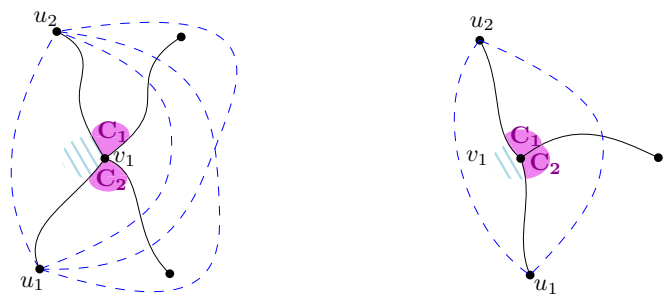}
			\caption{Edges of the star $S(v_1)$ incident to $v_1$ are drawn black; the antipodal cells are filled purple; the additional cell is indicated in cyan; the different ways to draw the edge $u_1u_2$ are drawn dashed in blue.}\label{fig:not_same}
		\end{figure}	
	\end{proof}
	
	\begin{lem}\label{lem:no_points_inside}
		Let $D$ be a simple drawing of $K_n$ that contains two antipodal vi-cells $C_1$ and~$C_2$. Let $v_1$ be a vertex on the boundary of $C_1$. Let $T$ be a triangle formed by $v_1$ and two other vertices $u_2$ and $u_3$. If there is a vertex $u$ that lies on the same side of $T$ as $C_1$, then the edge $v_1u$ lies completely on that side.
	\end{lem}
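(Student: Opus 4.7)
The plan is to argue by contradiction, analyzing cases based on whether the edge $v_1u$ crosses the boundary of the triangle $T$. Since $v_1u$ shares the endpoint $v_1$ with both $v_1u_2$ and $v_1u_3$, simplicity forces $v_1u$ to meet these edges only at $v_1$. Thus if $v_1u$ crosses the boundary of $T$ at all, it can only cross the remaining edge $u_2u_3$, and by simplicity in at most one point. In the easy subcase, where $v_1u$ does not cross $u_2u_3$, the edge lies entirely on one side of $T$, which must be the $C_1$-side because $u$ is there. The remaining subcase is that $v_1u$ crosses $u_2u_3$ in a single point $x$; then the portion $v_1x$ lies on the $C_2$-side of $T$ while $xu$ lies on the $C_1$-side, and I would rule this subcase out by applying the antipodality of $(C_1, C_2)$ to the triangles $T' = v_1u_2u$ and $T'' = v_1u_3u$, each of which is a simple closed curve in $D$ since its three edges pairwise share an endpoint.

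First I would show that $C_1$ lies on the exterior side (in the Jordan-curve sense) of both $T'$ and $T''$. The key is a local analysis at $v_1$: the angular wedge bounded by $v_1u_2$ and $v_1u_3$ on the $C_1$-side of $T$ is entirely contained in both the exterior wedge of $T'$ and the exterior wedge of $T''$ at $v_1$. This follows because in the bad subcase $v_1u$ emanates from $v_1$ into the $C_2$-side wedge of $T$, so the local interiors of $T'$ (bounded at $v_1$ by $v_1u_2$ and $v_1u$) and of $T''$ (bounded at $v_1$ by $v_1u_3$ and $v_1u$) both sit on the $C_2$-side of $T$ at $v_1$. Since $C_1$ is a cell incident to $v_1$ from within its $C_1$-side wedge of $T$, it therefore lies in the exterior of both $T'$ and $T''$. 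Next I would locate $C_2$: it lies on the $C_2$-side of $T$, and the arc $v_1x$ (the portion of $v_1u$ inside the $C_2$-side of $T$) cuts that side into exactly two subregions, namely the intersection of the $C_2$-side of $T$ with the interior of $T'$, and with the interior of $T''$, respectively. Since $C_2$ is a connected cell and does not meet any edge of $D$ in its interior, $C_2$ must lie entirely in one of these two subregions, and hence in the interior of exactly one of $T', T''$, and in the exterior of the other. Combined with the location of $C_1$, the cells $C_1$ and $C_2$ then lie on the same (exterior) side of that other triangle, contradicting the antipodality of $(C_1, C_2)$.

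The main obstacle I anticipate is the careful topological verification of the tiling claim, namely that the arc $v_1x$ separates the $C_2$-side of $T$ into exactly the two intersections described above. This relies on the structure of the induced $K_4$-subdrawing on $\{v_1, u_2, u_3, u\}$, which in the bad subcase contains exactly one crossing, $v_1u \times u_2u_3$ (using the well-known fact that a simple drawing of $K_4$ has at most one crossing), and thus has the standard topology of a ``convex quadrilateral with crossing diagonals''. In that topology, the triangles $T'$ and $T''$ share the edge $v_1u$ and together tile the quadrilateral $v_1u_2uu_3$, which is precisely what justifies the tiling of the $C_2$-side of $T$ into the two claimed subregions.
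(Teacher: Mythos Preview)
Your proof is correct, but it takes a different route from the paper's. After reaching the same contradiction setup (the $K_4$ on $\{v_1,u_2,u_3,u\}$ has exactly the crossing $v_1u \times u_2u_3$), the paper uses the single auxiliary triangle $T'=uu_2u_3$: since $uu_2$ and $uu_3$ cannot cross the boundary of $T$, this triangle lies entirely on the $C_1$-side of $T$, so the side of $T'$ that contains the whole $C_2$-side of $T$ also contains a neighbourhood of $v_1$ (as $v_1\notin T'$), hence both $C_1$ and $C_2$ lie on that same side of $T'$---an immediate contradiction.

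Your approach instead uses the two triangles $v_1u_2u$ and $v_1u_3u$, together with a local wedge analysis at $v_1$ and the tiling of the $C_2$-side of $T$ by the arc $v_1x$. This is sound (the $K_4$ structure indeed forces the ``convex quadrilateral with crossing diagonals'' picture, so the tiling claim holds), but it costs you a case split and a more delicate topological verification. The paper's choice of $uu_2u_3$ avoids both: there is only one triangle to consider, and the key containment (``$T'$ lies on one side of $T$'') follows in one line from the $K_4$ crossing bound. Your argument buys nothing extra here, so if you are revising, switching to the triangle opposite $v_1$ would shorten the proof considerably.
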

	
	\begin{proof}
		Assume, for a contradiction, that the edge $v_1u$ does not lie completely on the same side of $T$ as $C_1$, and thus crosses the boundary of $T$. See Figure~\ref{fig:inside}. The only edge it can cross is $u_2u_3$, as the other edges are incident to $v_1$. Thus, the drawing induced by $v_1$, $u_2$, $u_3$ and $u$ contains a crossing between $v_1u$ and $u_2u_3$. Since any simple drawing of $K_4$ contains at most one crossing, the edges $uu_2$ and $uu_3$ cannot cross the boundary of $T$. Thus, the triangle $T'$ formed by $u$, $u_2$ and $u_3$ has to lie on the same side of $T$ as $C_1$, but keeping $C_1$ and $C_2$ on one of its sides, which is a contradiction to the definition of antipodal.
	\end{proof}

	\begin{figure}[htb]
		\centering
		\includegraphics[page=1, scale=0.91]{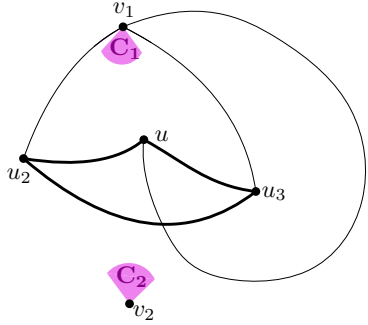}
		\caption{An illustration of Lemma~\ref{lem:no_points_inside}. The vertex $v_2$ is placed as an example in one of the possible faces and could be at another place, but $C_1$ and $C_2$ lie on different sides of the triangle $v_1u_2u_3$ by definition. Thus, by construction of the triangle $uu_2u_3$, the cells $C_2$ and $C_1$ lie on the same side of $uu_2u_3$.}
		\label{fig:inside}
	\end{figure}

	Now, we can prove Theorem~\ref{thm:antipodal_then_gtwisted}.
	
	\thmantipodal*

	\begin{proof}
		Let $(C_1,C_2)$ be a pair of antipodal vi-cells of $D$,
		$v_1$ a vertex on the boundary of $C_1$, $v_2$ a vertex on the boundary of $C_2$, and $S(v_1)$ the star of $v_1$. Note that by Lemma~\ref{prop:different_v}, $v_1$ and $v_2$ are different. We draw a simple curve $c$ from $v_1$ to $v_2$ such that it emerges from $v_1$ in the cell $C_1$ and ends in the cell $C_2$ very close to $v_2$, and the following holds:

		\begin{enumerate}
			\item The curve $c$ does not cross any edges of~$S(v_1)$.
			\item All intersections of $c$ with edges of $D$ are proper crossings.
			\item Over all curves for which 1 and 2 hold, the curve $c$ has the minimum number of crossings with edges of $D$.
		\end{enumerate}
		Since $S(v_1)$ is a plane drawing that has only one face in which both $v_1$ and $v_2$ lie, drawing $c$ is always possible. See for example Figure~\ref{fig:curve}. We will prove that $c$ crosses every edge of $D \setminus S(v_1)$ exactly once.
		To show that $c$ crosses an arbitrary edge $w_2w_3$ exactly once, we will first show that $c$ crosses $w_2w_3$ an odd number of times and then show that $c$ crosses $w_2w_3$ at most twice.
		
		\begin{figure}[htb]
			\centering
			\includegraphics[page=5, scale=0.91]{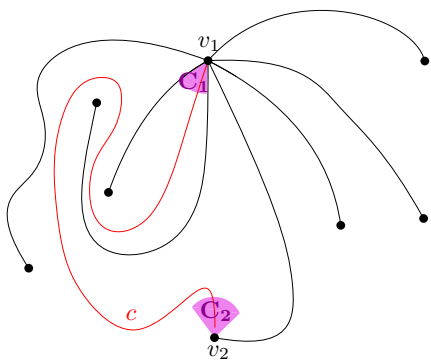}
			\caption{Edges of the star $S(v_1)$ incident to $v_1$ are drawn black; the antipodal cells are filled purple; the curve $c$ is drawn in red.}\label{fig:curve}
		\end{figure}
		
		To observe that $c$ has to cross $w_2w_3$ an odd number of times, consider the triangle $T$ formed by $v_1$, $w_2$ and $w_3$. Since $c$ connects two antipodal cells, the endpoints of $c$ have to be on two different sides of $T$. Thus, $c$ has to cross the boundary of $T$ an odd number of times. Since $c$ does not cross $S(v_1)$, it has to cross $w_2w_3$ an odd number of times.
		
		We show now that $c$ crosses $w_2w_3$ at most twice. Assume to the contrary that $c$ crosses $w_2w_3$ at least three times. Without less of generality, we may assume that $C_1$ is inside $T$ and $C_2$ is outside. Given two crossing points $x$ and $y$ between $c$ and $w_2w_3$ that are consecutive on $c$ when going from $v_1$ to $v_2$, a \emph{lens} is the region to the left of the cycle formed by the arc $xy$ on $c$ and the arc $yx$ on $w_2w_3$. See Figure~\ref{fig:lens_minimal} for an illustration. The pairs of two consecutive crossing points between $c$ and $w_2w_3$ on $c$ define a set of lenses on both sides of $T$, possibly nested (see Figure~\ref{fig:lens_minimal}). Note that since $c$ crosses $w_2w_3$ at least $3$ times, there is at least one lens on each side of $T$. Among all the lens on the same side of $T$ as $C_1$, we take one that does not contain any other lens in its interior. This lens $L$ always exists by taking the "innermost" one in a set of nested lenses.

		\begin{figure}[htb]
			\centering
			\includegraphics[page=3, scale=0.91]{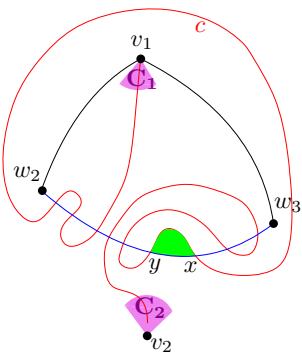}
			\caption{The curve $c$ crossing several times the edge $w_2w_3$. If there are nested lenses in $T$, we take a minimal one that does not contain any nested lenses. This minimal lens $L$ is shaded green.}\label{fig:lens_minimal}
		\end{figure}
		
		Let $x$ and $y$ be the two crossing points defining the lens, so the boundary of $L$ consists of arc $xy$ on $c$ and arc $yx$ on $w_2w_3$. We claim that there is no vertex in $L$. Assume that there is a vertex $u$ inside $L$ (see Figure~\ref{fig:lens_with_vertex}). By Lemma~\ref{lem:no_points_inside}, the edge $uv_1$ cannot cross the edge $w_2w_3$. Thus, it has to cross $c$ in order to get from $u$ inside the lens to $v_1$ outside the lens. This is a contradiction to $c$ being drawn such that it does not cross $S(v_1)$. Thus, there is no vertex in $L$, as claimed.
		
		\begin{figure}[htb]
			\centering
			\includegraphics[page=2, scale=0.91]{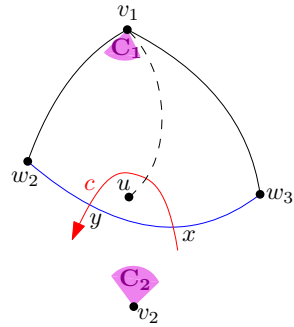}
			\caption{There is no vertex in $L$.
				If there is a vertex $u$ inside $L$, then the edge from $u$ to $v_1$ (drawn dashed) would have to cross $c$, which it cannot by construction.}
			\label{fig:lens_with_vertex}
		\end{figure}
		
		Since $L$ does not contain any vertex, then every edge that crosses the arc $yx$ on $w_2w_3$ has to also cross at least once the arc $xy$ on the curve $c$ (as there is no vertex in the lens where it could stop and it cannot cross the edge $w_2w_3$ more than once); see Figure~\ref{fig:lens_empty}~(left). Thus, $c$ can be drawn such that it stops before $x$, follows an arc very close to the arc $xy$ on $w_2w_3$ until a point very close to $y$, and then continues as $c$ did before; see Figure~\ref{fig:lens_empty}~(right). This way the new drawing of the curve is still simple and does not have any crossings that the original one did not, but two less crossings with $w_2w_3$, which is a contradiction to the minimality of~$c$. In conclusion, the curve $c$ cannot cross any edge $w_2w_3$ more than twice.
		
		\begin{figure}[htb]
			\centering
			\includegraphics[page=4, scale=0.91]{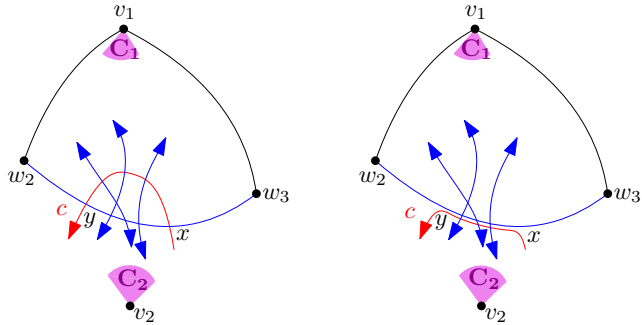}
			\caption{Left: The lens $L$ does not contain any vertices, thus all edges crossing $w_2w_3$ within the lens have to leave the lens crossing~$c$. Right: The curve $c$ is redrawn such that it has fewer crossings.}
			\label{fig:lens_empty}
		\end{figure}
		
		Therefore, since $c$ crosses all edges in $D \setminus S(v_1)$ an odd number of times and at most twice, it follows that it crosses all edges in $D \setminus S(v_1)$ exactly once,  while (per construction) it does not cross any edges of~$S(v_1)$.
		
		We can transform $c$ to a curve that crosses all edges exactly once in the following way: Instead of the starting point being $v_1$ we remove an $\varepsilon$ of the curve on this end such that it starts very close to $v_1$ in the cell $C_1$ (and consequently still crosses exactly the edges it crossed before). Then, on that start in $C_1$, we extend the curve by going around the vertex $v_1$ so close to $v_1$ that the extension crosses exactly the edges of $S(v_1)$, and then ending again in $C_1$; see Figure~\ref{fig:final_curve}. This way the extension crosses all edges of $S(v_1)$ exactly once and consequently, we obtained a curve crossing all edges exactly one, with its endpoints not lying on any edges or vertices of $D$.
		\begin{figure}[htb]
			\centering
			\includegraphics[page=6, scale=0.91]{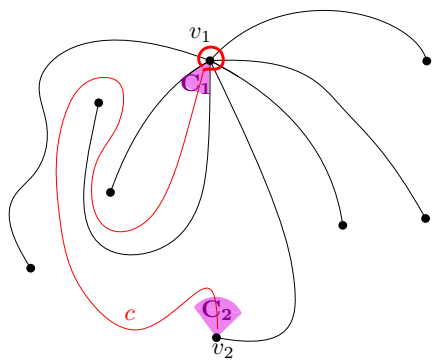}
			\caption{The resulting curve after the extension. The last part crossing $S(v_1)$ is drawn bold.}\label{fig:final_curve}
		\end{figure}
	\end{proof}

\section{Characterizing via a Curve Crossing Everything}\label{appendix:char_curve}

In this section we prove Theorem~\ref{thm:curve_then_gtwisted}.

	\thmcurvegtiwsted*

We first show several properties for drawings such that there is a simple curve $c=OZ$ crossing every edge once. Then, we show that we can extend the drawing~$D=D_n$ to a drawing~$D_{n+2}$ of $K_{n+2}$ by adding~$O$ and~$Z$ as vertices, the curve~$c$ as an edge, and edges from~$O$ and~$Z$ to each vertex~$w$ of~$D_n$ in such a way that~$D_{n+2}$ fulfills the following properties.
	\begin{enumerate}[leftmargin=*,label={\textbf{(P\arabic*)}}]
		\item\label{p:simple} $D_{n+2}$ is a simple drawing. (This implies in particular that none of the curves incident to~$O$ crosses another curve incident to~$O$ and no curve incident to~$Z$ crosses another curve incident to~$Z$.)
		\item\label{p:OZplane}  No edge incident to~$O$ crosses any edge incident to~$Z$. 
	\end{enumerate}

	\subsection*{Notation and Basic Properties}
	
	Assume that $D_n$ is a simple drawing of $K_n$ and $c=OZ$ a simple curve crossing every edge of $D_n$ once.
	
	We will consider two orientations for each edge of~$D_n$. By $uv$ we denote the edge oriented from $u$ to~$v$, and by $vu$ the same edge oriented from $v$ to~$u$. If $x,y$ are points placed in that order on the edge $uv$ of $D_{n+2}$, then the portion of the curve $uv$ placed between $x$ and $y$ is called the \emph{arc} $xy$. We also consider the arcs oriented from the first point to the second point. Consequently, $yx$ has the same points as $xy$ but with the opposite orientation. We orient $c$ from $O$ to~$Z$. When considering the star~$S(w)$ of a vertex $w$, we will always consider the edges oriented from $w$ to the other endpoint.

	When the edge $wu$ crosses $OZ$ in a crossing point $x$ we know that this crossing can be of two different ways, depending on the radial order of the arcs $xO, xw, xZ, xu$ around point $x$. We will say that $wu$ is a  \emph{top edge} if around $x$ the arcs $xu,xO,xw,xZ$ appear clockwise in this order, and $wu$ is a \emph{bottom edge} when that clockwise order is  $xu,xZ,xw,xO$. In the figures we draw the curve $c=OZ$ as a horizontal line, thus the directed edges reaching that line by its top side are precisely the top edges. Note that if $wu$ is a top edge, then $uw$ is a bottom edge.

		\begin{figure}[!htb]
			\centering
			\includegraphics[scale=0.75,page=1]{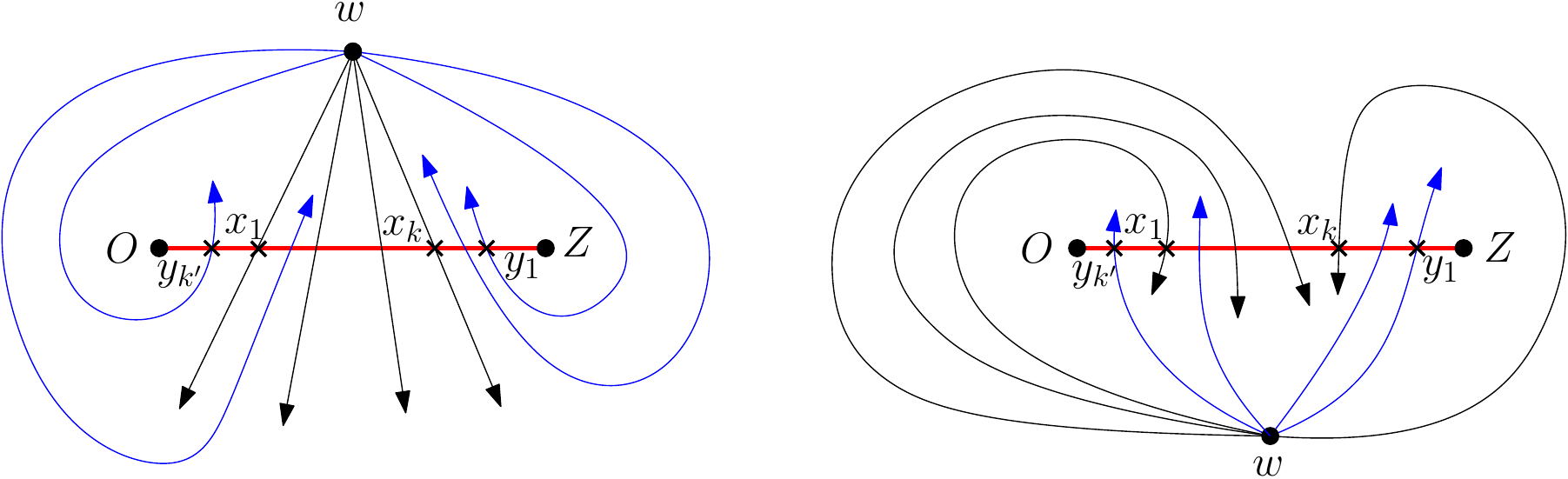}
			\caption{Top edges are drawn in black, bottom edges are drawn in blue.}\label{fig:TopBottom}
		\end{figure}

	Three arcs, $xy$ on the edge $e_1$, $yz$ on the edge $e_2$ and $zx$ on the edge $e_3$, form a cycle and divide the plane (or the sphere) into two regions $A,B$. By triangular region $xyz$ we mean the region ($A$ or $B$) found on the left side when we walk the cycle in the order $x$, then $y$, then $z$, and returning to $x$, using the corresponding arcs in~$e_1,e_2,e_3$. In the same way, if the arcs $x_1x_2,x_2x_3,\ldots ,x_{k-1}x_k,x_kx_1$ form a simple cycle, the region found on the left side when we walk the cycle in the order $x_1,x_2,\ldots ,x_k,x_1$ will be denoted by $x_1x_2\ldots x_k$.  We suppose that the drawings are on the sphere $S^2$, so we consider drawings homeomorphic in  $S^2$ as topologically identical, like the left and right drawings of Figure~\ref{fig:TopBottom}. However, in the figures, as the drawings are shown on the plane, a given region can be bounded or unbounded. For example, region $wx_1y_1$ is bounded in the left drawing of Figure~\ref{fig:TopBottom} and unbounded in the right drawing. But in both drawings, the arcs, vertices and crossings inside region $wx_1y_1$ are the same.
	
	In our constructions, we are going to draw new arcs that are close to (or glued to) arcs of~$D_n$. A new arc $a'$ is \emph{close} or \emph{glued} to the arc $ux$ if
	\begin{enumerate}[label={\textbf{(g\arabic*)}}, leftmargin=*]
		\item\label{closei} An edge crosses $a'$ if and only if it crosses $ux$.
		\item\label{closeii} All the crossing points on $a'$ and on $ux$ have the same order.
		\item\label{closeiii} The arcs $a'$ and $ux$ do not cross each other. (They can share the endpoints, points $u,x$, but do not have to.)
	\end{enumerate}

	\begin{lemma}\label{prop:propertiesa}
		Let $w$ be a vertex of $D_n$. The edges of $S(w)$ satisfy the following properties:
		
		\begin{enumerate}[label={\textbf{(a$\mathbf{_\arabic*}$)}}, leftmargin=*]
			\item\label{lem:a1} 
			When exploring counterclockwise around $w$ the edges of $S(w)$, the top edges are consecutive, $wu_1, wu_2 \ldots, wu_k$, and cross the curve $c=OZ$ at points $x_1,\ldots ,x_k$ in that order. Then the bottom edges are consecutive, $wv_1,\ldots ,wv_{k'}$ (where $k'=n-k-1$), and cross the curve $c'=ZO$ at points $y_1,\ldots ,y_{k'}$ in that order. See Figure~\ref{fig:TopBottom}.
			\item\label{lem:a2}	Let $z_1$ and $z_{n-1}$ be the first and the last crossing points of $S(w)$ on the curve $c$. Then the endpoints of the bottom edges of $S(w)$ are inside the triangular region $z_1z_{n-1}w$, and the endpoints of the top edges are outside that region.	See Figure~\ref{fig:Propertya2}.
		\end{enumerate}
	\end{lemma}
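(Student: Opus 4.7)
The plan is to cut the sphere along the simple arc $c = OZ$, obtaining a topological disk $D$ whose boundary $\partial D$ consists of two arcs $c^+$ and $c^-$ meeting at $O$ and $Z$. In $D$ the vertex $w$ lies in the interior, and every edge $wu_i$ of $S(w)$ is cut at its unique crossing with $c$ into a first half ending on $\partial D$ and a second half. The key observation, which drops out immediately of the clockwise-order definitions, is that the first half of a top edge lands on $c^+$ while the first half of a bottom edge lands on $c^-$: at a top-edge crossing the tangent $xw$ lies on the left of $c$ oriented from $O$ to $Z$, and at a bottom-edge crossing it lies on the right.

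For (a1), planarity of $S(w)$ will force the $n{-}1$ first halves to be pairwise disjoint arcs from $w$ to distinct points of $\partial D$, so their cyclic order around $w$ coincides with the cyclic order of their endpoints along $\partial D$. Since $c^+$ and $c^-$ are each connected arcs of $\partial D$, the top (resp.\ bottom) first halves occupy a consecutive stretch of $\partial D$ and thus form a consecutive block in the counterclockwise order around $w$. A short orientation check --- that counterclockwise around $w$ corresponds to traversing $\partial D$ counterclockwise, visiting $c^+$ from $O$ to $Z$ and then $c^-$ from $Z$ to $O$ --- then yields the stated orderings $x_1, \ldots, x_k$ along $c = OZ$ and $y_1, \ldots, y_{k'}$ along $c' = ZO$.

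For (a2), the orientation convention places the interior of $T = z_1 z_{n-1} w$ on the left of the cycle $z_1 \to z_{n-1} \to w \to z_1$. Since the arc of $c$ bounding $T$ is traversed in the direction of $c$ (as $z_1$ and $z_{n-1}$ are the first and last crossings on $c$), the interior sits on the $c^+$ side of that arc. By the definition of top/bottom, the second half of a top edge leaves its crossing into the $c^-$ side, whereas the second half of a bottom edge leaves into the $c^+$ side. Simplicity of $D_n$ prevents any second half from crossing another edge of $S(w)$ (they all share the endpoint $w$) or from meeting $c$ a second time, so it cannot cross $\partial T$ and must stay in the region of the complement of $\partial T$ it initially enters. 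Hence the top endpoints $u_i$ lie outside $T$ and the bottom endpoints $v_j$ lie inside $T$.

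The step I expect to require the most care is justifying that cutting along $c$ really produces a disk in which the top/bottom classification translates cleanly into the ``first half lands on $c^+$ versus $c^-$'' dichotomy; this relies on the global consistency of left and right of $c$, well-defined because $c$ is a simple arc on the sphere with distinct endpoints $O$ and $Z$. A secondary subtlety for (a2) is the corner case when $z_1$ or $z_{n-1}$ lies on a bottom edge, where a brief local sector analysis at the corner is needed to confirm that the second half of that edge still enters the interior of $T$ rather than the exterior.
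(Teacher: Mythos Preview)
Your proposal is correct and takes essentially the same approach as the paper. Where you cut the sphere along $c$ to obtain a disk with boundary $c^+\cup c^-$, the paper instead doubles $c$ into a closed curve $C=a\cup a'$ (an arc glued just above $c$ and one just below); both \ref{lem:a1} and \ref{lem:a2} then follow by the same planarity and simplicity arguments you outline, and the paper likewise leaves the corner cases of \ref{lem:a2} at the level of a picture reference.
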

	\begin{proof}
		\ref{lem:a1} Draw a new arc $a$ glued to $OZ$ by its top part, and another arc $a'$ glued to $ZO$ on the bottom part. Both have endpoints $O$ and $Z$, and thus $a,a'$ define a cycle~$C$. The edges of the star $S(w)$ in counterclockwise order have to reach (that is, have their first crossing point with) the cycle $C$ at points placed in clockwise order on $C$. The top edges of $S(w)$ are the ones reaching $C$ on the arc~$a$, and the corresponding crossing points $x_1, \ldots, x_k$ on $OZ$ are in increasing order (from $O$ to $Z$). Then come the bottom edges reaching $C$ on the arc~$a'$, their corresponding crossing points $y_1, \ldots, y_{k'}$ in this order on $ZO$.

		\begin{figure}[!htb]
		\centering
		\includegraphics[scale=0.7,page=2]{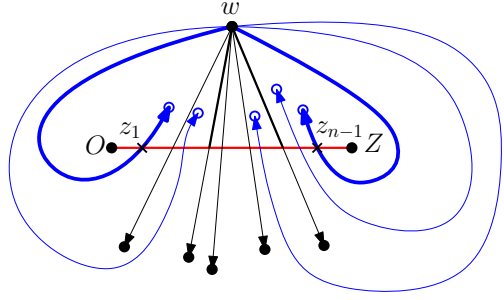}
		\caption{Property $a_2$: The vertices placed in the regions with a blue circle can be reached from $w$ only with bottom edges.}\label{fig:Propertya2}
	\end{figure}
	
		\ref{lem:a2} Notice that $z_1$ can be either $x_1$ or $y_{k'}$. Similarly, $z_{n-1}$ can be $x_k$ or $y_1$. In any case, the only way to connect $w$ to a vertex  $v$ placed inside the triangular region $z_1z_{n-1}w$ and crossing $OZ$, is crossing the arc $z_1z_{n-1}$ from its bottom part. For the same reason, all the vertices placed outside that region are precisely the endpoints of the top edges of~$S(w)$. See Figure~\ref{fig:Propertya2}.
	\end{proof}

	\begin{lemma}\label{prop:propertiesb}
		Assume $wu_1,\ldots ,wu_k$ are the top edges of $S(w)$, $k\geq 2$, and let $x_1,\ldots ,x_k$ be the corresponding crossing points on $OZ$. Consider two of those edges $wu_i,wu_j$, with $x_i$ placed before $x_j$. Then
		
		\begin{enumerate}[label={\textbf{(b$\mathbf{_\arabic*}$)}}, leftmargin=*]
			\item\label{lem:b1} If $e=u_iu_j$ is bottom, it must cross $c$ at a point $x$ placed on $Ox_i$. If it is top, it must cross $c$ at a point $y$ placed on $x_jZ$. See Figure~\ref{fig:Propertyb1New}.
			
			\item\label{lem:b2} Suppose that there is an edge $e$ of $D_n$ crossing both arc $wx_1$ and arc $wx_k$. Then the two endpoints of $e$, vertices $v,v'$, have to be endpoints of bottom edges of $S(w)$, and not both vertices can be inside the triangle~$x_1x_kw$.
		\end{enumerate}
	\end{lemma}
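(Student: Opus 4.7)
My plan for \ref{lem:b1} has two stages. First, I will consider the simple closed curve $\gamma = wx_i \cup x_ix_j \cup x_jw$ built from the three arcs described in the statement, bounding a region $R$ on the sphere. Since $u_iu_j$ shares endpoints with $wu_i$ (at $u_i$) and $wu_j$ (at $u_j$), the simplicity of $D_n$ forbids it from crossing the arcs $wx_i$ and $x_jw$ in their interiors; combined with the hypothesis that $u_iu_j$ meets $c$ exactly once, the edge meets $\gamma$ at most once overall. The endpoints $u_i, u_j$ lie on the outside of $R$ (on the local side of $c$ opposite to $w$ near $x_i$ and $x_j$, since $wu_i, wu_j$ are top), so an at-most-one intersection is actually no intersection, and therefore the crossing $y$ of $u_iu_j$ with $c$ must lie in $Ox_i$ or in $x_jZ$, never in $x_ix_j$.

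In the second stage I distinguish the two cases by the local cyclic order at $y$. Since $u_iu_j$ is confined to the complement of $R$, it must wrap around either $O$ or $Z$ in order to connect $u_i$ to $u_j$ without a second $c$-crossing, and each choice is forced by the position of $y$ on $c$. I plan to verify, by listing the clockwise order of the four arcs $yu_i, yO, yu_j, yZ$ around $y$ in each of the two configurations, that wrapping around $O$ puts $y$ into $Ox_i$ together with the bottom cyclic pattern, whereas wrapping around $Z$ puts $y$ into $x_jZ$ together with the top cyclic pattern.

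For \ref{lem:b2}, I will run the analogous region argument with $\gamma' = wx_1 \cup x_1x_k \cup x_kw$ bounding a region $R'$. If an endpoint $v$ of $e$ were the endpoint $u_m$ of some top edge $wu_m$ of $S(w)$, then $e$ shares $u_m$ with $wu_m$ and cannot cross $wu_m$; combined with the two enforced crossings of $e$ with $wx_1, wx_k$ and the single $c$-crossing, this leaves $e$ no consistent route from $u_m$ to the other endpoint, a contradiction, so $v$, and symmetrically $v'$, is the endpoint of a bottom edge. The final claim that $v$ and $v'$ are not both inside the triangle $x_1x_kw$ will follow from a crossing-parity count on the boundary of that triangle, combined with the fact that $e$ crosses each of $wx_1, wx_k$ exactly once and $c$ exactly once. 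The main obstacle will be the case analysis in the second stage of \ref{lem:b1}, where the global wrapping of $u_iu_j$ around $O$ or $Z$ must be translated cleanly into the local clockwise order of the four arcs at $y$ so as to match the top/bottom patterns exactly.
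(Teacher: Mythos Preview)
Your Stage~1 for \ref{lem:b1} is correct and matches the paper: the region $R=x_ix_jw$ cannot be entered by $u_iu_j$, so the crossing with $c$ lies on $Ox_i$ or $x_jZ$.

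Your Stage~2 is a different route from the paper. The paper fixes the orientation (say bottom), supposes for contradiction that the crossing $x$ lies on $x_jZ$, and then exhibits a concrete region $u_ixx_jw$ that traps the edge away from $u_j$. Your plan instead tries to read off the top/bottom type directly from the position of $y$ via ``wrapping around $O$ or $Z$''. This can be made to work, but it is not as automatic as you suggest: the complement of $R$ on the sphere is a disk in which the two arcs $Ox_i$ and $x_jZ$ are slits ending at interior points, so removing either one does not disconnect anything, and the cyclic order at $y$ is not determined by the position of $y$ alone without further use of where $u_i,u_j$ sit and how the arcs $x_iu_i$, $x_ju_j$ emanate. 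You would still need a region argument (essentially the paper's) to pin this down; as written, ``verify by listing the clockwise order'' hides the actual work.

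For \ref{lem:b2} there is a genuine gap. Your parity count on the boundary of $x_1x_kw$ only yields that $e$ does not cross $x_1x_k$ when both endpoints are inside; it does \emph{not} by itself give a contradiction, because on the sphere $c$ is an open arc and ``$e$ has already used its unique $c$-crossing'' does not prevent $e$ from returning to the $w$-side. The paper closes this case by bringing in the bottom edge $wv$ and its crossing point $y$ with $OZ$: the region $wyx_1$ (respectively $wx_ky$) then has all three boundary arcs on edges that $e$ cannot cross (again), trapping $e$ away from $v'$. You are missing this use of $wv$. Similarly, your argument that an endpoint cannot be some $u_m$ (``no consistent route'') is only a sketch; the paper needs two separate region arguments (one when the other endpoint is inside $x_1x_kw$, one when it is outside) to rule this out, and in particular uses that $e$ would be forced to cross $OZ$ twice. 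Your single adjacency observation ($e$ cannot cross $wu_m$) is not enough to reach a contradiction without these additional regions.
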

	\begin{proof}
		\ref{lem:b1} Suppose $e=u_iu_j$ is a bottom edge; the other case follows analogously. Since the two endpoints $u_i,u_j$ are outside the triangular region~$x_ix_jw$, the edge $u_iu_j$ cannot enter in that region without breaking the simplicity, and therefore $u_iu_j$ cannot cross the arc $x_ix_j$. If the edge $e$ crosses $OZ$ at a point $x$ on~$x_jZ$, just after crossing $OZ$, $e$ is outside the region $u_ixx_jw$, but vertex $u_j$ is inside that region, so it is impossible to reach $u_j$ without either breaking the simplicity of $D_n$ or crossing the curve $OZ$ twice. Hence, $e$ must cross $OZ$ on the arc~$Ox_i$. See Figure~\ref{fig:Propertyb1New} left.
		
		\begin{figure}[!htb]
			\centering
			\includegraphics[scale=0.7,page=3]{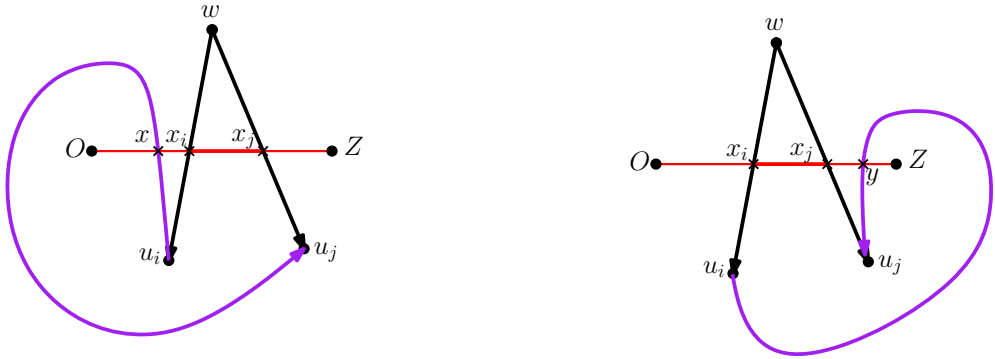}
			\caption{Property $b_1$: The edge $u_iu_j$ must cross $OZ$ as in the left figure or as in the right figure.}\label{fig:Propertyb1New}
		\end{figure}
		
		\ref{lem:b2} Suppose that the edge $e=vv'$ crosses both $wx_1$ and $wx_k$.
				
		\begin{figure}[!htb]
			\centering
			\includegraphics[scale=0.67,page=4]{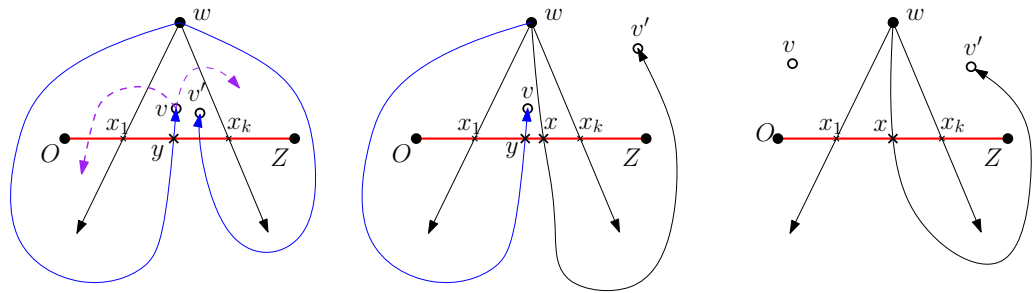}
			\caption{Property $b_2$: Edge $e$ cannot cross both $wx_1,wx_k$ if its endpoints are inside $x_1x_kw$ or when one of them is an endpoint of a top edge.}\label{fig:Propertyb2New}
		\end{figure}

		We first analyze the case when both endpoints of $e$, vertices $v,v'$, are inside the triangular region~$x_1x_kw$, and therefore, by Property~\ref{lem:a2}, both are endpoints of bottom edges of $S(w)$. Let $y$ be the crossing point of $wv$ with $OZ$. Then, $e=vv'$ cannot cross the arc $x_1x_k$ because otherwise the boundary of $x_1x_kw$ is crossed three times, contradicting that both $v,v'$ are in that region. So, $vv'$ has to first cross either the arc $wx_1$ or $wx_k$. In the first case, after crossing $wx_1$, the edge is in the region $wyx_1$ and the vertex $v'$ is outside that region, hence the edge cannot leave that region keeping the simplicity. See Figure~\ref{fig:Propertyb2New} left. Similarly, if $vv'$ first crosses $wx_k$, it enters in the region $wx_ky$, but the vertex $v'$ is outside that region, therefore the edge cannot leave that region without breaking the simplicity.
		
		Suppose now that $v$ is inside the region~$x_1x_kw$, and $v'$ is an endpoint of a top edge $wv'$ crossing $OZ$ at a point $x$. As above, let $y$ be the crossing point of $wv$ with $OZ$, and suppose that $y$ is placed before $x$ on $OZ$. See Figure~\ref{fig:Propertyb2New} centre. Then,  $v$ is in the region $wyx$ and  $v'$ is outside that region, therefore the edge $e=vv'$ has to cross the arc $yx$. On the other hand, since neither $v'$ nor $v$ are in the region $xx_kw$ and $e$ crosses $wx_k$, it has to cross also the arc $xx_k$, hence $e$ should cross $OZ$ twice, a contradiction. A similar analysis can be done in the symmetric case, when $y$ is placed after $x$.
		
		Finally, suppose that $v$ is outside the region~$x_1x_kw$, and $v'$ is an endpoint of a top edge $wv'$ crossing $OZ$ at point $x$. As both vertices $v,v'$ are outside the region $x_1x_kw$, the edge $vv'$ cannot cross three times the boundary of that region, so $e$ cannot cross the arc $x_1x_k$. However, when $e$ enters in that region, by crossing $wx_1$ or $wx_k$, as it cannot cross the arc $wx$, it should cross $x_1x_k$, again a contradiction. See Figure~\ref{fig:Propertyb2New} right.
	\end{proof}

	If we consider a mirror drawing of $D_n$ on the horizontal line $OZ$, all the top edges become bottom and vice versa, then, $\ref{lem:b1}$ has a symmetric Property $\mathbf{(b'_1)}$:
	
	If $wu_i,wu_j$ are bottom edges and the crossing point $x_i$ is placed before $x_j$, then
	if $e=u_iu_j$ is top, it must cross $c$ at a point $x$ placed on $Ox_i$, and  if it is bottom, it must cross $c$ at a point $y$ placed on $x_jZ$.

		\begin{lemma}\label{prop:alltop}
		There is one vertex $w_1$ such that all the edges emanating from $w_1$ are top.
	\end{lemma}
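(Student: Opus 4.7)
The plan is to orient each edge of $D_n$ by its top/bottom designation: orient edge $e$ as $u \to v$ precisely when $uv$ is the top orientation of $e$. This yields a tournament $T_{\text{top}}$ on the vertex set of $D_n$, and the lemma amounts to saying that $T_{\text{top}}$ has a source. I will show this by proving that $T_{\text{top}}$ contains no directed $3$-cycle, whence by the classical fact that a tournament without a directed $3$-cycle is transitive, $T_{\text{top}}$ is a total order and thus has a unique source $w_1$.

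So suppose for contradiction that $uv$, $vw$, $wu$ are all top, giving a directed $3$-cycle $u \to v \to w \to u$. Consider the triangle $\Delta$ formed by these three edges. By Property~$(a_1)$, the counterclockwise rotation at $u$ has $uv$ (top) preceding $uw$ (bottom, since $wu$ being top forces $uw$ to be bottom), and analogously at $v$ and $w$. A direct check then shows that in the closed walk $u \to v \to w \to u$, at each vertex the departing edge is the next edge clockwise after the arriving edge in the rotation system, which is precisely the combinatorial rule for a face walk keeping a face on the left. Hence the interior of $\Delta$ lies on the left of this walk.

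Examining the three crossings of $c$ with the edges of $\Delta$: at the crossing of $uv$ with $c$, the walk from $u$ (locally north of $c$, as the top vertex) to $v$ (locally south) is directed south, so its left side is east, meaning the interior of $\Delta$ lies on the $Z$-side of $uv$ at this crossing. The analogous reasoning applies at the crossings of $vw$ and $wu$. Since $c$ is oriented from $O$ to $Z$, it crosses each of these three edges from their $O$-side (exterior of $\Delta$) to their $Z$-side (interior), entering $\Delta$ three times. But along a simple curve with endpoints $O$ and $Z$, the number of entries and exits into any bounded region differ by at most one, so three entries and zero exits is impossible, yielding the desired contradiction. The main obstacle is verifying the face-walk rule carefully so that the "interior on the left" conclusion is justified purely from the rotation data given by Property~$(a_1)$; once that is in place, the parity argument finishes quickly.
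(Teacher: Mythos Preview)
Your tournament approach is correct and genuinely different from the paper's inductive proof. The paper argues by induction on $n$: remove a vertex $w'$, find by induction a vertex $w$ whose star is all-top in $K_{n-1}$, and if $ww'$ happens to be bottom, use Properties~\ref{lem:a2} and~\ref{lem:b2} to show that then $w'$ itself must have an all-top star. Your acyclicity argument is more direct, uses nothing beyond the definition of top/bottom, and as a bonus immediately yields the full linear order $w_1,\ldots,w_n$ that the paper only obtains afterward by iterating the lemma.

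That said, the middle paragraph invoking Property~\ref{lem:a1} and the face-walk rule does not work as written and should be deleted. In the triangle there are only two edges at each vertex, so the rotation system is trivial and cannot distinguish the two sides of $\Delta$; and Property~\ref{lem:a1}, which concerns the full star $S(u)$ in $D_n$, does not pin down a linear (as opposed to cyclic) order of just the two edges $uv$ and $uw$. Fortunately this detour is unnecessary. Simply \emph{define} the ``interior'' to be the region on the left of the oriented Jordan curve $u\to v\to w\to u$; this region is globally well-defined. Your computation in the final paragraph then shows, directly from the clockwise order $xv,xO,xu,xZ$ in the definition of a top edge, that at each of the three crossings the curve $c$ passes from the right side of this Jordan curve to its left side. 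Since a simple arc meeting a Jordan curve transversally must alternate sides at consecutive crossings, three same-direction crossings are impossible, and the contradiction follows. So the core argument is sound once the face-walk paragraph is excised.
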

	Lemma~\ref{prop:alltop} is depicted in Figure~\ref{fig:AllTop}.

		\begin{figure}[!htb]
		\centering
		\includegraphics[scale=0.7,page=5]{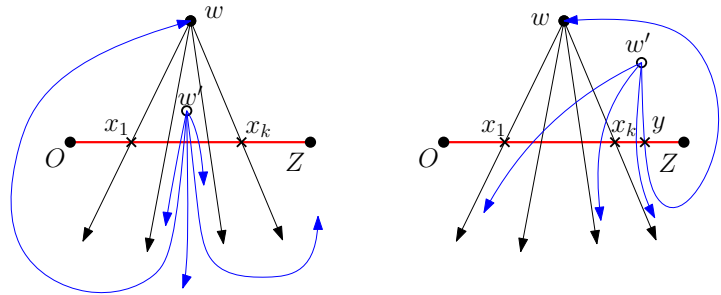}
		\caption{Lemma~\ref{prop:alltop}: There is a vertex $w_1$ such that all the edges of $S(w_1)$ are top.}\label{fig:AllTop}
	\end{figure}

	\begin{proof} We prove the existence of such a vertex by induction on the number of vertices~$n$. For $n=2$ the lemma obviously is true. Now, consider a simple drawing $D_n$ of~$K_n$ and assume that the lemma holds for any simple drawing of~$K_{n-1}$ (with all their edges crossed by a curve). By removing a vertex $w'$ of~$D_n$, we obtain a simple drawing of $K_{n-1}$, by induction, containing a vertex $w$ such that all the edges of $S(w)$ are top. If $ww'$ is also top in $D_n$, all the edges of $S(w)$ are top in~$D_n$, and $w$ is the sought vertex. Suppose now that $ww'$ is bottom and let $x_1,x_k$ be the crossing points of the first and last edges of $S(w)$. If $ww'$ crosses the arc $x_1x_k$,  $w'$ has to be inside the region ~$x_1x_kw$. Then, as all the other vertices are outside that region,  by Property$\ref{lem:b2}$, all the edges of $S(w')$ must cross first $x_1x_k$ and therefore they are top. Finally, if $ww'$ is bottom and does not cross $x_1x_k$ it has to cross $OZ$ through a point $y$ on the arc $x_kZ$ or on the arc $Ox_1$. Suppose $y$ is on $x_kZ$ (the other case is symmetric), then $w'$ is in the triangular region~$x_1yw$, and all the other vertices out of that region. Thus, again by Property$\ref{lem:b2}$,  all edges incident to $w'$ have to cross first the top part of~$OZ$, and therefore $w'$ is the sought vertex.
	\end{proof}

	By removing~$w_1$, we obtain a subdrawing with one vertex $w_2$ of which all the edges are top, by removing $w_2$ another vertex~$w_3$, and so on. Thus, we obtain an order of the vertices $w_1,\ldots ,w_n$ such that for each vertex $w_i$ \textbf{the edges $w_iw_j$ with $j>i$ are top edges} ($n-i$ edges) \textbf{and the edges~$w_iw_l$, with $l<i$ are bottom edges} ($i-1$ edges).

	\subsection*{Construction of the drawing $D_{n+2}$}
	
	Beginning with the simple drawing $D'_0$ formed by $D_n$ and the curve $c=OZ$, we build new drawings $D'_1,\ldots ,D'_i,\ldots ,D'_n$, where drawing $D'_i$ is obtained from drawing $D'_{i-1}$ by adding two simple curves $Ow_i,w_iZ$, following the order $w_1,w_2,\ldots w_n$ explained above. Hence, we start adding $Ow_1,w_1Z$, where $w_1$ is the unique vertex such that all the edges of $S(w_1)$ are top, and we finish by adding $Ow_n,w_nZ$, where $w_n$ is the unique vertex such that all the edges of $S(w_n)$ are bottom.  We build the new drawings in such a way that in $D'_i$, the following invariants are satisfied:
	
	\begin{enumerate}
		\item\label{constr1} The curves $OZ,Zw_i,w_iO$ form a well defined triangular region (region $R_i=OZw_i$) containing the triangular region $R_{i-1}=OZw_{i-1}$, and containing no vertex $w_j$ with~$j>i$. Notice that this invariant implies that $R_i$ contains precisely the vertices $w_l$ with $l<i$, and that neither $Ow_i$ nor $Zw_i$ can properly cross any edge $Ow_l$ or $Zw_l$ with $l<i$.
		\item\label{constr2} The drawing $D'_i$ is a simple drawing.
	\end{enumerate}
	
	The last drawing obtained, $D'_n$, taking $O$ and $Z$ as vertices, provides the sought drawing $D_{n+2}$, since Invariants~\ref{constr1} and~\ref{constr2} imply that
it satisfies properties~\ref{p:simple} and~\ref{p:OZplane}.
	
	To prove that $D'_i$ satisfies Invariants~\ref{constr1} and ~\ref{constr2}, we  suppose that $D'_{i-1}$ satisfies these two invariants. Then, the crossing points of the edges of $D'_{i-1}$ with the boundary of $OZw_{i-1}$ must satisfy the properties given in the following lemma.
	\begin{lemma}\label{prop:propertiesc}
		Suppose that $D'_{i-1}$ satisfies Invariants~\ref{constr1}, ~\ref{constr2}, and let $x_s,x_t,x$ be the crossing points of $OZ$ with the edges $w_{i-1}w_s,w_{i-1}w_t,w_sw_t$, respectively. Then
		
		\begin{enumerate}[label={\textbf{(c$\mathbf{_\arabic*}$)}}, leftmargin=*]
			\item\label{lem:c1} All the top edges of $S(w_{i-1})$ are counterclockwise between $w_{i-1}O$ and~$w_{i-1}Z$, and the bottom edges are counterclockwise between $w_{i-1}Z$ and~$w_{i-1}O$.
			
			\item\label{lem:c2} Any edge $w_sw_t$ with $s<t<i-1$ crosses first $OZ$ then one of the curves $Ow_{i-1}$ or $Zw_{i-1}$.
			
			Besides, if $w_sw_t$ crosses $Zw_{i-1}$, the order of the above crossing points on $OZ$ is $x_t,x_s,x$, and if $w_sw_t$ crosses $Ow_{i-1}$, this order is $x,x_s,x_t$.
			
			\item\label{lem:c3} Any edge $w_sw_t$ with $s<i-1<t$ crosses first $OZ$ and it does not cross any other arc of the boundary of~$OZw_{i}$.
			
			Besides, if $w_s$ is in the region $Ox_tw_{i-1}$, the order of the above crossing points on $OZ$ is $x_s,x,x_t$, and if $w_s$ is in the region $x_tZw_{i-1}$, the order is $x_t,x,x_s$.

			\item\label{lem:c4} Any edge $w_sw_t$ with $i-1<s<t$ first crosses one of $Ow_{i-1}$ or $Zw_{i-1}$, then~$OZ$. Besides, if $w_sw_t$ crosses $Zw_{i-1}$, the order of the crossing points is $x_s,x_t,x$, and if it crosses $Ow_{i-1}$, the order is $x,x_t,x_s$.
			
		\end{enumerate}
	\end{lemma}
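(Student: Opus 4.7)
The plan is to deduce all four statements from the concrete construction of $Ow_{i-1}$ and $Zw_{i-1}$ performed in step $i-1$, together with Invariants~\ref{constr1}--\ref{constr2} for $D'_{i-1}$ and Lemmas~\ref{prop:propertiesa}--\ref{prop:propertiesb} applied to the star $S(w_{i-1})$. Recall that each of $Ow_{i-1}$ and $Zw_{i-1}$ is built by gluing a prefix along a previously drawn curve (either $Ow_{i-2}$, $Zw_{i-2}$, or $OZ$) and a suffix along one specific extremal edge of $S(w_{i-1})$: the first top edge, the last top edge, the first bottom edge, or the last bottom edge. This gluing structure will be the main tool throughout.

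First I would dispatch property~\ref{lem:c1} by inspecting the construction cases. By property~\ref{lem:a1}, the top edges of $S(w_{i-1})$ appear consecutively counterclockwise around $w_{i-1}$, followed by the bottom edges. Since near $w_{i-1}$ the curve $Ow_{i-1}$ is glued either to the first top edge or to the last bottom edge, and $Zw_{i-1}$ either to the last top edge or to the first bottom edge, inserting $Ow_{i-1}$ and $Zw_{i-1}$ into the radial order around $w_{i-1}$ immediately puts all top edges on one counterclockwise side and all bottom edges on the other, as claimed.

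Next I would prove~\ref{lem:c2}, \ref{lem:c3}, \ref{lem:c4} by a unified parity-and-simplicity argument, conditioned on which endpoints of $w_sw_t$ lie inside $R_{i-1}$ according to Invariant~\ref{constr1}: both inside in~\ref{lem:c2}, one inside and one outside in~\ref{lem:c3}, and both outside in~\ref{lem:c4}. Since $c$ crosses every edge of $D_n$ exactly once, and $D'_{i-1}$ is simple by Invariant~\ref{constr2}, the edge $w_sw_t$ meets the boundary of $R_{i-1}$ in exactly one point of $OZ$ plus at most one point on each of $Ow_{i-1}, Zw_{i-1}$. Parity of the total number of boundary crossings then forces exactly one additional crossing in~\ref{lem:c2} and~\ref{lem:c4} (yielding the ``exactly one of $Ow_{i-1}, Zw_{i-1}$'' conclusion) and either zero or two additional crossings in~\ref{lem:c3}. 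The relative order of $x_s,x_t,x$ on $OZ$, and thereby which of $Ow_{i-1}, Zw_{i-1}$ is crossed and the traversal order along $w_sw_t$, is then read off by applying~\ref{lem:b1} (together with its bottom-edge mirror) to the triple $w_{i-1}w_s, w_{i-1}w_t, w_sw_t$: in~\ref{lem:c2} both $w_{i-1}w_s, w_{i-1}w_t$ are bottom, in~\ref{lem:c4} both are top, and in~\ref{lem:c3} they are of opposite types. Combined with property~\ref{lem:a2} to identify which sub-region of $R_{i-1}$ contains each endpoint, this pins down the exit arc on the boundary.

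The hard part will be~\ref{lem:c3}, where the task reduces to ruling out the alternative in which $w_sw_t$ crosses both $Ow_{i-1}$ and $Zw_{i-1}$. I expect this to require a case split on the (at most four) possible gluing configurations of $Ow_{i-1}, Zw_{i-1}$ at $w_{i-1}$. In each case, a double crossing with $Ow_{i-1}\cup Zw_{i-1}$ forces $w_sw_t$ to also cross the two extremal edges of $S(w_{i-1})$ to which $Ow_{i-1}, Zw_{i-1}$ are glued, and the resulting forbidden crossing pattern either contradicts~\ref{lem:b2} (or its bottom-edge mirror) or places $w_s$ and $w_t$ into incompatible sub-regions around $w_{i-1}$ via~\ref{lem:a2}, exploiting that $w_{i-1}w_s$ is bottom while $w_{i-1}w_t$ is top. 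The main technical cost lies in bookkeeping these configurations cleanly rather than in the arguments within each one.
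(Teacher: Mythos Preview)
Your parity-and-simplicity framework for \ref{lem:c2}--\ref{lem:c4} matches the paper's, and using \ref{lem:b1}/its mirror to pin down the order of $x_s,x_t,x$ is equivalent to the paper's direct region-chasing. The real divergence is in \ref{lem:c1} and, through it, in \ref{lem:c3}.

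For \ref{lem:c1} the paper does \emph{not} look at the gluing at all. It argues purely from Invariants~\ref{constr1}--\ref{constr2}: a top edge $w_{i-1}w_t$ shares the endpoint $w_{i-1}$ with both $Ow_{i-1}$ and $Zw_{i-1}$, so by simplicity of $D'_{i-1}$ it cannot cross either; since it crosses $OZ$ exactly once from the top side and ends at $w_t$ outside $R_{i-1}$, it must emanate from $w_{i-1}$ into $R_{i-1}$, i.e.\ into the counterclockwise wedge from $w_{i-1}O$ to $w_{i-1}Z$. This construction-free proof of \ref{lem:c1} is what then makes \ref{lem:c3} short: if $w_sw_t$ crossed all three boundary arcs, the paper enumerates the possible traversal orders and in each case uses \ref{lem:c1} to conclude that $w_sw_t$ would have to cross the specific star edge $w_{i-1}w_t$ (top) or $w_{i-1}w_s$ (bottom), contradicting simplicity of $D_n$.

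Your plan for \ref{lem:c3} has a gap. You claim that a crossing of $Ow_{i-1}$ (resp.\ $Zw_{i-1}$) forces a crossing of the extremal edge of $S(w_{i-1})$ to which it is glued, so that \ref{lem:b2} applies. But only the suffix $b_1$ (resp.\ $b_2$) of $Ow_{i-1}$ is glued to that edge; the prefix $a_1$ is glued either to $Ow_{i-2}$ (Way~1) or to an initial arc of $OZ$ (Way~2). A crossing of $Ow_{i-1}$ on $a_1$ therefore gives you a crossing of $Ow_{i-2}$ or an extra incidence with $OZ$, not a crossing of any edge of $S(w_{i-1})$, so \ref{lem:b2} is not triggered. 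Repairing this would require either an induction down through $Ow_{i-2},Ow_{i-3},\dots$ or a separate argument for the $a_1$-crossing case, neither of which you mention; the paper avoids all of this by bootstrapping from the construction-free \ref{lem:c1}. I would recommend adopting the paper's route: prove \ref{lem:c1} without the gluing, then use it to dispatch \ref{lem:c3} by the short order-enumeration argument.
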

	\begin{proof}
		\ref{lem:c1} A top edge $w_{i-1}w_t$ with $t>i-1$ of $S(w_{i-1})$ has to reach $OZ$ by inside of $OZw_{i-1}$, and since $D'_{i-1}$ is simple, it must start entering in that region, hence it must be counterclockwise between $w_{i-1}O$ and~$w_{i-1}Z$. The same reasoning works for bottom edges $w_{i-1}w_s$ with $s<i-1$.
		
		\ref{lem:c2} Since $s<t<i-1$, by Invariant~\ref{constr1}, the vertices $w_s,w_t$ are  both in $OZw_{i-1}$. Hence, the edge $w_sw_t$ must cross the boundary of $OZw_{i-1}$ an even number of times. And since it has to cross $OZ$, it has to cross also only one of the boundary curves $Ow_{i-1},w_{i-1}Z$. Finally, since $w_sw_t$ is a top edge, it has to cross first $OZ$ then the other boundary curve.
		
		On the other hand, suppose that the edge $w_sw_t$ crosses $w_{i-1}Z$ at a point $z$. Then, the vertex $w_{i-1}$ is in the region $xZz$, like the vertices $w_s,w_t$. Therefore, if $w_tw_{i-1}$ (or $w_sw_{i-1}$) leaves that region crossing the arc $xZ$, it cannot reach $w_{i-1}$ without crossing again the boundary of that region, breaking the simplicity of $D'_{i-1}$. Therefore, $x_s,x_t$ have to be placed before $x$ on $OZ$. Finally, $w_tw_{i-1}$ cannot cross the arc $x_sw_s$ (on the edge $w_{i-1}w_s$) or the arc $w_sx$ (on the edge $w_sw_t$), therefore it cannot cross $OZ$ between $x_s$ and $x$, and the order of the crossing points must be $x_t,x_s,x$. The same arguments can be used when the edge $w_sw_t$ crosses $Ow_{i-1}$
		
		\ref{lem:c3} By Invariant~\ref{constr1}, $w_s$ is inside the region $OZw_{i-1}$ and $w_t$ is outside, so the edge $w_sw_t$ has to cross the boundary of $OZw_{i-1}$ an odd number of times. Suppose that it crosses the three curves of the boundary, curves $OZ,Ow_{i-1},w_{i-1}Z$. It cannot cross first $Ow_{i-1}$ (or $w_{i-1}Z$) then $OZ$ because $w_sw_t$ is a top edge. It cannot cross first $OZ$ then $Ow_{i-1}$ and finally $w_{i-1}Z$ because then, by ~\ref{lem:c1},  it has to cross the top edge $w_{i-1}w_t$. By the same reason it cannot cross the boundary in the order first $OZ$, next $w_{i-1}Z$ and then $Ow_{i-1}$. Finally, if it crosses first $Ow_{i-1}$, next $w_{i-1}Z$ and then $OZ$, then, again  by ~\ref{lem:c1}, it has to cross  the bottom edge $w_{i-1}w_s$.
		
		Besides, since $w_{i-1}w_t$ is a top edge, the arc $w_{i-1}x_t$ divides the region $OZw_{i-1}$ into two disjoint regions $Ox_tw_{i-1}$ and $x_tZw_{i-1}$.
		Then, if $w_s$ is in the region $Ox_tw_{i-1}$, the edge $w_sw{i-1}$ has to cross by the arc $Ox_t$. And then, necessarily the crossing point $x$ must be between $x_s$ and $x_t$. The same argument can be used when $w_s$ is in the region $x_tZw_{i-1}$.
		\ref{lem:c4} Since $i-1<s<t$, both vertices $w_s,w_t$ are outside region $OZw_{i-1}$, and we use the same reasonings as in  ~\ref{lem:c2}: The boundary of $OZw_{i-1}$ has to be crossed twice, and since $w_sw_t$ is a top edge, first one of $Ow_{i-1}$ or $Zw_{i-1}$ is crossed, then $OZ$.

		Now, if $w_sw_t$ crosses $w_{i-1}Z$ at a point $z$, then none of the edges $w_{i-1}w_s,w_{i-1}w_t$ can cross the arc $xz$ (placed on $w_sw_t$), so they cannot cross the arc $xZ$ either. Therefore the crossing points $x_s,x_t$ have to be placed before $x$ on $OZ$. Finally, $w_{i-1}$ and $w_t$ are in different sides of the region $x_sxw_s$ and the edge $w_{i-1}w_t$ cannot cross the arc $w_sx_s$ (on edge $w_{i-1}w_s$) or the arc $w_sx$ (on the edge $w_sw_t$), therefore it has to cross $OZ$ between $x_s$ and $x$, and the order of the crossing points must be $x_s,x_t,x$. The same arguments can be used when the edge $w_sw_t$ crosses $Ow_{i-1}$.
	\end{proof}

	\begin{figure}[!htb]
		\centering
		\includegraphics[scale=0.5,page=6]{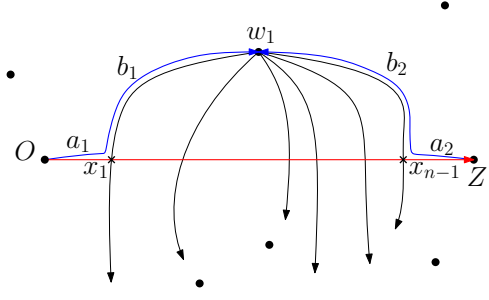}
		\caption{Step 1: How to draw $Ow_1,Zw_1$.}\label{fig:Step1}
	\end{figure}

	Observe that, as we suppose that $D'_{i-1}$ satisfies Invariants~\ref{constr1} and ~\ref{constr2}, properties~\ref{lem:c2}, ~\ref{lem:c3}, ~\ref{lem:c4} imply that no edge of $D_n$ can cross simultaneously the curve $Ow_{i-1}$ and the curve $w_{i-1}Z$.

	For the construction of the drawings $D'_{i}$, we will make a first step (for $D'_1$), a generic step (for $D'_i$ with $1<i<n$), and a final step (for $D'_n$).

	\textbf{Step 1.} Let $x_1,\ldots ,x_{n-1}$ denote the crossing points of $S(w_1)$ with $OZ$. We draw $Ow_1$ following (slightly counterclockwise) the curve $OZ$  until the point~$x_1$, then we turn counterclockwise following the arc~$x_1w_1$; see Figure~\ref{fig:Step1}. We draw $Zw_1$ analogously, following $ZO$ until $x_{n-1}$, then following the arc $x_{n-1}w_1$. If we denote the arcs $Ox_1,x_1w_1,Zx_{n-1},x_{n-1}w_1$ by $a_1,b_1,a_2,b_2$ respectively, then $Ow_1$ has a first part glued to $a_1$, then a second part glued to $b_1$, and in the same way, $Zw_1$ consists of two parts glued to $a_2$ and $b_2$ respectively. Clearly, $OZw_1$ is a well defined region and by Property~\ref{lem:a2}, there are no vertices in the triangular region $w_1x_1x_{n-1}$, hence region $OZw_1$ does not contain any vertices. That establishes Invariant~\ref{constr1}.
	
	Suppose now that the edge $w_sw_t,1<s<t\leq n-1$ crosses the arc $a_1$, and let $x_s,x_t$ be the crossings points on $OZ$ of $w_1w_s,w_1w_t$, respectively. Notice that $x_s$ can be placed before or after $x_t$, but $x$ is placed before these two points.
	Then, by Property~\ref{lem:b1}, since the bottom edge is $w_tw_s$, $x_t$ has to be placed before $x_s$, and the order of the crossing points on $OZ$ must be $x,x_1,x_t,x_s$. Besides, the arc $b_1$  must start at $w_1$ by outside the triangular region $w_1w_tw_s$ (because it is the first top arc), and it finishes at  $x_1$, a point also placed  outside  $w_1w_tw_s$. Therefore, the arc $b_1$ cannot cross $w_sw_t$, the unique edge of the triangle $w_1w_tw_s$ not incident to $w_1$. A symmetric argument can be used to prove that an edge $w_sw_t, 1\leq s<t<n-1$ crossing $a_2$ cannot cross $b_2$. This establishes Invariant~\ref{constr2} for $D'_1$.

	\textbf{Step i} ($2\leq i\leq n-1$). We will draw $Ow_i$ in two different ways, Way~1 and Way~2, depending on whether the first top edge of~$S(w_i)$, edge~$e_1$, first crosses $Ow_{i-1}$ or~$w_{i-1}Z$. When $e_1$ first crosses~$Ow_{i-1}$ at a point $z'_1$, we draw~$Ow_i$ in \textbf{Way~1}, which is the following: $Ow_i$ follows the curve $Ow_{i-1}$ (very close to that curve, slightly counterclockwise) until it reaches the crossing point $z'_1$, then $Ow_i$ continues close to $z'_1w_i$ until it reaches its endpoint~$w_i$; see Figure~\ref{fig:Cases1and2} left.

	When $e_1$ crosses first~$w_{i-1}Z$, by Property~\ref{lem:c4} applied to $e_1$ ($e_1$ is some $w_iw_t$ with $i-1<i<t$), the edge $w_{i-1}w_{i}$ has to cross $OZ$ before $x'_1$ (the crossing point of $e_1$ with $OZ$), and therefore, the last bottom edge of~$S(w_i)$, edge ~$e'_{k'}$, crosses $OZ$ at a point $y'_{k'}$ placed before~$x'_1$ (or $e'_{k'}$
	coincides with $w_iw_{i-1}$).
	In this case, we draw~$Ow_i$ in \textbf{Way~2}, which is the following: $Ow_i$ follows the curve~$OZ$ (very close to that curve, slightly clockwise) until it reaches the crossing point~$y'_{k'}$, then $Ow_i$ continues close to the arc $y'_{k'}w_i$ on $e'_{k'}$ until it reaches the endpoint~$w_i$; see Figure~\ref{fig:Cases1and2} right.
	
	\begin{figure}[!htb]
		\centering
		\includegraphics[scale=0.5,page=7]{FigsAppendixC.pdf}
		\caption{Case 1 and Case 2.}\label{fig:Cases1and2}
	\end{figure}

	In both ways, we consider the curve $Ow_i$ as consisting of two arcs  $a_1$ and $b_1$. In Way~1, the arcs of $Ow_i$ are first $a_1$ glued to $Oz'_1$ and second $b_1$ glued to~$z'_1w_i$. In Way~2, the arcs are first $a_1$ glued to $Oy'_{k'}$ and then $b_1$ glued to~$y'_{k'}w_i$.
	
	Symmetric constructions are used for drawing~$Zw_i$. When $e_k$, the last top edge of $S(w_i)$,  first crosses $w_{i-1}Z$ at a point~$z'_k$, $Zw_i$ is drawn in Way~1: $Zw_i$ consists of the arcs $a_2$ and $b_2$ glued to $Zz'_k$ and $z'_kw_i$ respectively. When $e_k$ first crosses $Ow_{i-1}$, then $OZ$ at a point $x'_k$,  the first bottom edge of~$S(w_i)$, edge $e'_1$, has to cross $OZ$ at a point $y'_{1}$ placed after~$x'_k$. Then, we build $Zw_i$ in Way~2: it consists of an arc $a_2$ (counterclockwise) close to~$Zy'_1$, then an arc $b_2$ glued to~$y'_1w_i$ on edge $e'_1$.
	
	As construction Way~2 can only be used in one of the two curves $Ow_i,Zw_i$, we only need to see that Invariants~\ref{constr1}, \ref{constr2} hold in two cases: \textbf{Case~1}, when construction Way~1 is used for both $Ow_i$ and~$w_iZ$, and \textbf{Case~2}, when Way~2 is used for $Ow_i$ and Way~1 for~$w_iZ$. The case when Way~2 is used for $Zw_{i}$ and Way~1 for $Ow_i$ is symmetric to Case~2; see Figure~\ref{fig:Cases1and2}.
	
	To prove that Invariants~\ref{constr2} holds for~$D'_i$, we will see that in both Case~1 and Case~2,  each edge of $D_n$ crosses at most one of the arcs $a_1,b_1$ and at most one of the arcs $a_2,b_2$ of~$D'_{i}$.

	\textbf{Case~1.-}
	
	By construction, the curves $Ow_i$ and $w_iZ$ do not cross each other, the triangular region $OZw_i$ contains the region~$OZw_{i-1}$, and $Ow_i$, $w_iZ$ cannot properly cross any edge $Ow_l$, $w_lZ$, with $l<i$, because these last edges are inside $OZw_{i}$. Notice that the order of the edges around $O$ and $Z$ are counterclockwise $OZ,Ow_1,\ldots ,Ow_i$ and $ZO,Zw_i,\ldots ,Zw_1$, respectively. Moreover, by Property~\ref{lem:a2}, all the vertices in the triangular region $x'_1x'_{k}w_i$ must be reached from $w_i$ via a bottom edge (where $x'_1,\ldots ,x'_k$ are the crossing points of the top edges of $S(w_i)$ with curve~$OZ$). On the other hand, all the vertices $w_s, s\leq i-1$ are in $OZw_{i-1}$ ($w_{i-1}$ on the boundary), and these are precisely the endpoints of the bottom edges of $S(w_i)$. Therefore, the subregion $z'_1w_{i-1}z'_kw_i$  must be empty, and $OZw_i$ contains all $w_s$ with $s<i$ and does not contain vertices $w_t, t>i$.

	To prove the simplicity of $D'_i$, it is enough to prove that for any edge $w_sw_t$ of $D_n$, the drawing formed by $Ow_i,w_iZ,w_sw_t$ is simple. The following subcases $1,2,3,4,5$ prove that simplicity for edges $w_sw_t$ when $s=i$, $s=i-1$, $s<t<i-1$, $s<i-1<i<t$ and $i<s<t$, respectively. 
	
	\begin{enumerate}[label=\textbf{-\arabic*}]
		\item No edge of $S(w_i)$ can cross any of the arcs~$a_1,b_1,a_2,b_2$.

			\begin{figure}[!htb]
			\centering
			\includegraphics[scale=0.5,page=8]{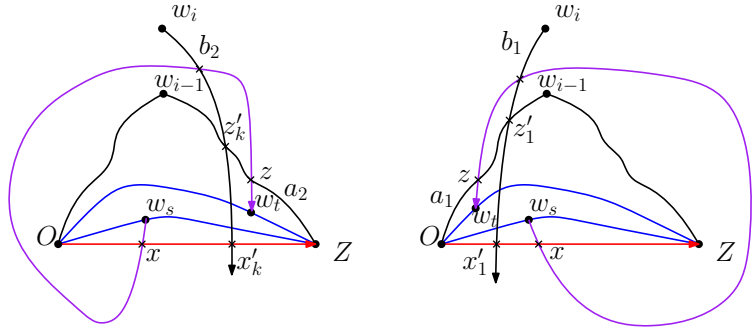}
			\caption{Left: $w_sw_t$ crosses in the order~$OZ,b_2,a_2$. Right: It crosses in the order~$OZ,b_1,a_1$.}\label{fig:CaseiDiff}
		\end{figure}

		\begin{figure}[!htb]
			\centering
			\includegraphics[scale=0.5,page=9]{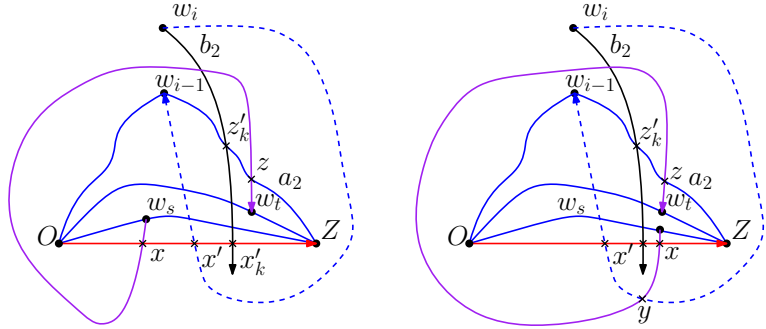}
			\caption{$w_sw_t$ crosses $b_2$ and $a_2$. The edge $w_tw_{i-1}$ cannot be drawn.}\label{fig:CaseiDiff2}
		\end{figure}

		For the arcs $b_1,b_2$, this is obvious since these arcs are glued to edges of $S(w_i)$. Besides, no top edge of $S(w_i)$ can cross arc $a_1$
		because this arc follows the curve $Ow_{i-1}$ until precisely the first crossing point with a top edge of $S(w_i)$. By the same reason, top edges of $S(w_i)$ cannot cross $a_2$ either. Finally, by the simplicity of $D'_{i-1}$ the bottom edge $w_iw_{i-1}$ cannot cross $a_1$ or $a_2$, and by Property~\ref{lem:c3} applied to an edge $w_iw_s$ with $s<i-1<i$, this bottom edge of $S(w_i)$ cannot cross
		$Ow_{i-1}$ or $Zw_{i-1}$, and therefore it cannot cross $a_1$ or~$a_2$.
		
		\enlargethispage{3ex}
		\item Any edge of $S(w_{i-1})$ crosses at most one of the arcs~$a_1,b_1$ and at most one of the arcs $a_2,b_2$.
		
		Since the edges of $S(w_{i-1})$ cannot cross the arc~$a_1$ nor the arc $a_2$, the result follows.

		\item Any edge $e=w_sw_t$, $s<t<i-1$ crosses at most one of the arcs~$a_1,b_1$ and at most one of the arcs $a_2,b_2$.
		
		Suppose that the edge $w_sw_t$ crosses both $a_2$ and $b_2$. By Property~\ref{lem:c2}, this edge first crosses $OZ$ at a point $x$, then reaches $a_2$ at a point $z$ before finishing at $w_t$. Since $b_2=w_iz'_k$ is outside $OZw_{i-1}$ the crossing of $b_2$ with $w_sw_t$ must be at a point in the arc~$xz$, as shown in Figure~\ref{fig:CaseiDiff} left. Observe that after crossing $b_2$, which is an arc on the last top edge of $S(w_i)$, the edge $w_sw_t$ enters in $OZw_{i-1}$ crossing $a_2$, therefore it cannot cross again the last top edge of $S(w_i)$, and thus  $w_t$ has to be placed
		inside the region $x'_kZz'_k$. A totally symmetric case occurs when  the edge $w_sw_t$ crosses both $a_1$ and $b_1$, then $w_t$ must be inside the region $Ox'_1z'_1$, see Figure~\ref{fig:CaseiDiff} right.

		\begin{figure}[!htb]
			\centering
			\includegraphics[scale=0.65,page=10]{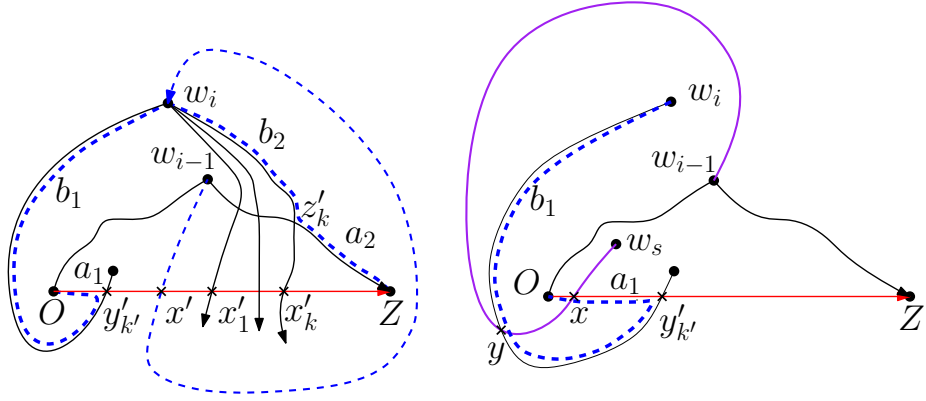}
			\caption{Left: Case 2 for drawing~$Ow_i,Zw_i$. Right: Edges of $S(w_{i-1})$ at most cross one of~$a_1,b_1$.}\label{fig:Casei2Star}
		\end{figure}

		Let us analyze only the first case, when $e=w_sw_t$ crosses~$OZ,b_2,a_2$ in that order, because the other case is totally symmetric. We are going to prove that the edge $w_tw_{i-1}$ cannot be drawn without breaking the simplicity. By Property~\ref{lem:c2} applied to $w_sw_t$, the crossing point of $w_tw_{i-1}$ with $OZ$, must be placed before $x$ on $OZ$. On the other hand, if $x'$ is the crossing point of $w_{i-1}w_{i}$ with $OZ$, by Property~\ref{lem:c4} applied to $e_k$ (the last top edge of $S(w_i)$), the point $x'$ has to be placed before $x'_k$. Then, by Property~\ref{lem:c3} applied to $w_tw_i$ ($t<i-1<i$), as $w_t$ is in region $x'Zw_{i-1}$, the edge $w_tw_{i-1}$ has to cross $OZ$ after~$x'$. Therefore, the edge $w_tw_{i-1}$ should cross $OZ$ after $x'$ and before $x$, and this is not possible when $x$ is placed before $x'$; see Figure~\ref{fig:CaseiDiff2}~(left). Finally, if $x$ is placed after $x'$, then $x'$ is inside the region $xZz$, $w_i$ outside that region, and therefore the arc $x'w_i$ has to cross the arc $xz$ at some point $y$, see Figure~\ref{fig:CaseiDiff2}~(right). But then, the edge $w_tw_{i-1}$ has to cross $OZ$ on the arc $x'x$, entering into the region $x'xy$, bounded by arcs on the edges $w_sw_t$ and $w_iw_{i-1}$, that cannot be crossed by $w_tw_{i-1}$ without breaking the simplicity of~$D_n$.

	\newpage	
		\item Any edge $w_sw_t$, $s<i-1<i<t$ crosses at most one of the arcs~$a_1,b_1$ or $a_2,b_2$.
		
		Observe that $w_s$ is inside the triangular region $OZw_{i-1}$ and
		$w_t$ is outside that region. Then, by Property~\ref{lem:c3}, the edge $w_sw_t$ must first cross~$OZ$ and  it cannot enter the region $OZw_{i-1}$ again. Therefore, it cannot cross $a_1$ or $a_2$.

		\item Any edge $w_sw_t$, $i<s<t$ crosses at most one of the arcs~$a_1,b_1$ or $a_2,b_2$.
		
		First observe that by Property~\ref{lem:b2} applied to $S(w_i)$, the edge $w_sw_t$ cannot cross both $b_1$ and $b_2$ (because $w_iw_s$ and $w_iw_t$ are top edges of that star). Besides, $w_s$ and $w_t$ are outside the triangular region $OZw_i$, therefore if edge $w_sw_t$ crosses $b_1$ or $b_2$ it has to cross also $z'_1w_{i-1}$ or $w_{i-1}z'_k$.
		Finally, according to Property~\ref{lem:c4}, the edge
		$w_sw_t$ must enter to $OZw_{i-1}$ crossing only one of $a_1$,$z'_1w_{i-1}$,$a_2$,$w_{i-1}z'_k$, then leaving through $OZ$, therefore it cannot cross both $a_1,b_1$ or both $a_2,b_2$.
	\end{enumerate}

	\textbf{Case~2.-}

	Again, by the method that the construction is done, the curves $Ow_i$ and $w_iZ$ do not cross each other and the triangular region $OZw_i$ contains the region~$OZw_{i-1}$, hence  $Ow_i$, $w_iZ$ cannot properly cross any edge $Ow_l$, $w_lZ$, with $l<i$.
See Figure~\ref{fig:Cases1and2} right. Moreover, by Property~\ref{lem:a2} all the vertices in region $Ow_{i-1}z'_kw_iy'_{k'}$ are endpoints of bottom edges of $S(w_i)$, but all these endpoints $w_1,\ldots ,w_{i-1}$ of the bottom edges of~$S(w_i)$ must be inside the region $OZw_{i-1}$ ($w_{i-1}$ on the boundary). Therefore, region $Ow_{i-1}z'_kw_iy'_{k'}$ is empty and $OZw_i$ only
	contains inside the $i-1$ endpoints $w_1,\ldots ,w_{i-1}$. So, Invariant~\ref{constr1}, holds.
	
	Like in Case 1, to prove the simplicity of $D'_i$ we analyze the same five subcases.
	
	\begin{enumerate}[label=\textbf{-\arabic*}]
		
		\item\label{Case2_1} No edge of $S(w_i)$ can cross any of the arcs~$a_1,b_1,a_2,b_2$.
		
		As in Case 1, no edge of $S(w_i)$ can cross the arcs~$b_1,b_2$, no a top edge of that star can cross~$a_2$. Besides, by Property~\ref{lem:c3}, a bottom edge $w_iw_s$, $s<i-1<i$, cannot cross~$Zw_{i-1}$, therefore it cannot cross~$a_2$ either. Finally, a top edge cannot cross either arc~$a_1$,
		(because the crossing points $x'_1,\ldots ,x'_k$ are after $y'_{k'}$), and a bottom edge cannot cross $a_1=Oy'_{k'}$ because $y'_{k'}$  is the first crossing point of those edges.

		\item\label{Case2_2} Any edge of $S(w_{i-1})$ crosses at most one of the arcs~$a_1,b_1$ or one of $a_2,b_2$.
		
		See Figure~\ref{fig:Casei2Star} left. Since they cannot cross $a_2$, we only have to prove that one of this edges cannot cross both $a_1$ and $b_1$.
		In the definition of Way 2, we have seen that the crossing point $x'$ of $w_iw_{i-1}$ with $OZ$ must be placed before $x'_1$ and after $y'_{k'}$. Hence,
		if $x_t$ is the  crossing point on $OZ$ of  a top edge $w_iw_t,t>i$, then $x'$ is placed between $y'_{k'}$ and $x_t$. But then, by Property~\ref{lem:c4} applied to $w_iw_t, i-1<i<t$, the crossing point of $w_{i-1}w_t$ on $OZ$ must be placed between $x'$ and $x_t$, so after $y'_{k'}$. Therefore, the top edges $w_{i-1}w_t,t>i-1$ of $S(w_{i-1})$ cannot cross $a_1$.

		Finally, suppose that an edge $w_sw_{i-1},s<i-1$ crosses both $a_1$ and $b_1$. Necessarily, $w_sw_{i-1}$ first crosses $OZ$ at a point $x$, then $b_1$ at a point $y$, finishing at $w_{i-1}$. See Figure~\ref{fig:Casei2Star} right. Then, the vertices $w_i,w_s$ must be both inside the region $xZw_{i-1}$. However, the bottom edge $w_iw_s$ cannot cross $a_1$, (remember that $y'_{k'}$ is the first crossing point of those bottom edges). Therefore, by Property~\ref{lem:c3}, $w_iw_s$ cannot cross $Ow_{i-1}$ or $w_{i-1}Z$, and it has to enter in $OZw_{i-1}$ crossing through the bottom of the arc $y'_{k'}Z$, but then it should first cross the arc $xw_{i-1}$ (on $w_sw_{i-1}$), contradicting the simplicity of $D_n$.
		
		\item\label{Case2_3} Any edge~$e=w_sw_t$, $s<t<i$ crosses at most one of the arcs~$a_1,b_1$ or one of $a_2,b_2$.
		
		A scheme of this situation is shown in Figure~\ref{fig:Casei2Base}, where the two blue curves mark the boundary of regions $OZw_s$ and~$OZw_t$.
		
		\begin{figure}[!htb]
			\centering
			\includegraphics[scale=0.45,page=11]{FigsAppendixC.pdf}
			\caption{Case~2\ref{Case2_3}.}\label{fig:Casei2Base}
		\end{figure}
		
		On the contrary, suppose that $w_sw_t$ crosses both $a_1$ and $b_1$. Necessarily, it first crosses~$a_1$ at a point $x$, next $b_1$ at a point $y$, then entering in region $OZw_{i-1}$ by crossing either $Ow_{i-1}$ or $w_{i-1}Z$ at a point $z$, until reaching $w_t$. See Figure~\ref{fig:Casei2Basea} left. But then, by the same reasonings as in the previous Case~2\ref{Case2_2}, the edge $w_iw_s$ cannot be drawn. As above, $w_s$ and $w_i$ are in the region $xZz$, if $z$ is on $w_{i-1}Z$,
		or in the region $xZw_{i-1}z$ when $z$ is on $Ow_{i-1}$. However, $w_iw_s$ cannot cross $a_1$ or $Ow_{i-1}$ or $w_{i-1}Z$,  so it has to enter in $OZw_{i-1}$ crossing through the arc $y'_{k'}Z$, and therefore so it should first cross the arc $xz$ (on $w_sw_t$), a contradiction.
		\begin{figure}[!htb]
			\centering
			\includegraphics[scale=0.5,page=12]{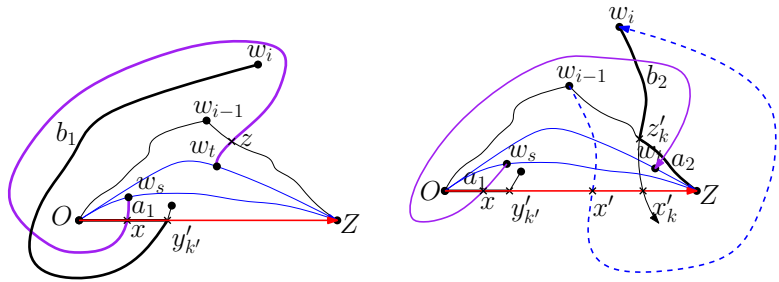}
			\caption{Case~2\ref{Case2_3}: Left, $w_sw_t$ crosses ~$a_1$, then ~$b_1$.
				Right,  $w_sw_t$ crosses $b_2$, then $a_2$.}\label{fig:Casei2Basea}
		\end{figure}

		\begin{figure}[!htb]
			\centering
			\includegraphics[scale=0.4,page=13]{FigsAppendixC.pdf}
			\caption{Case~2\ref{Case2_4}.}\label{fig:Casei2Basec}
		\end{figure}

			\enlargethispage{2ex}
		Finally, suppose that $w_sw_t$ crosses $a_2$ and $b_2$. We are exactly in the same situation as in Case~1\ref{Case2_3} (see Figure~\ref{fig:CaseiDiff2}): necessarily the edge first crosses $OZ$ at a point $x$, next crosses $b_2$, then reaches $a_2$ at a point $z$ before finishing at $w_t$. And we have seen that a contradiction is reached in this situation. It does not matter if the edge $w_sw_t$ crosses $a_1$ (like in Figure~\ref{fig:Casei2Basea} right) or not.

		\item\label{Case2_4} Any edge $w_sw_t$, $s<i-1<i<t$ crosses at most one of the arcs~$a_1,b_1$ or $a_2,b_2$.
		
		By Property~\ref{lem:c3}, an edge $w_sw_t$ with $s<i<t$ cannot cross $w_{i-1}Z$, hence, it cannot cross $a_2$. So we have to prove that it cannot cross both $a_1$ and $b_1$. Let us see that in this situation the edge $w_sw_i$ cannot be drawn without breaking the simplicity of the drawing. If $w_sw_t,s<i-1<t$ crosses both $a_1$ and $b_1$, necessarily  it starts crossing $OZ$ through~$a_1$, then crossing $b_1$ at a point $y$, finishing at vertex $w_t$, see Figure~\ref{fig:Casei2Basec}. Then, $w_s$ has to be inside the region $w_iyw_t$. On the other hand, the bottom edge $w_iw_s$ must start from $w_i$ counterclockwise after the arc $b_2$ (on the last top edge of $S(w_i)$), and before the arc $b_1$ (on the last bottom edge of $S(w_i)$). Therefore, $w_iw_s$ should start outside the region $w_iyw_t$, and should finish at $w_s$, placed inside that region. However, $w_iw_s$ cannot cross any of the arcs of the boundary of $w_iyw_t$, because they are on edges incident to either $w_i$ or $w_s$.

		\begin{figure}[htb]
			\centering
			
			\includegraphics[scale=0.4,page=14]{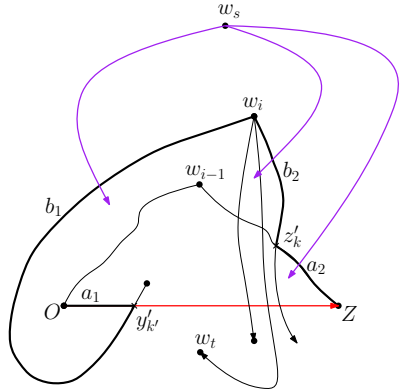}
			
			\caption{Case~2\ref{Case2_5}:$w_sw_t$ enters into region $R_i$  crossing either $b_1$ or $b_2$ or $a_2$.}\label{fig:Casei2Based}
		\end{figure}

		\item\label{Case2_5} Any edge $w_sw_t$,$i<s<t$ crosses at most one of the arcs~$a_1,b_1$ or $a_2,b_2$.

		Since $w_s$ and $w_t$ are outside the triangular region~$OZw_{i-1}$, by Property~\ref{lem:c4}, $w_sw_t$  must cross first either $Ow_{i-1}$ or $w_{i-1}Z$, and then $OZ$. On the other hand, $w_s$ and $w_t$ are outside the region $R=y'_{k'}Zz'_kw_i$, bounded by the four arcs $y'_{k'}Z$,$a_2$,$b_2$,$b_1$, so edge $w_sw_t$ has to enter into that region crossing first either $b_1$ or $b_2$ or $a_2$ and it has to cross either two or four of those arcs.  Suppose that it enters by crossing $b_1$, then it cannot exit by crossing $b_2$ or $a_2$, because then all the top edges of $S(w_i)$ would be crossed, contradicting Property~\ref{lem:b2}. So, it has to exit crossing first $Ow_{i-1}$ or $w_{i-1}Z$, then crossing $y'_{k'}Z$. After crossing $y'_{k'}Z$ it cannot cross again the boundary $Ow_{i-1},w_{i-1}Z$, so it cannot cross $a_2$, and therefore in this case, only $b_1$ and $y'_{k'}Z$ can be crossed. See Figure~\ref{fig:Casei2Based}.

		Similarly, if $w_sw_t$ enters  in $R$ crossing through $b_2$, then it cannot exit by crossing $b_1$, that contradicts the Property~\ref{lem:b2}, not by crossing $a_2$, because then the boundary $Ow_{i-1},w_{i-1}Z$ would be crossed twice, so, it has to exit crossing  $Ow_{i-1}$ or $w_{i-1}Z$, then crossing $y'_{k'}Z$, and again after that crossings, the edge cannot enter again in $R$. Therefore in this case, only $b_2$ and $y'_{k'}Z$ are crossed.
		
		Finally,  if $w_sw_t$ enters  in $R$ by crossing $a_2$, again, it cannot exit by crossing $b_1$, that contradicts the Property~\ref{lem:b2}, not by crossing $b_2$, the boundary $Ow_{i-1},w_{i-1}Z$ would be crossed twice, so, it has to exit crossing  $y'_{k'}Z$. After crossing  $y'_{k'}Z$, the edge cannot cross $b_1$ or $b_2$, because it should cross both, contradicting again Property~\ref{lem:b2}. Therefore in this last case, only $a_2$ and $y'_{k'}Z$ are crossed. 
		
	\end{enumerate}

	These 5 subcases prove that Invariant~\ref{constr2} holds also for~$D'_i$ in Case~2. \\

	\begin{figure}[!htb]
		\centering
		\includegraphics[scale=0.4,page=15]{FigsAppendixC.pdf}
		\caption{Final Step.}\label{fig:Final}
	\end{figure}

	\textbf{Final Step.-} After $D'_{n-1}$ has been built, we have to add the curves~$Ow_n$, $w_nZ$ to $D'_{n-1}$, where $w_n$ is the last vertex, the one with only bottom edges in~$S(w_n)$. That is done as in \textbf{Step~1}, changing bottom for top, and counterclockwise by clockwise; see Figure~\ref{fig:Final}. Again, by construction, the curves $Ow_n$ and $w_nZ$ do not cross each other, the triangular region $OZw_n$ contains the region~$OZw_{n-1}$, and $Ow_n$, $w_nZ$ cannot properly cross any edge $Ow_l$, $w_lZ$, with $l<n$.
Since the arcs $a_1,b_1$ forming the curve $Ow_n$ are build in \textbf{Way 2}, the reasonings used in Cases~2\ref{Case2_1},2\ref{Case2_2},2\ref{Case2_3}, to prove the simplicity for the arcs $a_1,b_1$, also work in this final step, with $i=n$. By symmetry, the same arguments prove the simplicity for the arcs $a_2,b_2$.
	
\vskip 0.3 cm

	This finishes the proof: The last drawing obtained, $D'_n$, satisfies Invariants ~\ref{constr1},~\ref{constr2}. Then, taking $O$ and $Z$ as vertices, it is the sought drawing $D_{n+2}$, the one satisfying the  Properties~\ref{p:simple} and~\ref{p:OZplane}.
\end{document}